\newtheorem{lem}{Lemma}
\newcommand {\be} {\begin{equation}}
\newcommand {\ee} {\end{equation}}
\newcommand {\by} {\begin{eqnarray*}}
\newcommand {\ey} {\end{eqnarray*}}
\newcommand {\byn} {\begin{eqnarray}}
\newcommand {\eyn} {\end{eqnarray}}
\newcommand {\bd} {\begin{displaymath}}
\newcommand {\ed} {\end{displaymath}}
\newtheorem{theorem}{Theorem}
\newtheorem{corollary}{Corollary}
\newtheorem{definition}{Definition}
\newtheorem{remark}{Remark}
\newcommand{\bdsm}{\boldsymbol}
\begin{document}

\title{An interval-valued GARCH model for range-measured return processes}
\author[1]{Yan Sun}
\author[2]{Guanghua Lian\thanks{Dr. Guanghua Lian did most of his work on this paper while he was a Senior Lecturer in the School of Commerce at University of South Australia, Australia.}}
\author[3]{Zudi Lu}
\author[4]{Jennifer Loveland}
\author[5]{Isaac Blackhurst}
\affil[1,4,5]{Department of Mathematics $\&$ Statistics,
Utah State University,
3900 Old Main Hill,
Logan, Utah 84322,
USA.}
\affil[2]{Morgan Stanley,
20 Bank Street, London,
E14 4AD, UK
}
\affil[3]{Mathematical Sciences \& S3RI-Southampton Statistical Sciences Research Institute,
University of Southampton,

SO17 1BJ, UK}

\date{}
\maketitle

\begin{abstract}
Range-measured return contains more information than the traditional scalar-valued return. In this paper, we propose to model the [low, high] price range as a random interval and suggest an interval-valued GARCH (Int-GARCH) model for the corresponding range-measured return process. Under the general framework of random sets, the model properties are investigated. Parameters are estimated by the maximum likelihood method, and the asymptotic properties are established. Empirical application to stocks and financial indices data sets suggests that our Int-GARCH model overall outperforms the traditional GARCH for both in-sample estimation and out-of-sample prediction of volatility.
\end{abstract}

\noindent {\bf Keywords}: GARCH; volatility; price range; random sets; random interval; interval-valued time series.\\

\section{Introduction}
The issue of assets' volatility plays an essential role in modern finance. It provides a measured variability for the asset price over a certain period of time, and is a key parameter in many financial applications such as financial derivatives pricing, risk assessment, and portfolio management. The squared return of log prices, being the classical estimator of variance, has often been used as the ``ideal'' proxy of volatility. As a result, many volatility models built on returns were proposed and, for a long time, have been very popular and successful. The celebrated ARCH (\cite{Engle82}) and GARCH (\cite{Bollerslev86}) models are examples of this type, which use information of traditional return series. Recently, as the high-frequency transaction data become widely available, price changes can be practically monitored in a continuous way, and the traditional low-frequency (e.g., daily) return is no longer quite representative of the volatility. For example, a small low-frequency return does not necessarily imply low volatility, as the price may fluctuate a lot and close at a similar level to the opening. On the other hand, a big return could only be the result of a very different opening price from the previous day's closing. Therefore, the traditional return-based models, using lesser information, are likely to produce inefficient or even incorrect estimates of the volatility.

In fact, since closing price is only a ``snapshot'' among numerous prices during a day, there is nothing ``special'' about it and return need not be calculated solely based on it. With the wide availability of high-frequency data, more information can obviously be utilized. For example, the difference between any two observed (log) prices in two consecutive days can be viewed as a proxy of volatility. This idea leads us to propose a naturally generalized concept of daily return, which is an interval that includes all the ``snapshot'' returns. Let $y_t\left(s\right)$ be the log price of an asset at time $s$ on day $t$. Ideally $s$ should be a continuous time index. But since price can only be observed at discrete times even for high-frequency data, $s$ is assumed to be a discrete index here. We define the interval-valued daily return as
\begin{eqnarray}
  r_t&=&\left[\min_{s,w}\left\{y_t\left(s\right)-y_{t-1}\left(w\right)\right\}, \max_{s,w}\left\{y_t\left(s\right)-y_{t-1}\left(w\right)\right\}\right]\nonumber\\
  &=&\left[\min_{s}\left\{y_t\left(s\right)\right\}-\max_{w}\left\{y_{t-1}\left(w\right)\right\}, \max_{s}\left\{y_t\left(s\right)\right\}-\min_{w}\left\{y_{t-1}\left(w\right)\right\}\right].\label{def:rt-1}
\end{eqnarray}
Namely, $r_t$ is the range of ``snapshot'' returns during one day, which is expected to contain richer information about the daily volatility than the single closing-to-closing return. Our goal is to build a volatility model that reveals the dynamics of the entire return range $r_t$ as a whole. To this end, we propose to extend the GARCH model to allow for interval-valued return input, and the resulting interval-valued model is called the Int-GARCH model. 

The essential idea of our Int-GARCH model is to enrich the return based volatility model (i.e., GARCH) with range information. Such an idea is not new. There has been a great deal of effort in the literature on volatility modeling using price range of either low-frequency or high-frequency data (\cite{Garman80}, \cite{Parkinson80}, \cite{Rogers91}, \cite{Kunitomo92}, \cite{Alizadeh02}, \cite{Chou05}, \cite{Engle06}, \cite{Brandt06}, \cite{Christensen07}). In these methods, price range is essentially viewed as an exogenous variable and is included into the modeling via functions of certain forms. In a more systematic way, our approach integrates the level and range information by modeling the [low, high] return range $r_t$ as a whole in the framework of random interval. As is shown in Section \ref{sec:model}, based on the fitted model, our final prediction of volatility is made such that all the information contained in $r_t$ is systematically accounted for. 

In addition to volatility prediction, our Int-GARCH model in a broader sense makes a pioneer contribution to the study of conditional heteroscedasticity for interval-valued time series. It has been a while since interval-valued time series was investigated and there are a handful of results mainly on the mean models, i.e., models that aim at making prediction of the interval-valued mean, and their applications (e.g.,\cite{Maia08}, \cite{Han12}). The focus of our model, by comparison, is the conditional variance. Variance model for interval-valued time series, as far as the authors are aware, largely remains an unexploited area, and our study is among the very first investigations.  

Our theoretical results are two-fold. We first establish that under certain conditions our Int-GARCH model achieves weak stationarity that is characterized by a time invariant mean and variance. Then, under the assumption of weak stationarity, we define and obtain the explicit formula of the autocorrelation function (ACF) of the Int-GARCH process. We propose to estimate the model parameters by the method of maximum likelihood and provide the associated asymptotic properties. Simulation shows that the results are consistent with our theoretical findings. For empirical study, we analyze several stocks and indices data that are representative of the market. Our Int-GARCH model is compared to GARCH using the RV as the market proxy. Based on both in-sample and out-of-sample comparisons, our Int-GARCH model overall outperforms GARCH with higher correlations and reduced errors to RV.

The rest of the paper is organized as follows. Random sets preliminaries are provided in Section \ref{sec:prelim}. We formally introduce our Int-GARCH model and its general properties in Section \ref{sec:model}. Stationarity of Int-GARCH (1,1,1) is presented in Section \ref{sec:model_111}. Section \ref{sec:mle} discusses the maximum likelihood estimator for the model parameters and carefully investigates its performances by a simulation study. Empirical study with the stocks and indices data, as well as a detailed discussion, are reported in Section \ref{sec:empirical}. We finish with concluding remarks in Section \ref{sec:conclude}. Proofs and useful lemmas are provided in the Appendix.

\section{Preliminaries of random sets}\label{sec:prelim}
Throughout the paper, we will view $r_t$ as a random interval and model its dynamics under the framework of random sets. To facilitate our presentation, we briefly introduce the basic notations and definitions in the random set theory. For more details we refer the readers to \cite{Kendall74}, \cite{Matheron75}, \cite{Artstein75}, \cite{Molchanov05}, \cite{Sun15}, among others. 

Let $(\Omega,\mathcal{L},P)$ be a probability space. Denote by $\mathcal{K}\left(\mathbb{R}^d\right)$ or $\mathcal{K}$ the collection of all non-empty compact subsets of $\mathbb{R}^d$. In the space $\mathcal{K}$, a linear structure is defined by Minkowski addition and scalar multiplication, i.e.,
\begin{equation}\label{set-lin}
  A+B=\left\{a+b: a\in A, b\in B\right\},\ \ \ \ \lambda A=\left\{\lambda a: a\in A\right\},
\end{equation}
$\forall A, B\in\mathcal{K}$ and $\lambda\in\mathbb{R}$. A random compact set is a Borel measurable function $A: \Omega\rightarrow\mathcal{K}$, $\mathcal{K}$ being equipped with the Borel $\sigma$-algebra induced by the Hausdorff metric. For each $A\in\mathcal{K}\left(\mathbb{R}^d\right)$, the function defined on the unit sphere $S^{d-1}$:
\begin{equation*}
  s_A\left(u\right)=\sup_{a\in A}\left<u, a\right>,\ \ \forall u\in S^{d-1},
\end{equation*}
is called the support function of A. If $A(\omega)$ is convex almost surely, then $A$ is called a random compact convex set. Especially, a one-dimensional random compact convex set is called a random interval. Let $\mathcal{K}_\mathcal{C}(\mathbb{R}^d)$ denote the space of non-empty compact convex subsets of $\mathbb{R}^d$. An $L_2$ metric in $\mathcal{K}_\mathcal{C}(\mathbb{R}^d)$ is given via the support function by
\begin{equation*}
  \rho_2(A, B)=\left[d\int_{S^{d-1}}|s_A(u)-s_B(u)|^2\mu(du)\right]^{\frac{1}{2}}. 
\end{equation*}
Much of the random sets theory has focused on compact convex sets via their support functions. For a random compact convex set $X$, the well accepted expectation is defined by the Aumann integral of set-valued function (\cite{Aumann65}) as
\begin{equation}
  \text{E}(X)=\left\{\text{E}\xi: \xi\in X\ a.s., \text{E}\left\|\xi\right\|<\infty\right\}.
\end{equation}
Alternatively, \cite{Frechet48} gave a general definition for the expectation of a random element $X$ in the metric space $(\mathcal{K}_\mathcal{C}(\mathbb{R}^d), \rho)$ as the solution of 
\begin{equation}
  \text{E}\rho^2(X, \text{E}_F(X))=\inf_{A\in\mathcal{K}_\mathcal{C}} \text{E}\rho^2(X, A).
\end{equation}
If we choose the metric $\rho$ to be $\rho_2$, then the Fr\'echet expectation $\text{E}_F(X)$ coincides with the Aumann expectation $\text{E}(X)$, and the variance is further defined as 
\begin{equation*}
    \text{Var}(X)=\text{E}\rho_2^2(X, \text{E}(X)). 
\end{equation*}
See \cite{Lyashenko82}, \cite{Korner95}, and \cite{Korner97}. Restricting to $\mathcal{K}_\mathcal{C}(\mathbb{R})$, the Aumann expectation of a random interval $X$ is simply   
\begin{equation}\label{set-mean}
  \text{E}(X)=\left[\text{E}(X^C)-\text{E}(X^R), \text{E}(X^C)+\text{E}(X^R)\right],
\end{equation}
where $(\cdot)^C$ and $(\cdot)^R$ denote the corresponding center and radius, respectively. The $\rho_2$ distance between two intervals $x$ and $y$ is (\cite{Korner95})
\begin{equation}\label{L2-metric}
  \rho_2(x, y)=\left[(x^C-y^C)^2+(x^R-y^R)^2\right]^{\frac{1}{2}}.
\end{equation}
Thus, the variance of a random interval is consequently calculated as
\begin{eqnarray}\label{set-var}
    \text{Var}(X)=\text{E}\left[(X^C-\text{E}(X^C))^2+(X^R-\text{E}(X^R))^2\right]
    =\text{Var}(X^C)+\text{Var}(X^R).  
\end{eqnarray}
Let $\mathcal{F}$ be any $\sigma$-filed on $\Omega$. The conditional expectation of a random set given $\mathcal{F}$ is defined
by the integral and conditional expectation of multivalued functions (\cite{Hiai77}). Specifically for a random interval $X$,
\begin{equation}\label{def:cond-exp}
  \text{E}(X|\mathcal{F})=\left[\text{E}(X^C|\mathcal{F})-\text{E}(X^R|\mathcal{F}), \text{E}(X^C|\mathcal{F})+\text{E}(X^R|\mathcal{F})\right]. 
\end{equation}
Based on the conditional expectation, the conditional variance of a random set is defined as the conditional mean 
squared distance from its conditional expectation (\cite{Nather07}). According to the definition, the conditional variance of $X$ with respect
to the distance $\rho_2$ is given by 
\begin{eqnarray}
  \text{Var}(X|\mathcal{F})
  &=&E\left[\rho_2^2\left(X, E\left(X|\mathcal{F}\right)\right)|\mathcal{F}\right]\nonumber\\
  &=&E\left[\left(X^C-E(X|\mathcal{F})^C\right)^2+\left(X^R-E(X|\mathcal{F})^R\right)^2|\mathcal{F}\right]\nonumber\\
  &=&E\left[\left(X^C-E(X^C|\mathcal{F})\right)^2+\left(X^R-E(X^R|\mathcal{F})\right)^2|\mathcal{F}\right]\nonumber\\
  &=&E\left[\left(X^C-E(X^C|\mathcal{F})\right)^2|\mathcal{F}\right]
  +E\left[\left(X^R-E(X^R|\mathcal{F})\right)^2|\mathcal{F}\right]\nonumber\\
  &=&\text{Var}(X^C|\mathcal{F})+\text{Var}(X^R|\mathcal{F}).\label{def:cond-var} 
\end{eqnarray}

\section{The Int-GARCH Model}\label{sec:model}
\subsection{Model specification}
We assume observing a range-measured return series $\left\{r_t\right\}_{t=1}^{T}$ of the form
$$r_t=[\lambda_t-\delta_t, \lambda_t+\delta_t],\ t=1,2,\cdots,T.$$
That is, $\left\{\lambda_t\right\}_{t=1}^{T}$ and $\left\{\delta_t\right\}_{t=1}^{T}$ are the associated center and radius processes, both of which are observable. According to (\ref{set-mean}) and (\ref{set-var}), the mean and variance of $r_t$ as a random interval are
\begin{eqnarray}
  \text{E}(r_t)&=&\left[\text{E}(\lambda_t)-\text{E}(\delta_t), \text{E}(\lambda_t)+\text{E}(\delta_t)\right],\label{meanA}\\
  \text{Var}(r_t)&=&\text{Var}(\lambda_t)+\text{Var}(\delta_t)\label{varA}.
\end{eqnarray}

Let $\mathcal{F}_t$ denote the information set up to time t, i.e. $\mathcal{F}_t=\sigma\left\{r_s: s\leq t\right\}.$ We are concerned with the conditional variance $H_t^2$ of $r_t$ given $\mathcal{F}_{t-1}$. According to (\ref{def:cond-var}), $H_t^2$ with respect to $\rho_2$ is computed as 
\begin{equation*}
  H_t^2=\text{Var}(r_t|\mathcal{F}_{t-1})
  =\text{Var}(\lambda_t|\mathcal{F}_{t-1})+\text{Var}(\delta_t|\mathcal{F}_{t-1}).
\end{equation*}
The GARCH model depicts the conditional variance of a point-valued return process as a linear function of the past squared returns and variances. This was inspired by the fact that assets returns usually exhibit volatility clustering: large variations in prices tend to cluster together, resulting in separate dynamic and tranquil periods of the market. Extending this spirit to the interval-valued process $\left\{r_t\right\}$, one would expect, conceptually, a model like
\begin{equation}\label{model}
  H_t^2=g\left(H_s^2, \hat{H}_s^2(r_s): s\leq t-1\right),
\end{equation}
where $\hat{H}_s^2(r_s)$ denotes a return range based proxy for $H_s^2$, and $g$ is linear in $H_s^2$ and $\hat{H}_s^2(r_s)$. Assuming $\text{E}(\lambda_t|\mathcal{F}_{t-1})=0$ and $\text{E}(\delta_t|\mathcal{F}_{t-1})=c>0$, a reasonable function $g$ in (\ref{model}) seems to imply
\begin{equation*}
  H_t^2=\mu+\sum_{i=1}^{p}\alpha_i\left[\lambda_{t-i}^2+\delta_{t-i}^2-c^2\right]
  +\sum_{i=1}^{q}\beta_iH_{t-i}^2,
\end{equation*}
where $p>0, q\geq 0$. The constant $c$ can be absorbed into the parameter $\mu$ and the above equation is simplified to
\begin{equation*}
  H_t^2=\mu+\sum_{i=1}^{p}\alpha_i\left[\lambda_{t-i}^2+\delta_{t-i}^2\right]+\sum_{i=1}^{q}\beta_iH_{t-i}^2.
\end{equation*}
To give more flexibility to our model, we allow for different degrees of dependence of $H_t$ on the past centers and radii. In addition, adopting the idea of more robust modeling of volatility by \cite{Taylor86} and \cite{Schwert90}, we propose to model the conditional standard deviation $H_t$ directly, instead of via the conditional variance $H_t^2$ (\cite{Ding93} also considered such a specification as a special case of their A-PARCH model).

Given the above discussion, our Int-GARCH $(p,q,w)$ model for the return range process is specified as
\begin{eqnarray}
  r_t&=&h_t\cdot v_t,\label{igarch_1}\\
  v_t&=&[\epsilon_t-\eta_t, \epsilon_t+\eta_t],\label{igarch_2}\\
  \epsilon_t&\stackrel{i.i.d.}{\sim}&N(0,1),\label{igarch_3}\\
  \eta_t&\stackrel{i.i.d.}{\sim}&\Gamma(k,1),\label{igarch_4}\\
  h_t&=&\mu+\sum_{i=1}^{p}\alpha_i|\lambda_{t-i}|+\sum_{i=1}^{q}\beta_i\delta_{t-i}+\sum_{i=1}^{w}\gamma_ih_{t-i},\label{igarch_5}
\end{eqnarray}
where $p>0, q>0, w\geq0$, and $\left\{\alpha_i: i=1,\cdots,p\right\}, \left\{\beta_i: i=1,\cdots q\right\}, \left\{\gamma_i: i=1,\cdots,w\right\}$ are positive constants. In addition, the error terms $\epsilon_t$ and $\eta_t$ are assumed to be independent. In (\ref{igarch_1}), ``$\cdot$'' denotes the scalar multiplication. Under this specification, the conditional variance of $r_t$ is seen to be
\begin{eqnarray}\label{H2}
  H_t^2=\text{Var}(h_t\epsilon_t|\mathcal{F}_{t-1})+\text{Var}(h_t\eta_t|\mathcal{F}_{t-1})=h_t^2(1+k).
\end{eqnarray}

Although we impose parametric assumptions on the random errors $\epsilon_t$ and $\eta_t$ to simplify our presentation here, they are not really necessary. In practice, it is best to use the true data generating distributions, which vary from data to data. So, in replacement of (\ref{igarch_3})-(\ref{igarch_4}), a relaxed yet sufficient specification for $\epsilon_t$ and $\eta_t>0$ is
\begin{eqnarray*}
  &&\text{Var}\left(\epsilon_t|\mathcal{F}_{t-1}^{\epsilon, \eta}\right)=1,\\
  &&\text{Var}\left(\eta_t|\mathcal{F}_{t-1}^{\epsilon, \eta}\right)=k.\\
\end{eqnarray*}

\begin{remark}
A more general metric for $\mathcal{K}_{\mathcal{C}}(\mathbb{R})$ was proposed by \cite{Gil01}, which essentially takes the form 
\begin{equation}\label{def:w2}
   \rho_W^2\left(x,y\right)
   =\left(x^C-y^C\right)^2+\left(x^R-y^R\right)^2\int_{[0,1]}\left(2\lambda-1\right)^2dW(\lambda),\ \ 
   x,y\in\mathcal{K}_{\mathcal{C}}(\mathbb{R}),
\end{equation} 
where $W$ is any non-degenerate symmetric measure on $[0, 1]$. Compared to the $\rho_2$ metric in (\ref{L2-metric}), the flexibility of $\rho_W$ lies in its choice of a weight between the center and radius. Since we use different parameters $\alpha$'s and $\beta$'s for the center and radius, respectively, this flexibility is in fact accounted for in our specification of $h_t$ in (\ref{igarch_5}).
\end{remark}

\subsection{Volatility forecasting}
The $h_t$ in the model (\ref{igarch_1})-(\ref{igarch_5}) relates to the conditional standard deviation $H_t$ of the return range $r_t$ according to equation (\ref{H2}). However, $H_t$ is not exactly the daily volatility as in the literature. To see the relation between $h_t$ and the daily volatility $\sigma^2_t$, we notice that $r_t$ contains all possible returns of day $t$, using different prices during days $t$ and $t-1$. Consider an arbitrary return at position $\omega$ in $r_t$, denoted by $r_t(\omega)$, as the volatility proxy, i.e.
\begin{equation*}
  r_t(\omega)=h_t\left(\epsilon_t+\omega\eta_t\right),\ \omega\in[-1,1].
\end{equation*}
Volatility based on $r_t(\omega)$ is calculated as
\begin{equation*}
  \sigma^2_t\left(\omega\right)=\text{Var}\left(r_t(\omega)|\mathcal{F}_{t-1}\right)=\left(1+\omega^2k\right)h_t^2.
\end{equation*}
Our Int-GARCH volatility $\sigma^2_t$ is defined as the average of $\left\{\sigma^2_t(\omega): \omega\in [-1,1]\right\}$. Assuming equal weight for each point return, $\sigma^2_t$ can be calculated as
\begin{eqnarray}\label{def:int-vol}
  \sigma^2_t=\frac{\int_{-1}^{1}\sigma^2_t(\omega)d(\omega)}{\int_{-1}^{1}d(\omega)}
  =\frac{1}{2}\int_{-1}^{1}\left(1+\omega^2k\right)h_t^2d(\omega)=\left(1+\frac{1}{3}k\right)h_t^2.
\end{eqnarray}
Therefore, volatility forecast is essentially made by predicting $h_t$. The 1-step-ahead prediction of $h_t$ is immediately defined by equation (\ref{igarch_5}) as
\begin{equation}\label{1-step-pred}
  \hat{h}_t\left(1\right)=\mu+\sum_{i=1}^{p}\alpha_i|\lambda_{t+1-i}|+\sum_{i=1}^{q}\beta_i\delta_{t+1-i}+\sum_{i=1}^{w}\gamma_ih_{t+1-i}.
\end{equation}
To make a 2-step-ahead prediction, notice that
\begin{eqnarray*}
  h_{t+2}&=&\mu+\sum_{i=1}^{p}\alpha_i|\lambda_{t+2-i}|+\sum_{i=1}^{q}\beta_i\delta_{t+2-i}+\sum_{i=1}^{w}\gamma_ih_{t+2-i}\\
  &=&\mu+\left(\alpha_1|\epsilon_{t+1}|+\beta_1\eta_{t+1}+\gamma_1\right)h_{t+1}
  +\sum_{i=2}^{p}\alpha_i|\lambda_{t+2-i}|+\sum_{i=2}^{q}\beta_i\delta_{t+2-i}+\sum_{i=1}^{w}\gamma_ih_{t+2-i}.
\end{eqnarray*}
Replacing $|\epsilon_{t+1}|$ and $\eta_{t+1}$ with their expectations, and $h_{t+1}$ with $\hat{h}_{t}\left(1\right)$, we get
\begin{equation}\label{2-step-pred}
  \hat{h}_t\left(2\right)=\mu+\left(\alpha_1\sqrt{\frac{2}{\pi}}+\beta_1k+\gamma_1\right)\hat{h}_{t}\left(1\right)
  +\sum_{i=2}^{p}\alpha_i|\lambda_{t+2-i}|+\sum_{i=2}^{q}\beta_i\delta_{t+2-i}+\sum_{i=1}^{w}\gamma_ih_{t+2-i}.
\end{equation}
The general $l$-step-ahead prediction can be calculated recursively. In particular, the formula for Int-GARCH (1,1,1) model is given by
\begin{equation}\label{l-step-pred}
   \hat{h}_t\left(l\right)=\mu+\left(\alpha_1\sqrt{\frac{2}{\pi}}+\beta_1k+\gamma_1\right)\hat{h}_{t}\left(l-1\right),\ l>1.
\end{equation}

\subsection{Mean stationarity}
We provide the necessary and sufficient conditions of mean stationarity for the Int-GARCH (p,q,w) model in the following theorem.
\begin{theorem}\label{thm:mean-gen}
Consider the general Int-GARCH model (\ref{igarch_1})-(\ref{igarch_5}). Define
\begin{equation}
  x_{i,t}=\alpha_i|\epsilon_t|I_{\left\{1\leq i\leq p\right\}}+\beta_i|\eta_t|I_{\left\{1\leq i\leq q\right\}}+\gamma_iI_{\left\{1\leq i\leq w\right\}},\label{def:X_it}
\end{equation}
and
\begin{equation}
  E\left(x_{i,t}\right)=\mu_i,\label{def:mu_i}
\end{equation}
where $i=1,2,\cdots,m=\max\left\{p,q,w\right\}$. Assume $\left\{r_t\right\}$ starts from its infinite past with a finite mean. Then, $Eh_t<\infty$
if and only if $\sum_{i=1}^{k}\mu_i<1$. When this condition is satisfied,
\begin{equation}\label{mean_h}
 Eh_{t}=\frac{\mu}{1-\sum_{i=1}^{m}\mu_i},
\end{equation}
and
\begin{equation}\label{mean_e}
 Er_t=\left[-kE\left(h_{t}\right), kE\left(h_{t}\right)\right].
\end{equation}
 \end{theorem}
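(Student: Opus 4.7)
The plan is to rewrite the defining recursion (\ref{igarch_5}) in the compact form
\[
 h_t = \mu + \sum_{i=1}^{m} x_{i,\,t-i}\, h_{t-i},
\]
by substituting $\lambda_{t-i}=h_{t-i}\epsilon_{t-i}$ and $\delta_{t-i}=h_{t-i}\eta_{t-i}$ into (\ref{igarch_5}) and grouping the coefficients at each lag $i$. The key structural observation is that $x_{i,t-i}$ is a function of $(\epsilon_{t-i},\eta_{t-i})$ alone, while $h_{t-i}\in\mathcal{F}_{t-i-1}$; since $\epsilon_{t-i}$ and $\eta_{t-i}$ are independent of $\mathcal{F}_{t-i-1}$, $x_{i,t-i}$ and $h_{t-i}$ are independent. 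Under the $N(0,1)/\Gamma(k,1)$ specification, $E|\epsilon_t|=\sqrt{2/\pi}$ and $E\eta_t=k$, giving $\mu_i=\alpha_i\sqrt{2/\pi}\,I\{i\le p\}+\beta_i k\,I\{i\le q\}+\gamma_i\,I\{i\le w\}$.

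For the ``only if'' direction, the infinite-past assumption forces $Eh_t$ to be time-invariant; denote its common value by $M\in[0,\infty]$. Taking expectations in the compact recursion and exploiting the independence above yields $M=\mu+M\sum_{i=1}^{m}\mu_i$. If $M<\infty$, then since $\mu>0$ we must have $\sum_{i=1}^{m}\mu_i<1$, and rearranging gives $M=\mu/(1-\sum_{i=1}^{m}\mu_i)$, which is exactly (\ref{mean_h}).

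For the ``if'' direction, I would iterate the compact recursion $N$ times to obtain
\[
 h_t = \mu\sum_{n=0}^{N}\sum_{(i_1,\ldots,i_n)\in\{1,\ldots,m\}^n}\prod_{k=1}^{n} x_{i_k,\,t-i_1-\cdots-i_k} \;+\; R_N,
\]
where $R_N$ collects the residual terms of the form $\prod_{k=1}^{N+1}x_{i_k,\,t-i_1-\cdots-i_k}\cdot h_{t-i_1-\cdots-i_{N+1}}$. Since every $i_k\ge 1$, the time indices $t-i_1-\cdots-i_k$ are strictly decreasing in $k$, so the factors in each product are mutually independent; moreover, $h_{t-i_1-\cdots-i_{N+1}}$ is $\mathcal{F}_{t-i_1-\cdots-i_{N+1}-1}$-measurable and hence independent of all the $x$-factors. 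Writing $S:=\sum_{i=1}^{m}\mu_i$, the explicit part has expectation $\mu(1-S^{N+1})/(1-S)$ (which is finite for every $N$, providing a lower bound on $M$), and $ER_N=M\cdot S^{N+1}$ by stationarity. When $S<1$, the identity $M=\mu(1-S^{N+1})/(1-S)+M\,S^{N+1}$ with $1-S^{N+1}>0$ can be rearranged to force $M=\mu/(1-S)<\infty$, giving (\ref{mean_h}).

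Finally, (\ref{mean_e}) is immediate from (\ref{meanA}): since $h_t\in\mathcal{F}_{t-1}$ is independent of $(\epsilon_t,\eta_t)$ with $E\epsilon_t=0$ and $E\eta_t=k$, one has $E\lambda_t=E(h_t)E(\epsilon_t)=0$ and $E\delta_t=E(h_t)E(\eta_t)=k\,Eh_t$, whence $Er_t=[-kEh_t,kEh_t]$. The main obstacle is the ``if'' direction: one must carefully verify the factorization of expectations along the branching multi-index expansion (leveraging the distinctness of time indices in each product), and then avoid an $\infty-\infty$ pitfall when rearranging $M=\mu(1-S^{N+1})/(1-S)+MS^{N+1}$, by first using the finite partial sums as legitimate lower bounds on $M$ before dividing through.
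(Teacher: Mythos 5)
Your overall route is the same as the paper's: rewrite (\ref{igarch_5}) as $h_t=\mu+\sum_{i=1}^{m}x_{i,t-i}h_{t-i}$, iterate $N$ times, and use the fact that the time indices in each product $\prod_{k}x_{i_k,\,t-i_1-\cdots-i_k}$ are strictly decreasing (hence the factors are mutually independent and independent of the trailing $h$) to factor the expectation; the formula for $Er_t$ is handled identically. Your ``only if'' direction via the one-step fixed-point identity $M=\mu+MS$ is a clean shortcut compared with the paper's observation that the explicit part $\mu\bigl[1+\sum_{n=1}^{N}S^{n}\bigr]$ already diverges when $S\ge 1$.

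The gap is in the ``if'' direction. The identity $M=\mu\frac{1-S^{N+1}}{1-S}+MS^{N+1}$, read in $[0,\infty]$, is satisfied both by $M=\mu/(1-S)$ and by $M=\infty$ (in the latter case it reads $\infty=\text{finite}+\infty$), so no rearrangement of it can ``force'' $M<\infty$; and the finite partial sums only give the lower bound $M\ge\mu/(1-S)$, which points the wrong way for finiteness. What actually closes the argument --- and what the paper uses --- is the hypothesis that the process starts from its infinite past with a finite mean: the remainder $R_N$ involves $h$ at times $\le t-(N+1)$, so $ER_N\le\max\left\{E\left(h_{t-l}\right):N+1\le l\le m(N+1)\right\}\cdot S^{N+1}$, and the assumed finiteness (boundedness) of the far-past mean makes this tend to $0$ when $S<1$, whence $Eh_t=\lim_{N}\mu\frac{1-S^{N+1}}{1-S}=\mu/(1-S)<\infty$. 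Replacing your step $ER_N=M\cdot S^{N+1}$ (which is circular, since $M$ is exactly the unknown whose finiteness is at issue) by this bound repairs the proof; otherwise the argument as written does not exclude $M=\infty$.
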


\section{Stationarity of Int-GARCH (1,1,1)}\label{sec:model_111}
Similar to the traditional GARCH model, the Int-GARCH(1,1,1) process is a simple but effective model for analyzing interval-valued time series with conditional heteroskedasticity. In this section, we derive several important distributional properties of Int-GARCH (1,1,1). Before we present our theoretical results, we first notice that for the Int-GARCH (1,1,1) process,
\begin{eqnarray*}
  h_t&=&\mu+\alpha_1|\lambda_{t-1}|+\beta\delta_{t-1}+\gamma_1h_{t-1}\\
  &=&\mu+\alpha_1|\epsilon_{t-1}|h_{t-1}+\beta\eta_{t-1}h_{t-1}+\gamma_1h_{t-1}\\
  &=&\mu+\left(\alpha_1|\epsilon_{t-1}|+\beta\eta_{t-1}+\gamma_1\right)h_{t-1}.
\end{eqnarray*}
Defining the i.i.d. random variables $x_t=\alpha_1|\epsilon_{t-1}|+\beta\eta_{t-1}+\gamma_1$, $t\in\mathbb{N}$, $h_t$ can be re-written as
\begin{equation}\label{def_x}
  h_t=\mu+x_th_{t-1}.
\end{equation}
We will use (\ref{def_x}) throughout this section to simplify notations.

 \subsection{Weak stationarity}
It is derived in \cite{Korner95} that the covariance between two random intervals with respect to the $\rho_2$ metric is the sum of the covariances between the two centers and two radii. This implies
\begin{equation}\label{covAB}
  \text{Cov}(r_t, r_s)=\text{Cov}(\lambda_t, \lambda_s)+\text{Cov}(\delta_t, \delta_s),\ s,t\in\mathbb{N}.
\end{equation}
We are ready to extend the notion of weak stationarity to interval-valued time series in the obvious way.
\begin{definition}
An interval-valued time series $\left\{r_t\right\}$ is said to be weakly stationary, or second-moment stationary, if its unconditional mean $\text{E}\left(r_t\right)$ and covariance
$\text{Cov}\left(r_t, r_{t+s}\right)$ exist and are independent of time $t$ for all integers $s$, where $\text{E}\left(r_t\right)$ and $\text{Cov}\left(r_t, r_{t+s}\right)$ are given in (\ref{meanA}) and (\ref{covAB}), respectively.
\end{definition}
The existences of $\text{E}\left(r_t\right)$ and $\text{Var}\left(r_t\right)$ are closely related to those of $\text{E}\left(h_t\right)$ and $\text{E}\left(h_t^2\right)$, respectively. In fact, $\text{E}(h_t^2)<\infty$ implies the existences of the first two moments of $r_t$. We give precise conditions in the following two theorems.
\begin{theorem}\label{thm:mean}
Consider the Int-GARCH model (\ref{igarch_1})-(\ref{igarch_5}) with $p=q=w=1$. Assume $\left\{r_t\right\}$ starts from its infinite past with a finite mean. Then, $\text{E}(h_t)<\infty$
if and only if $\text{E}(x_t)<1$, i.e.
$$\alpha_1\sqrt{\frac{2}{\pi}}+\beta_1k+\gamma_1<1.$$
When this condition is satisfied,
\begin{equation}\label{mean_h}
 \text{E}(h_{t})=\frac{\mu}{1-\alpha_{1}\sqrt{2/\pi}-\beta_{1}k-\gamma_{1}},
\end{equation}
and
\begin{equation}\label{mean_e}
 \text{E}(r_{t})=\left[-k\text{E}\left(h_{t}\right), k\text{E}\left(h_{t}\right)\right].
\end{equation}
 \end{theorem}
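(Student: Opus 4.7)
The plan is to exploit the simple affine random recursion $h_t=\mu+x_t h_{t-1}$ from equation (\ref{def_x}) and iterate it backwards. First I would unroll the recursion $n$ times to obtain
\begin{equation*}
   h_t \;=\; \mu\sum_{j=0}^{n-1}\prod_{i=0}^{j-1}x_{t-i} \;+\; h_{t-n}\prod_{i=0}^{n-1}x_{t-i},
\end{equation*}
with the convention that the empty product equals $1$. Because $x_{t-n+1},\ldots,x_t$ are independent of the past $\sigma$-field $\mathcal{F}_{t-n}$, and since $h_{t-n}$ is $\mathcal{F}_{t-n}$-measurable, taking expectations and using that the $x_t$ are i.i.d. gives
\begin{equation*}
   E(h_t)\;=\;\mu\sum_{j=0}^{n-1}\bigl(E x_t\bigr)^{j} \;+\; E(h_{t-n})\bigl(Ex_t\bigr)^{n}.
\end{equation*}
The explicit value $Ex_t=\alpha_1\sqrt{2/\pi}+\beta_1k+\gamma_1$ comes from $E|\epsilon_t|=\sqrt{2/\pi}$ and $E\eta_t=k$.

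Next I would use the hypothesis that $\{r_t\}$ starts from the infinite past to invoke strict stationarity of $\{h_t\}$, giving $E(h_{t-n})=E(h_t)=:m$ for every $n$. The identity above becomes
$m\bigl(1-(Ex_t)^n\bigr)=\mu\sum_{j=0}^{n-1}(Ex_t)^j$ for every $n\ge1$. Splitting on the value of $Ex_t$, I expect three cases: (i) if $Ex_t<1$, letting $n\to\infty$ yields $m=\mu/(1-Ex_t)<\infty$, which is the formula (\ref{mean_h}); (ii) if $Ex_t=1$, the identity reduces to $0=n\mu$ for all $n$, contradicting $\mu>0$; (iii) if $Ex_t>1$, solving gives $m=-\mu/(Ex_t-1)<0$, contradicting $h_t>0$. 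Cases (ii) and (iii) together show $m$ cannot be finite, yielding the necessity half. For the sufficiency half with $Ex_t<1$, one can alternatively argue directly: by Tonelli applied to the forward series $h_t=\mu\sum_{j=0}^{\infty}\prod_{i=0}^{j-1}x_{t-i}$ obtained by sending $n\to\infty$ in the unrolled recursion, the expectation is the geometric sum $\mu\sum_{j\ge0}(Ex_t)^j<\infty$.

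Finally, the formula (\ref{mean_e}) for $Er_t$ follows from the definition (\ref{meanA}) together with the representation $r_t=[h_t\epsilon_t-h_t\eta_t,\,h_t\epsilon_t+h_t\eta_t]$, where $h_t$ is $\mathcal{F}_{t-1}$-measurable and thus independent of $(\epsilon_t,\eta_t)$; hence $E(r_t^C)=E(h_t)E(\epsilon_t)=0$ and $E(r_t^R)=E(h_t)E(\eta_t)=kE(h_t)$. The principal obstacle is the necessity direction: I need the stationarity coming from the infinite-past initialization to rule out $Ex_t\ge 1$, and I must be careful that the identity holding for \emph{every} $n$ is used to extract the contradiction, rather than only passing to a limit. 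Everything else is a routine computation on the geometric sum and an application of the independence structure of the model.
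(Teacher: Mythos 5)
Your proposal is correct and follows essentially the same route as the paper: unroll the affine recursion $h_t=\mu+x_th_{t-1}$, take expectations using the independence of $x_t$ from the past, and evaluate the resulting geometric series, finishing with the Aumann expectation of $r_t$. The only cosmetic difference is in the necessity direction, where the paper simply lower-bounds $E(h_t)$ by the divergent sum $\mu\sum_{i}(Ex_t)^i$ (the remainder term being nonnegative), whereas you derive a contradiction from the stationarity identity; both are valid.
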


\begin{theorem}\label{thm:var}
 Consider the Int-GARCH(1,1,1) model $\left\{r_t\right\}$ as in Theorem \ref{thm:mean}. Assume $\left\{r_t\right\}$ starts from its infinite past with a finite variance. Then
 $E\left(h_{t}^{2}\right)<\infty$ if and only if $E\left(x_{t}^{2}\right)<1$, i.e.
 \begin{eqnarray*}
\alpha_{1}^{2}+\beta_{1}^{2}\left(k+k^{2}\right)+\gamma_{1}^{2}+2\alpha_{1}\beta_{1}\sqrt{\dfrac{2}{\pi}}k+2\alpha_{1}\gamma_{1}\sqrt{\dfrac{2}{\pi}}+2\beta_{1}\gamma_{1}k
<1.
 \end{eqnarray*}
 When this condition is satisfied,
 \begin{eqnarray}\label{mean_h2}
   E\left(h_{t}^{2}\right)=\mu^{2}\dfrac{C_{1}+1}{\left(C_{2}-1\right)\left(C_{1}-1\right)},
 \end{eqnarray}
 and
 \begin{eqnarray}\label{var_e}
   Var\left(r_{t}\right)=\left(1+k+k^{2}\right)E\left(h_{t}^{2}\right)-k^{2}\left[E\left(h_{t}\right)\right]^{2},
 \end{eqnarray}
 where $E\left(h_t\right)$ is given in (\ref{mean_h}), and $C_{1}=E\left(x_{t}\right)$, $C_{2}=E\left(x_{t}^{2}\right)$.
\end{theorem}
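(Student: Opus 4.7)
The plan is to exploit the first-order recursion $h_t = \mu + x_t h_{t-1}$ from \eqref{def_x}, together with the facts that $x_t$ is independent of $\mathcal{F}_{t-1}$ (equivalently, of $h_{t-1}$) and that the strict stationarity of $\{h_t\}$ is inherited from the i.i.d.\ innovations starting in the infinite past.

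Assuming for the moment that $E(h_t^2)<\infty$, I would first square the recursion to obtain
\begin{equation*}
  h_t^2 = \mu^2 + 2\mu x_t h_{t-1} + x_t^2 h_{t-1}^2,
\end{equation*}
and take expectations. Independence of $x_t$ from $h_{t-1}$ yields
\begin{equation*}
  E(h_t^2) = \mu^2 + 2\mu C_1 E(h_{t-1}) + C_2 E(h_{t-1}^2).
\end{equation*}
Stationarity forces $E(h_t^2) = E(h_{t-1}^2)$, and Theorem~\ref{thm:mean} supplies $E(h_{t-1}) = \mu/(1-C_1)$. Solving the resulting linear equation gives
\begin{equation*}
  E(h_t^2) \;=\; \frac{\mu^2}{1-C_2}\left(1 + \frac{2C_1}{1-C_1}\right) \;=\; \frac{\mu^2(1+C_1)}{(1-C_2)(1-C_1)} \;=\; \mu^2\,\frac{C_1+1}{(C_2-1)(C_1-1)},
\end{equation*}
matching \eqref{mean_h2}.

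For the ``if and only if'', I would iterate the recursion $n$ times,
\begin{equation*}
  h_t \;=\; \mu \sum_{j=0}^{n-1} \prod_{i=0}^{j-1} x_{t-i} \;+\; h_{t-n} \prod_{i=0}^{n-1} x_{t-i},
\end{equation*}
then square and take expectations; the i.i.d.\ structure of $\{x_s\}$ keeps all cross-moments tractable. The diagonal terms of the square alone already give $E(h_t^2) \geq \mu^2\sum_{j=0}^{n-1} C_2^j$, so $E(h_t^2)<\infty$ forces $C_2<1$ (necessity). Conversely, when $C_2<1$ (which implies $C_1<1$ via $C_1^2 \leq C_2$ by Jensen), the partial sums converge in $L^2$, the remainder $h_{t-n}\prod_i x_{t-i}$ vanishes in $L^2$ since $E(h_{t-n}^2)$ is constant by stationarity and $C_2^n\to 0$, and the closed form above is recovered. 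The explicit inequality in the theorem statement is then the expansion of $C_2 = E[(\alpha_1|\epsilon_{t-1}|+\beta_1\eta_{t-1}+\gamma_1)^2]$ using $E|\epsilon_t|=\sqrt{2/\pi}$, $E\epsilon_t^2=1$, $E\eta_t=k$, and $E\eta_t^2=k+k^2$.

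Finally, for \eqref{var_e} I would decompose via \eqref{varA} as $\text{Var}(r_t) = \text{Var}(h_t\epsilon_t) + \text{Var}(h_t\eta_t)$ and use the independence of $(\epsilon_t,\eta_t)$ from $h_t$ to compute $\text{Var}(h_t\epsilon_t) = E(h_t^2)$ and $\text{Var}(h_t\eta_t) = (k+k^2)E(h_t^2) - k^2[E(h_t)]^2$; summing yields exactly $(1+k+k^2)E(h_t^2) - k^2[E(h_t)]^2$. The main obstacle will be the $L^2$-convergence bookkeeping for the necessity direction, which requires careful control of the remainder term in the iteration; once $C_2<1$ is pinned down as the sharp threshold, the rest of the argument is algebraic substitution.
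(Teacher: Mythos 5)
Your proposal is correct and follows essentially the same route as the paper: iterate $h_t=\mu+x_th_{t-1}$, square, take expectations term by term using the i.i.d.\ structure and positivity of $\{x_t\}$, identify $C_2=E(x_t^2)<1$ as the geometric-series threshold (with $C_1\le\sqrt{C_2}$ controlling the cross terms and the remainder $E(h_{t-n}^2)C_2^n\to0$ killing the tail), and finish with the decomposition $\mathrm{Var}(r_t)=\mathrm{Var}(h_t\epsilon_t)+\mathrm{Var}(h_t\eta_t)$. The only difference is cosmetic: you obtain the closed form \eqref{mean_h2} by solving the one-step fixed-point equation $E(h_t^2)=\mu^2+2\mu C_1E(h_{t-1})+C_2E(h_{t-1}^2)$ under second-order stationarity, whereas the paper extracts the same expression as the $N\to\infty$ limit of the fully expanded finite sum.
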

So far we have found equivalent conditions for the existence of the unconditional mean and variance. In order to achieve weak stationarity, we still need to find conditions under which the (unconditional) covariances are finite and time-invariant. According to the following Theorem \ref{thm:cov}, these conditions turn out to be the same as those for the existence of variance. This is not surprising as
\begin{eqnarray*}
  |\text{Cov}\left(r_t, r_{t+h}\right)| \leq
  |\text{Cov}\left(\lambda_t, \lambda_{t+h}\right)|+|\text{Cov}\left(\delta_t, \delta_{t+h}\right)|
  \leq \text{Var}\left(\lambda_t\right)+\text{Var}\left(\delta_t\right).
\end{eqnarray*}
We summarize this conclusion in Corollary \ref{cor:stat} following Theorem \ref{thm:cov}.

 \begin{theorem}\label{thm:cov}
Consider the Int-GARCH(1,1,1) process $\left\{r_t\right\}$. Under the assumptions of Theorem \ref{thm:var}, the covariance of any two random intervals $r_t$ and $r_{t+s}$ is given by
\begin{align*}
\mbox{Cov}\left(r_{t},r_{t+s}\right) & =\begin{cases}
\left(1+k+k^{2}\right)E\left(h_{t}^{2}\right)-k^{2}\left[E\left(h_{t}\right)\right]^{2}, & s=0;\\
kE\left(h_{t}h_{t+s}\eta_{t}\right)-k^{2}\left[E\left(h_{t}\right)\right]^{2}, & |s|>0,
\end{cases}
\end{align*}
where $E\left(h_t\right)$ and $E\left(h_t^2\right)$ are given in (\ref{mean_h}) and (\ref{mean_h2}), respectively, and $E\left(h_{t}h_{t+s}\eta_{t}\right)$ is calculated explicitly in Lemma \ref{eta-X and h-h-eta} (see Appendix).
\end{theorem}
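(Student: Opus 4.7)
My plan is to combine the covariance decomposition (\ref{covAB}), $\text{Cov}(r_t,r_{t+s}) = \text{Cov}(\lambda_t,\lambda_{t+s}) + \text{Cov}(\delta_t,\delta_{t+s})$, with the recursive structure of $h_t$ to reduce the theorem to a single mixed moment whose closed form is provided by an appendix lemma. The case $s=0$ is immediate, since $\text{Cov}(r_t,r_t) = \text{Var}(r_t)$ is precisely (\ref{var_e}) of Theorem \ref{thm:var}. The substantive content is the case $|s|>0$, where by the symmetry $\text{Cov}(r_t,r_{t+s}) = \text{Cov}(r_{t+s},r_t)$ it suffices to handle $s \geq 1$.

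The key structural observation is that iterating $h_\tau = \mu + x_\tau h_{\tau-1}$ with $x_\tau = \alpha_1|\epsilon_{\tau-1}| + \beta_1 \eta_{\tau-1} + \gamma_1$ expresses $h_{t+s}$ as a function of $\{\epsilon_u,\eta_u : u \leq t+s-1\}$, so $h_{t+s}$ is independent of the pair $(\epsilon_{t+s},\eta_{t+s})$. For the center part, $\lambda_t = h_t \epsilon_t$ is zero-mean because $\epsilon_t$ is independent of $h_t$, so
\begin{equation*}
\text{Cov}(\lambda_t,\lambda_{t+s}) = E(h_t h_{t+s} \epsilon_t \epsilon_{t+s}) = E(h_t h_{t+s} \epsilon_t)\,E(\epsilon_{t+s}) = 0,
\end{equation*}
the factorisation coming from independence of $\epsilon_{t+s}$ from $(h_t,h_{t+s},\epsilon_t)$. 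For the radius part, weak stationarity gives $E(\delta_\tau) = k\,E(h_t)$, hence
\begin{equation*}
\text{Cov}(\delta_t,\delta_{t+s}) = E(h_t h_{t+s} \eta_t \eta_{t+s}) - k^2[E(h_t)]^2 = k\,E(h_t h_{t+s} \eta_t) - k^2[E(h_t)]^2,
\end{equation*}
again pulling out the independent factor $\eta_{t+s}$ with $E(\eta_{t+s}) = k$. Summing the two contributions yields exactly the formula claimed in the theorem.

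The only non-trivial work is the explicit evaluation of $E(h_t h_{t+s} \eta_t)$, which I expect to be the main obstacle. The strategy I would follow is to unroll
\begin{equation*}
h_{t+s} = \mu\bigl(1 + x_{t+s} + x_{t+s}x_{t+s-1} + \cdots + \textstyle\prod_{i=2}^{s} x_{t+i}\bigr) + \bigl(\textstyle\prod_{i=1}^{s} x_{t+i}\bigr) h_t,
\end{equation*}
and then exploit the fact that only $x_{t+1}$ contains $\eta_t$ (through its $\beta_1 \eta_t$ term), while $x_{t+2},\ldots,x_{t+s}$ are mutually independent of $(\eta_t, h_t)$ with common mean $C_1 = E(x_t)$. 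This decomposes the target into a geometric sum in $C_1$ multiplying $E(\eta_t h_t)$, plus a term of the form $C_1^{s-1} E(x_{t+1} \eta_t h_t^2)$. Both remaining moments reduce by one more application of the recursion together with the elementary facts $E|\epsilon_t| = \sqrt{2/\pi}$, $E\eta_t = k$, $E\eta_t^2 = k + k^2$ and the closed forms (\ref{mean_h}), (\ref{mean_h2}). This algebraic bookkeeping is precisely what Lemma \ref{eta-X and h-h-eta} in the Appendix is designed to carry out, and I would quote it to complete the proof.
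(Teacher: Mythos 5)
Your proposal is correct and follows essentially the same route as the paper: decompose $\mbox{Cov}(r_t,r_{t+s})$ into the center and radius covariances, kill the center term for $|s|>0$ by pulling out the independent factor $\epsilon_{t+s}$ with mean zero, reduce the radius term to $kE(h_th_{t+s}\eta_t)-k^2[E(h_t)]^2$ by pulling out $\eta_{t+s}$ with mean $k$, and delegate the explicit evaluation of $E(h_th_{t+s}\eta_t)$ to the appendix lemma, whose proof by unrolling the recursion you sketch accurately. The only cosmetic difference is that you dispatch the $s=0$ case by citing the variance formula of Theorem \ref{thm:var} rather than recomputing it, which is fine.
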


\begin{corollary}\label{cor:stat}
The Int-GARCH(1,1,1) process is weakly stationary, or second-moment stationary, if and only if $E\left(x_{t}^{2}\right)<1$.
\end{corollary}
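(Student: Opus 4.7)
The plan is to derive the corollary by assembling Theorems \ref{thm:mean}, \ref{thm:var}, and \ref{thm:cov} with the set-variance decomposition (\ref{set-var}) and the multiplicative structure $r_t=h_t v_t$ from (\ref{igarch_1}). For the sufficiency direction, I would first note that $E(x_t^2)<1$ forces $E(x_t)\leq\sqrt{E(x_t^2)}<1$ by Jensen's inequality, so the hypotheses of both Theorem \ref{thm:mean} and Theorem \ref{thm:var} are satisfied. The former produces a finite, $t$-free value of $E(h_t)$ via (\ref{mean_h}), hence a constant interval mean through (\ref{mean_e}); the latter produces a finite, $t$-free value of $E(h_t^2)$ via (\ref{mean_h2}). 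Substituting these into the covariance formula of Theorem \ref{thm:cov}, the lag-zero case is manifestly time-invariant, and for $|s|>0$ the cross term $E(h_t h_{t+s}\eta_t)$ is computed explicitly in Lemma \ref{eta-X and h-h-eta} as a function of $|s|$ and the model parameters only. Taken together, $E(r_t)$ and $\mathrm{Cov}(r_t,r_{t+s})$ exist for all $t,s$ and are free of $t$, so $\{r_t\}$ is weakly stationary.

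For the necessity direction, I would begin by observing that weak stationarity makes $E(r_t)$ finite; since $E(\delta_t)=kE(h_t)$ by the independence of $h_t$ and $\eta_t$, this already forces $E(h_t)<\infty$. Weak stationarity also gives $\mathrm{Var}(r_t)<\infty$, and the decomposition $\mathrm{Var}(r_t)=\mathrm{Var}(\lambda_t)+\mathrm{Var}(\delta_t)$ together with $\mathrm{Var}(\lambda_t)=E(h_t^2\epsilon_t^2)=E(h_t^2)$ (using $E(\lambda_t)=0$ and the independence of $\epsilon_t$ from $h_t$) forces $E(h_t^2)<\infty$. Theorem \ref{thm:var} then yields $E(x_t^2)<1$, completing the equivalence.

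The only subtle point is verifying the $t$-invariance of the cross term $E(h_t h_{t+s}\eta_t)$ appearing in the covariance formula; this does not follow immediately from the mean and variance results and must be obtained by exploiting the recursion $h_t=\mu+x_t h_{t-1}$ and the i.i.d.\ nature of $\{x_t\}$, which together with the ``infinite past'' starting assumption yield strict stationarity of $\{h_t\}$. This computation is precisely the content of Lemma \ref{eta-X and h-h-eta} in the appendix, and once it is in hand the remainder of the argument is a direct assembly of the preceding three theorems.
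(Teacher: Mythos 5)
Your proposal is correct and follows exactly the route the paper takes: its proof of Corollary \ref{cor:stat} is the one-line statement that the result is immediate from Theorems \ref{thm:mean}, \ref{thm:var}, and \ref{thm:cov}, and your argument is simply the careful write-out of that assembly (including the Jensen step $E(x_t)\leq\sqrt{E(x_t^2)}$, which the paper itself invokes inside the proof of Theorem \ref{thm:var}, and the reliance on Lemma \ref{eta-X and h-h-eta} for the time-invariance of the cross term). Nothing is missing; your version is just more explicit than the paper's.
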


 \subsection{Auto-correlation function (ACF)}
 The notion of the variance and covariance for compact convex random sets were naturally extended to the correlation coefficient of two random sets $A$ and $B$, which is defined as
 \begin{equation}\label{corr}
   \text{Corr}\left(A,B\right)=\frac{\text{Cov}\left(A,B\right)}{\sqrt{\text{Var}\left(A\right)\text{Var}\left(B\right)}}.
 \end{equation}
Based on this definition, we calculate the auto-correlation function (ACF) of the Int-GARCH(1,1,1) process and give the result in the corollary below.
\begin{corollary}\label{cor:acf}
Under the assumptions of Theorem \ref{thm:var}, the auto-correlation function of the Int-GARCH(1,1,1) process
$\left\{ r_{t}\right\} $ is
\[
\rho(s)=\begin{cases}
1, & s=0\\
\dfrac{kE\left(h_{t}h_{t+s}\eta_{t}\right)-k^{2}\left[E\left(h_{t}\right)\right]^{2}}{\left(1+k+k^{2}\right)E\left(h_{t}^{2}\right)-k^{2}\left[E\left(h_{t}\right)\right]^{2}}, & |s|>0.
\end{cases}
\]
 \end{corollary}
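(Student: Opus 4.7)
The plan is to view Corollary~\ref{cor:acf} as an immediate consequence of the definition (\ref{corr}) of the correlation coefficient for random sets together with the covariance and variance formulas already derived in Theorem~\ref{thm:cov} and equation~(\ref{var_e}). First I would invoke Corollary~\ref{cor:stat}: since $E(x_t^2)<1$ is assumed (through the hypotheses of Theorem~\ref{thm:var}), the process is weakly stationary, so $\text{Var}(r_t)$ does not depend on $t$. Hence the denominator in (\ref{corr}) simplifies as
\[
\sqrt{\text{Var}(r_t)\,\text{Var}(r_{t+s})} \;=\; \text{Var}(r_t) \;=\; (1+k+k^{2})E(h_t^{2}) - k^{2}[E(h_t)]^{2},
\]
using (\ref{var_e}).

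Next, for the numerator I would substitute the two cases of $\text{Cov}(r_t,r_{t+s})$ from Theorem~\ref{thm:cov}. The case $s=0$ yields numerator equal to denominator, hence $\rho(0)=1$. For $|s|>0$, the numerator is $kE(h_th_{t+s}\eta_t) - k^{2}[E(h_t)]^{2}$, and dividing by the denominator above gives exactly the stated expression. Strictly speaking one should also remark that the denominator is nonzero, which follows because $\mu>0$ forces $E(h_t)>0$ and the variance of a non-degenerate random interval is strictly positive under our moment assumption; this is a one-line observation rather than a real obstacle.

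There is essentially no analytic difficulty in this corollary, because all the heavy lifting---the explicit evaluation of $E(h_t)$, $E(h_t^2)$, and $E(h_th_{t+s}\eta_t)$---has been done in Theorems~\ref{thm:mean}, \ref{thm:var}, \ref{thm:cov} and Lemma~\ref{eta-X and h-h-eta}. The only conceptual step worth highlighting is the use of the set-valued correlation (\ref{corr}), which is well-defined here precisely because the $\rho_2$-based variance and covariance decompose into sums over centers and radii, so that the Cauchy--Schwarz-type bound $|\text{Cov}(r_t,r_{t+s})| \le \text{Var}(r_t)$ holds and keeps $|\rho(s)|\le 1$. Thus the proof reduces to a direct substitution, and I would present it as such, with at most one sentence justifying stationarity and one sentence recalling the covariance decomposition.
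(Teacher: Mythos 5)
Your proposal is correct and follows essentially the same route as the paper's own (very brief) proof: define $\rho(s)=\gamma(s)/\gamma(0)$ via (\ref{corr}), use stationarity so the denominator is $\text{Var}(r_t)$ from (\ref{var_e}), and substitute the two cases of $\text{Cov}(r_t,r_{t+s})$ from Theorem \ref{thm:cov}. Your added remarks on the denominator being nonzero and on why the set-valued correlation is well defined are minor elaborations the paper omits, not a different argument.
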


We plot the ACF for a specific Int-GARCH(1,1,1) model (Model I in the simulation) in Figure \ref{fig:realACF_1}. We see that the centers are uncorrelated. This has been verified by (\ref{eqn2}) in the proof of Theorem \ref{thm:cov}. 
The radii, or the lengths of the intervals, have a relatively persistent auto-correlation, which coincides with the phenomenon of ``volatility clustering''. This long-term dependence of radii carries over to the intervals as a whole, and results in a slow-dying ACF of the interval-valued process. \\

\begin{figure}[ht]
\centering
\includegraphics[ height=1.800in, width=2.300in]{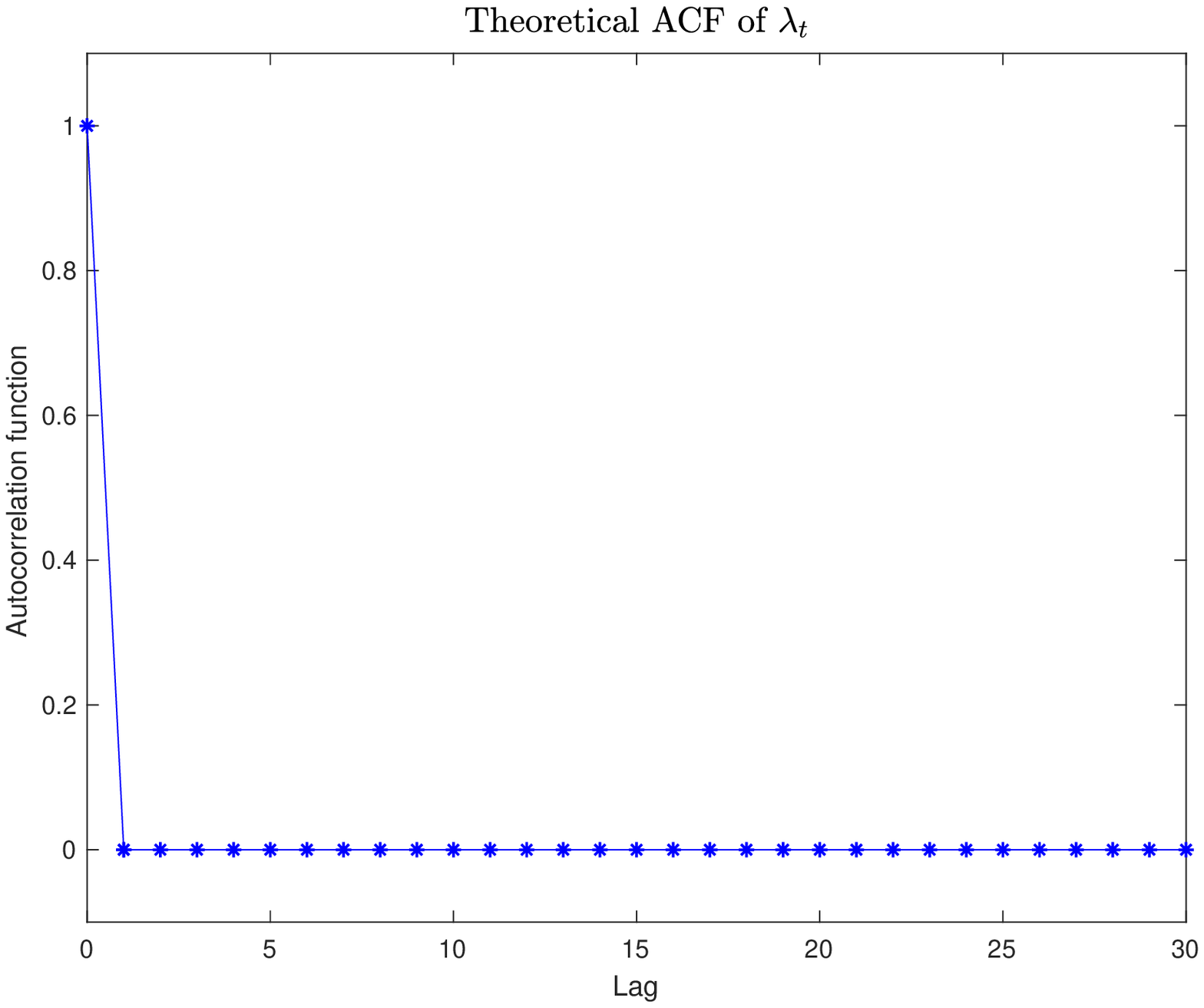}
\includegraphics[ height=1.800in, width=2.300in]{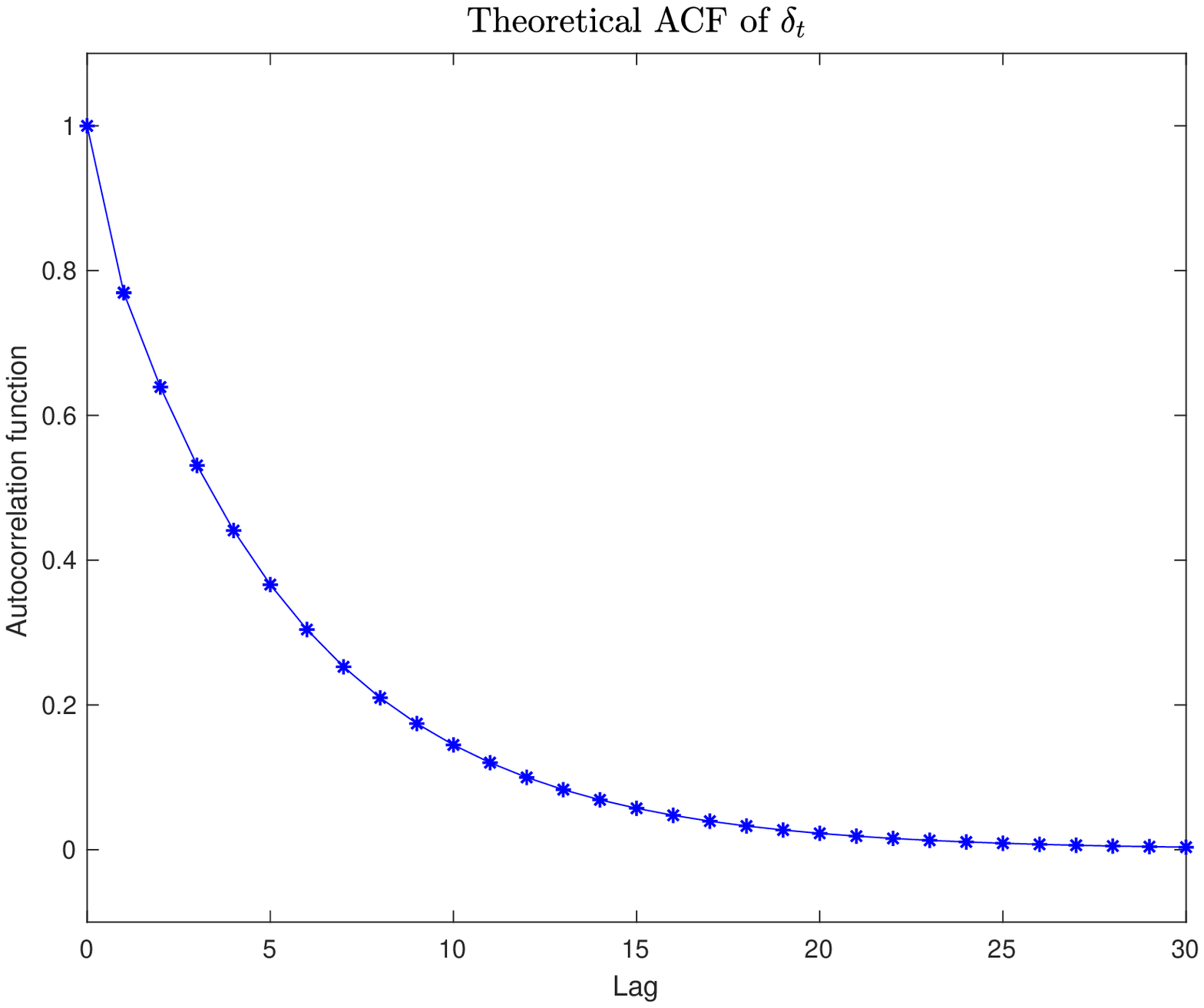}\\
\includegraphics[ height=1.800in, width=2.300in]{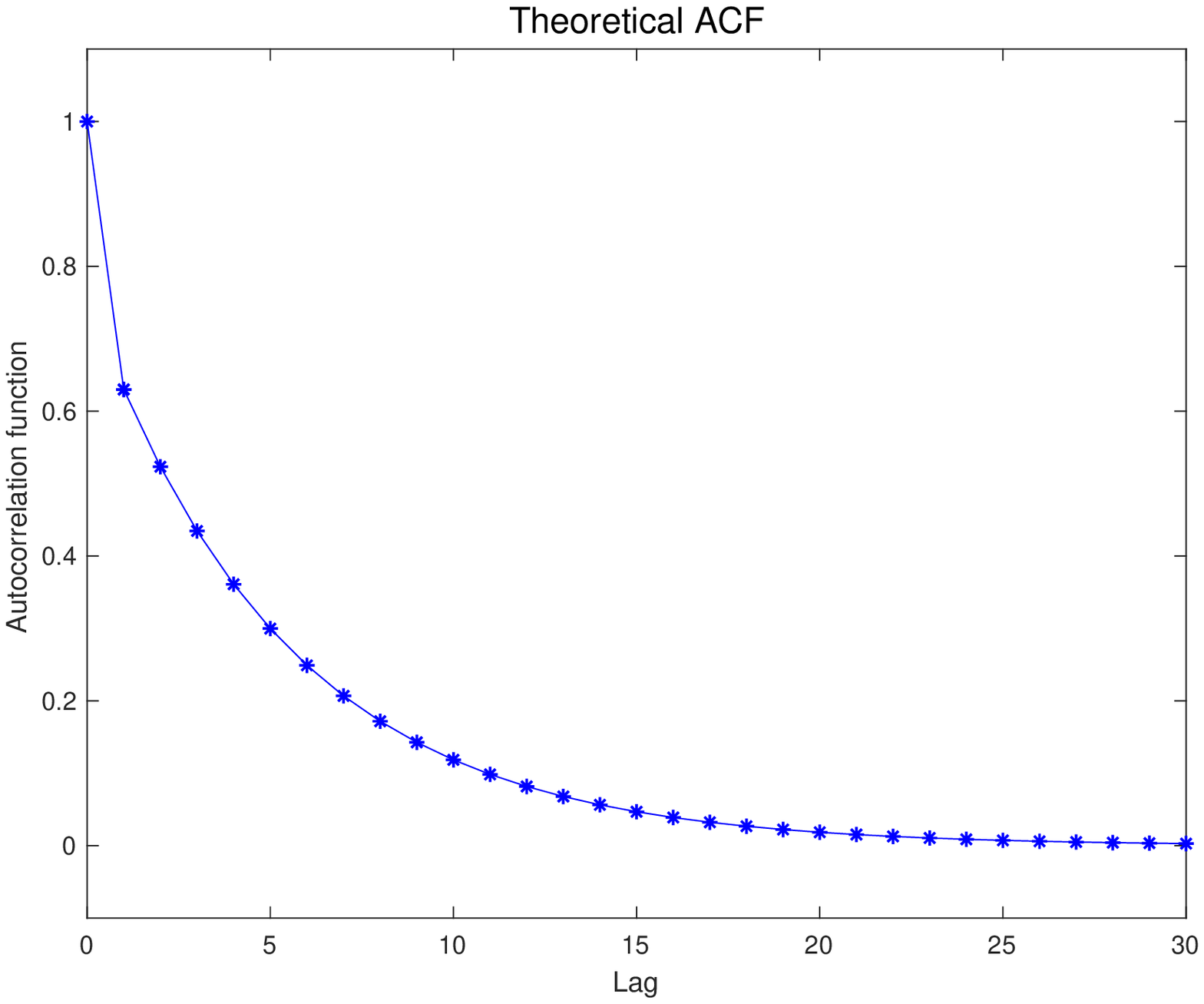}
\caption{Theoretical auto-correlation functions of Model I.}
\label{fig:realACF_1}
\end{figure}


\section{Parameter Estimation}\label{sec:mle}
In this section, we develop parameter estimation of our Int-GARCH model and the asymptotic properties of the estimators. There are two groups of parameters: the error distribution parameter $k>0$ and the variance parameters 
$$\bdsm{\theta}=\left[\mu, \alpha_1, \cdots, \alpha_p, \beta_1, \cdots, \beta_q, \gamma_1, \cdots, \gamma_w\right]^T.$$
We first provide a sample estimate for $k$, which is shown to be strongly consistent. Then the variance parameters $\bdsm{\theta}$ are estimated by the method of maximum likelihood, and the asymptotic normality for the MLE is established. A simulation study is presented that shows the empirical performance of the estimators.

\subsection{Two-stage estimation of $k$ and $\bdsm{\theta}$}
The error distribution parameter $k$ can be conveniently estimated by the method of moments.  Notice that
\begin{eqnarray*}
  E\left(\delta_t\right) &=& E\left(h_t\eta_t\right)=E\left(h_t\right)E\left(\eta_t\right)=kE\left(h_t\right),\label{ini_1}\\
  E\left|\lambda_{t}\right| &=& E\left|h_{t}\varepsilon_{t}\right|=E\left(h_{t}\right)E\left|\varepsilon_{t}\right|=\sqrt{2/{\pi}}E\left(h_{t}\right),\label{ini_2}
\end{eqnarray*}
and consequently,
\begin{equation*}
  k=\sqrt{2/\pi}\frac{E\left(\delta_t\right)}{E\left(|\lambda_t|\right)}.
\end{equation*}
Replacing $E\left(\delta_t\right)$ and $E\left(|\lambda_t|\right)$ in (\ref{mm-k}) by their sample estimates, we obtain the method of moments estimator of $k$ as
\begin{equation}\label{mm-k}
  \hat{k}=\sqrt{2/\pi}\frac{\overline{\delta_t}}{\overline{|\lambda_t|}}.
\end{equation}
Under the condition that $\left\{h_t\right\}$ is strictly stationary and ergodic, and additionally that $E\left(h_t\right)<\infty$, the sample means $\overline{\delta_t}$ and $\overline{|\lambda_t|}$ are both strongly consistent, and so is $\hat{k}$.

Given $k$, the conditional likelihood function of $\left\{r_t, t=1,\cdots, T\right\}$ is 
\begin{eqnarray*}
  L\left(\bdsm{\theta}\right)|\mathcal{F}_0
  &=&\prod\limits_{t=1}^{T}f\left(\lambda_t, \delta_t|\mathcal{F}_{t-1}\right)
  =\prod\limits_{t=1}^{T}f\left(\lambda_t|\mathcal{F}_{t-1}\right)f\left(\delta_t|\mathcal{F}_{t-1}\right)\\
  &=&\prod\limits_{t=1}^{T}\frac{1}{h_t\sqrt{2\pi}}e^{-\frac{\lambda_t^2}{2h_t^2}}\frac{1}{\Gamma(k)h_t^k}\delta_t^{k-1}e^{-\frac{\delta_t}{h_t}}\\
  &=&\prod\limits_{t=1}^{T}\frac{\delta_t^{k-1}}{\sqrt{2\pi}\Gamma(k)}h_t^{-(k+1)}e^{-\frac{\lambda_t^2}{2h_t^2}-\frac{\delta_t}{h_t}}\\
  &\propto&\prod\limits_{t=1}^{T}h_t^{-(k+1)}e^{-\frac{\lambda_t^2}{2h_t^2}-\frac{\delta_t}{h_t}}.
\end{eqnarray*} 
Thus, the conditional log-likelihood function up to a constant is 
\begin{eqnarray*}
  l(\bdsm{\theta})=\sum\limits_{t=1}^{T}\left\{-(k+1)\log(h_t)-\frac{\lambda_t^2}{2h_t^2}-\frac{\delta_t}{h_t}\right\},
\end{eqnarray*}
where $h_t=\mu+\sum\alpha_i|\lambda_{t-i}|+\sum\beta_i\delta_{t-i}+\sum\gamma_ih_{t-i}$. Then, the maximum likelihood estimate $\hat{\bdsm{\theta}}_T$ is defined as
\begin{equation}\label{def:mle}
  \hat{\bdsm{\theta}}_T=\arg\max_{\bdsm{\theta}}\left\{l(\bdsm{\theta})\right\},
\end{equation}
and it can be computed by the scoring algorithm
\begin{equation*}
  \bdsm{\theta}^{(m+1)}=\bdsm{\theta}^{(m)}-\left[\nabla^2l(\bdsm{\theta}^{(m)})\right]^{-1}\nabla l(\bdsm{\theta}^{(m)}).
\end{equation*}
Under the condition of strict stationarity and ergodicity, and some moment requirement for $\left\{h_t\right\}$, $\hat{\bdsm{\theta}}_T$ is consistent and asymptotically normal. We state the result in the following theorem. Details of the proof are deferred to the Appendix. 
\begin{theorem}\label{thm:asymp-mle}
Assume the process $\left\{h_t\right\}$ is strictly stationary and ergodic, and $E\left(h_t^2\right)<\infty$. Assume in addition that the parameter space $\Theta$ is compact. Then, the maximum likelihood estimator defined in (\ref{def:mle}) satisfies:\\
(i) $\hat{\bdsm{\theta}}_T\stackrel{\mathcal{P}}{\to}\bdsm{\theta}$ as $T\to\infty$; \\
(ii) $T^{\frac{1}{2}}\left(\hat{\bdsm{\theta}}_T-\bdsm{\theta}_0\right)\stackrel{\mathscr{D}}{\to}N\left(\bdsm{0}, \bdsm{I}^{-1}(\bdsm{\theta}_0)\right)$,
where $\bdsm{I}(\bdsm{\theta}_0)=-E\left[\nabla^2l_t(\bdsm{\theta}_0)\right]$ is the Fisher information matrix evaluated at $\bdsm{\theta}_0$.
\end{theorem}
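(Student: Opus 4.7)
The plan is to follow the classical two-step recipe for quasi-MLE in GARCH-type recursions: first establish strong consistency via uniform convergence of the averaged log-likelihood to a deterministic criterion with a unique maximizer at $\bdsm{\theta}_0$, then derive asymptotic normality through a Taylor expansion of the score equation combined with a martingale central limit theorem.

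For part (i), let $\bar l_T(\bdsm{\theta}) = T^{-1}l(\bdsm{\theta})$ and $L(\bdsm{\theta}) = E[l_t(\bdsm{\theta})]$. Because $h_t(\bdsm{\theta})$ is a measurable function of $\{\lambda_s,\delta_s: s\leq t-1\}$ and the underlying interval-valued process is strictly stationary and ergodic, so is $\{l_t(\bdsm{\theta})\}$ for each fixed $\bdsm{\theta}$. Since $h_t \geq \mu > 0$ on the (compact) parameter space $\Theta$, and $E(h_t^2)<\infty$, one checks $E|\log h_t|<\infty$, $E(\lambda_t^2/h_t^2)=1$, and $E(\delta_t/h_t)=k$ at the true parameter, so $E|l_t(\bdsm{\theta})|<\infty$ and the Birkhoff ergodic theorem gives $\bar l_T(\bdsm{\theta}) \to L(\bdsm{\theta})$ pointwise a.s. A standard envelope argument upgrades this to uniform convergence on the compact $\Theta$: on $\Theta$ the map $\bdsm{\theta} \mapsto l_t(\bdsm{\theta})$ is continuous and dominated by an integrable random variable constructed from $\log\mu$, $\lambda_t^2/\mu^2$, and $\delta_t/\mu$. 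Identification follows from the conditional information inequality (the KL divergence between Gaussian$\times$Gamma densities at parameters $(k,h_t(\bdsm{\theta}))$ and $(k,h_t(\bdsm{\theta}_0))$ is strictly positive unless $h_t(\bdsm{\theta})=h_t(\bdsm{\theta}_0)$ a.s., and the latter forces $\bdsm{\theta}=\bdsm{\theta}_0$ since the autoregressive recursion uniquely determines its coefficients). Standard M-estimator theory then yields $\hat{\bdsm{\theta}}_T \to \bdsm{\theta}_0$ in probability (indeed almost surely).

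For part (ii), expand the first-order condition $\nabla l_T(\hat{\bdsm{\theta}}_T)=0$ about $\bdsm{\theta}_0$ to get $\sqrt T(\hat{\bdsm{\theta}}_T-\bdsm{\theta}_0) = -[T^{-1}\nabla^2 l_T(\bdsm{\theta}^*)]^{-1}\cdot T^{-1/2}\nabla l_T(\bdsm{\theta}_0)$ for some $\bdsm{\theta}^*$ between $\hat{\bdsm{\theta}}_T$ and $\bdsm{\theta}_0$. The core observation is that $\{\nabla l_t(\bdsm{\theta}_0),\mathcal F_t\}$ is a stationary ergodic martingale difference sequence: since $\partial h_t/\partial\theta_j$ is $\mathcal F_{t-1}$-measurable, direct computation using $E(\epsilon_t\mid\mathcal F_{t-1})=0$, $E(\epsilon_t^2)=1$, and $E(\eta_t)=k$ shows $E[\partial l_t/\partial\theta_j\mid\mathcal F_{t-1}]=0$ at $\bdsm{\theta}_0$. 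The finiteness of $E(h_t^2)$ together with $h_t\geq\mu$ bounds $E\|\nabla l_t(\bdsm{\theta}_0)\|^2$, so Billingsley's martingale CLT gives $T^{-1/2}\nabla l_T(\bdsm{\theta}_0) \stackrel{\mathscr D}{\to} N(\bdsm 0, \bdsm J)$ with $\bdsm J = E[\nabla l_t(\bdsm{\theta}_0)\nabla l_t(\bdsm{\theta}_0)^T]$. A direct moment calculation (again exploiting the exponential/Gaussian structure) verifies the information identity $\bdsm J = \bdsm I(\bdsm{\theta}_0)$. By part (i) the intermediate point $\bdsm{\theta}^*\to\bdsm{\theta}_0$, and another application of the uniform ergodic theorem to the Hessian gives $T^{-1}\nabla^2 l_T(\bdsm{\theta}^*) \to -\bdsm I(\bdsm{\theta}_0)$. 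Slutsky's theorem assembles the stated limit.

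The main obstacle will be the interplay between the nonlinear recursion in $h_t$ and the moment bookkeeping required for the envelope, the score second moment, and the Hessian first moment to be uniformly controlled on $\Theta$. Specifically, $\partial h_t/\partial\theta$ inherits an autoregressive recursion that can accumulate past shocks, and one must verify that the induced summations have finite expectation; the assumption $E(h_t^2)<\infty$ (via Theorem \ref{thm:var}) is exactly what supplies these bounds, together with a geometric-decay argument (using that stationarity forces $\gamma_1<1$) to show the discrepancy between the conditional likelihood started from $\mathcal F_0$ and its stationary counterpart contributes only $O_P(1)$ and hence is negligible after dividing by $T$ or $\sqrt T$.
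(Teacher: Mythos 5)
Your proposal is sound and reaches the same conclusion, but for part (i) it takes a genuinely different route from the paper. The paper follows Weiss (1986) and verifies the three Cram\'er-type conditions of Basawa et al.\ (1976) --- vanishing of the averaged score at $\bdsm{\theta}_0$, a positive-definite lower bound on the averaged negative Hessian, and a uniform moment bound on the third derivatives --- which together guarantee the existence of a consistent root of the score equation. You instead run the Wald-type argument: a uniform ergodic law of large numbers for $T^{-1}l(\bdsm{\theta})$ over the compact $\Theta$ via an integrable envelope, plus an identification step showing $E[l_t(\bdsm{\theta})]$ is uniquely maximized at $\bdsm{\theta}_0$. Each route buys something: yours directly establishes consistency of the global maximizer defined in (\ref{def:mle}) (the Cram\'er conditions strictly speaking only give a consistent root), but it requires the identification argument, which you only sketch --- the claim that $h_t(\bdsm{\theta})=h_t(\bdsm{\theta}_0)$ a.s.\ forces $\bdsm{\theta}=\bdsm{\theta}_0$ needs the non-degeneracy of $(|\epsilon_t|,\eta_t)$ spelled out, and the KL step should be stated for the Gaussian$\times$Gamma conditional density with $k$ held fixed at its first-stage value. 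For part (ii) your Taylor-expansion-plus-martingale-CLT argument is essentially the paper's, and your explicit note that the information identity $\bdsm{J}=\bdsm{I}(\bdsm{\theta}_0)$ must be checked is a point the paper takes for granted. One bookkeeping remark: for the second moment of the score the cleaner bound is the paper's, namely that $h_t^{-1}\partial h_t/\partial\bdsm{\theta}_i$ is bounded by $\max\{1/\mu,1/\alpha_i,1/\beta_j,1/\gamma_k\}$ so that $E\|\nabla l_t(\bdsm{\theta}_0)\|^2$ reduces to $E[(\epsilon_t^2+\eta_t-(k+1))^2]<\infty$ without invoking $E(h_t^2)<\infty$ at all; your version, which divides only by $\mu$ and leans on $E(h_t^2)<\infty$, leaves terms like $\lambda_t^4/h_t^6$ to control and is harder to close. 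Your closing observations about the derivative recursion in $\gamma_1$ and the asymptotic negligibility of the initialization are real issues that the paper passes over silently, and flagging them is to your credit.
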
 
As an immediate consequence of the theorem, the asymptotic covariance matrix of $\hat{\bdsm{\theta}}_T$ is $\frac{1}{T}\bdsm{I}^{-1}(\bdsm{\theta}_0)$. It can be consistently estimated by the inverse Hessian $-\left[\nabla^2l\left(\hat{\bdsm{\theta}}_T\right)\right]^{-1}$, which is easily obtained from the scoring algorithm. Especially for Int-GARCH(1,1,1), the condition for strict stationarity and ergodicity is the same as that for mean stationarity, that is, $E(x_t)<\infty$, and consequently $E(h_t)<\infty$ (see Lemma \ref{stationarity-ht} in the Appendix). So, assuming mean stationarity, the method of moments estimator $\hat{k}$ is strongly consistent. If we further assume covariance stationarity, i.e., $E(x_t^2)<\infty$ in view of Corollary \ref{cor:stat}, the maximum likelihood estimator $\hat{\bdsm{\theta}}$ is consistent and asymptotically normal. We summarize the conclusion in the following corollary. 

\begin{corollary}
Consider the Int-GARCH(1,1,1) model. \\
(i) If $E(x_t)<\infty$, then $\hat{k}$ is strongly consistent; \\
(ii) If in addition $E(x_t^2)<\infty$, then $\hat{\bdsm{\theta}}$ satisfies both consistency and asymptotic normality. 
\end{corollary}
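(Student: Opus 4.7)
The plan is to derive this corollary as an immediate synthesis of results already assembled in this section: Theorems \ref{thm:mean} and \ref{thm:var} for the unconditional moments, the auxiliary stationarity lemma referenced in the Appendix (Lemma \ref{stationarity-ht}), and Theorem \ref{thm:asymp-mle} for the asymptotic behavior of the MLE. Reading the statement in context, the hypotheses $E(x_t)<\infty$ and $E(x_t^2)<\infty$ should be interpreted as the stationarity conditions $E(x_t)<1$ and $E(x_t^2)<1$ that the preceding paragraph explicitly invokes via Theorem \ref{thm:mean} and Corollary \ref{cor:stat}; this is the only reading under which the quoted theorems apply.

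For part (i), my first step is to invoke Lemma \ref{stationarity-ht} to conclude that, under $E(x_t)<1$, the recursion $h_t=\mu+x_t h_{t-1}$ admits a strictly stationary and ergodic solution. Because $(\epsilon_t,\eta_t)$ is i.i.d.\ and independent of $\mathcal{F}_{t-1}$, the joint sequence $\{(h_t,\epsilon_t,\eta_t)\}$ is strictly stationary and ergodic as well, and therefore so are the measurable functionals $|\lambda_t|=h_t|\epsilon_t|$ and $\delta_t=h_t\eta_t$. Theorem \ref{thm:mean} then supplies the integrability $E(h_t)<\infty$, hence $E|\lambda_t|=\sqrt{2/\pi}\,E(h_t)$ and $E(\delta_t)=k\,E(h_t)$ are finite. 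Birkhoff's ergodic theorem yields
\begin{equation*}
\overline{|\lambda_t|}\xrightarrow{a.s.}\sqrt{2/\pi}\,E(h_t),\qquad \overline{\delta_t}\xrightarrow{a.s.}k\,E(h_t).
\end{equation*}
Since $\mu>0$ forces $E(h_t)>0$, the limit of $\overline{|\lambda_t|}$ is strictly positive, so the continuous mapping theorem applied to $\hat{k}=\sqrt{2/\pi}\,\overline{\delta_t}/\overline{|\lambda_t|}$ gives $\hat{k}\to k$ almost surely.

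For part (ii), I would observe that under the stronger hypothesis $E(x_t^2)<1$ Theorem \ref{thm:var} delivers $E(h_t^2)<\infty$, while the strict stationarity and ergodicity of $\{h_t\}$ remain in force from part (i). These are precisely the moment and dependence assumptions of Theorem \ref{thm:asymp-mle}, so consistency and asymptotic normality of $\hat{\bdsm{\theta}}$ follow at once.

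There is no single hard step; the proposal is essentially bookkeeping. The only point that requires a little care is the transfer of strict stationarity and ergodicity from the scalar recursion $\{h_t\}$ to the vector process $\{(|\lambda_t|,\delta_t)\}$ underlying $\hat{k}$, which is standard once one notes the innovation $(\epsilon_t,\eta_t)$ is i.i.d.\ and independent of $\mathcal{F}_{t-1}^{\epsilon,\eta}$, so the augmented process is generated by a measure-preserving shift on a product space whose ergodicity is inherited from that of the driving sequence. Everything else is a direct appeal to the previously established theorems.
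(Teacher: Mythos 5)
Your proposal is correct and follows essentially the same route as the paper: the paper's justification is the paragraph immediately preceding the corollary, which likewise combines Lemma \ref{stationarity-ht} (strict stationarity and ergodicity of $\{h_t\}$ under mean stationarity, hence strong consistency of the sample means defining $\hat{k}$) with Theorems \ref{thm:mean}, \ref{thm:var}, and \ref{thm:asymp-mle}. Your reading of the hypotheses as $E(x_t)<1$ and $E(x_t^2)<1$ is the intended one (the ``$<\infty$'' in the statement is evidently a typo, since $E(x_t)$ is always finite), and your extra care about transferring ergodicity to $\{(|\lambda_t|,\delta_t)\}$ only makes explicit what the paper leaves implicit.
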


Computation of the maximum likelihood estimate requires initial values of $\bdsm{\theta}$ and starting values $\left\{h_0, \cdots, h_{-(m-1)}\right\}$ and $\left\{r_0, \cdots, r_{-(m-1)}\right\}$. Recall from Theorem \ref{thm:mean-gen} that
\begin{equation}\label{Eh_t-general}
   E\left(h_{t}\right)=\frac{\mu}{1-\sum_{i=1}^{m}\mu_i}
   =\frac{\mu}{1-\sqrt{\pi/2}\sum_{i=1}^{p}\alpha_i-k\sum_{i=1}^{q}\beta_i-\sum_{i=1}^{w}\gamma_i}.
\end{equation}
So $\mu$ can be initialized by replacing $E\left(h_{t}\right)$ in (\ref{Eh_t-general}) by its sample mean $\overline{h_t}$ and a rough guessing of
$1-\sqrt{\pi/2}\sum_{i=1}^{p}\alpha_i-k\sum_{i=1}^{q}\beta_i-\sum_{i=1}^{w}\gamma_i$, for example, 0.4. Additionally, initial values of $\alpha_i$'s, $\beta_i$'s, and $\gamma_i$'s can be obtained by setting each of $\sqrt{\pi/2}\sum_{i=1}^{p}\alpha_i$, $k\sum_{i=1}^{q}\beta_i$, and $\sum_{i=1}^{w}\gamma_i$ to be a small value,  e.g., 0.2, and by letting $\alpha_1=\cdots=\alpha_p$, $\beta_1=\cdots=\beta_q$, $\gamma_1=\cdots=\gamma_w$. 
Finally, we assume that the process $\left\{r_t\right\}$ starts from its infinite past with a finite mean and variance, so it is reasonable to let $h_t=E\left(h_t\right)$, $t=0, \cdots, -(m-1)$. An alternative is to let $h_t=0$, $t=0, \cdots, -(m-1)$, assuming $\left\{r_t\right\}$ starts from constant intervals $\left\{r_0, \cdots, r_{-(m-1)}\right\}$. In either case, we let $r_t=E\left(r_t\right)$, $t=0, \cdots, -(m-1)$. Based on our experience, the initial values lead to very fast convergence, with only a couple of iterations. 

Alternative to the two-stage estimation, one can also compute the maximum likelihood estimates of $k$ and $\bdsm{\theta}$ simultaneously, which jointly maximize the conditional likelihood
\begin{equation*}
   L\left(k, \bdsm{\theta}\right)|\mathcal{F}_0
   \propto \prod\limits_{t=1}^{T}  \frac{1}{\Gamma(k)}\delta_t^{k}h_t^{-(k+1)}e^{-\frac{\lambda_t^2}{2h_t^2}-\frac{\delta_t}{h_t}},
\end{equation*}
or equivalently the conditional log-likelihood up to a constant
\begin{equation*}
    l(k,\bdsm{\theta})=\sum\limits_{t=1}^{T}
    \left\{k\log(\delta_t)-\log\left[\Gamma(k)\right]-(k+1)\log(h_t)-\frac{\lambda_t^2}{2h_t^2}-\frac{\delta_t}{h_t}\right\}.
\end{equation*}
For the iterative algorithm, the two-stage estimates can be naturally input as initial values. Due to the involvement of the gamma function in the conditional likelihood, computation of this joint maximum likelihood estimates will be more complicated. Additionally, the required conditions to establish the asymptotic normality are also expected to be much more restrictive.  

\subsection{Simulation and finite sample performances}\label{simulation}
Our simulation study considers four distinct Int-GARCH models with specific sets of parameter values.
From the empirical analysis detailed in Section 6, there seem to be certain ranges for the parameters: 1) $k$ is between $1$ and $2$; 2) $\mu$ and $\alpha_1$ are very small; 
3) $\beta_1$ is generally above $0.3$; 4) $\gamma_1$ is either very small or equal to $0$. We design our models such that the parameter values are inside these ranges,  
with a little added flexibility to account for potential possibilities not included in our real data. We are particularly interested in the Int-ARCH model, since very often in the real 
data analysis $\gamma_1$ is estimated to be $0$. So, after examining two full Int-GARCH models (Models I and II), we also consider two Int-ARCH models with $\gamma_1$ 
set to $0$ (Models III and IV). The parameter values for the two Int-GARCH models are generated as follows. 
\begin{align*}
\text{Models I and II}:\ \ \ \ 
  &k\sim\text{Unif }(1, 2),\\
  &\mu\sim\text{Unif }(0, 0.1),\\
  &\alpha_1\sim\text{Unif }(0, 0.2),\\
  &\beta_1\sim\text{Unif }(0.1, 0.6),\\
  &\gamma_1\sim\text{Unif }(0, 0.2). 
\end{align*} 
For Int-ARCH models, $\beta_1$ is generally larger due to the removal of $\gamma_1$, so we generate the parameters for the two Int-ARCH models with $k, \mu, \alpha_1$ being the same as the Int-GARCH models but $\beta_1$ slightly larger. 
\begin{align*}
\text{Models I and II}:\ \ \ \ 
  &k\sim\text{Unif }(1, 2),\\
  &\mu\sim\text{Unif }(0, 0.1),\\
  &\alpha_1\sim\text{Unif }(0, 0.2),\\
  &\beta_1\sim\text{Unif }(0.3, 0.6).
\end{align*} 
All of the generations are subject to the constraint that each combination will result in a weakly stationary Int-(G)ARCH process that achieves consistency and asymptotic normality for the maximum likelihood estimators. The exact parameter values are listed in Table \ref{tab:simu}. Realizations of the designed models are simulated using the initial values $h_{0}=0$ and $r_{0}=E\left(r_{t}\right)$.  Plots of two simulated data sets are shown in Figure \ref{fig:data_simu}. Denote
\begin{table}[htbp]
  \centering
  \caption{Average result of 100 repeated simulations from two Int-GARCH (1,1,1) models (I, II) and two Int-ARCH (1,1) models (III, IV). }
    \begin{tabular}{rllllll}
    \toprule
          &       &       & \multicolumn{1}{c}{Mean} & \multicolumn{1}{c}{Mean} & \multicolumn{1}{c}{Empiriccal} & \multicolumn{1}{c}{Asymptotic} \\
    \multicolumn{1}{c}{Model} & \multicolumn{2}{c}{Parameters} & \multicolumn{1}{c}{Estimate} & \multicolumn{1}{c}{Absolute Error} & \multicolumn{1}{c}{Standard Error } & \multicolumn{1}{c}{Standard Error} \\
    \midrule
          &       &       &       &       &       &  \\
    \multicolumn{1}{c}{\textbf{I}} & $k$     & 1.8147 & 1.8081 & 0.077 & 0.1061 &  \\
          & $\mu$    & 0.0906 & 0.0887 & 0.0072 & 0.0086 & 0.0082 \\
          & $\alpha_1$ & 0.0318 & 0.0284 & 0.0184 & 0.0235 & 0.0211 \\
          & $\beta_1$  & 0.374 & 0.3586 & 0.0171 & 0.0148 & 0.0149 \\
          & $\gamma_1$ & 0.1265 & 0.1269 & 0.0314 & 0.0381 & 0.0314 \\
          &       &       &       &       &       &  \\
    \multicolumn{1}{c}{\textbf{II}} & $k$     & 1.2134 & 1.2251 & 0.0412 & 0.05  &  \\
          & $\mu$    & 0.071 & 0.0712 & 0.0068 & 0.0086 & 0.0094 \\
          & $\alpha_1$ & 0.1833 & 0.1784 & 0.025 & 0.0318 & 0.0306 \\
          & $\beta_1$  & 0.2334 & 0.2345 & 0.0152 & 0.0192 & 0.0207 \\
          & $\gamma_1$ & 0.1732 & 0.174 & 0.0467 & 0.0579 & 0.0602 \\
          &       &       &       &       &       &  \\
    \multicolumn{1}{c}{\textbf{III}} & $k$     & 1.5139 & 1.5045 & 0.04  & 0.0506 &  \\
          & $\mu$    & 0.074 & 0.0738 & 0.0026 & 0.0034 & 0.0036 \\
          & $\alpha_1$ & 0.037 & 0.0334 & 0.0185 & 0.0228 & 0.0208 \\
          & $\beta_1$  & 0.3436 & 0.3426 & 0.0139 & 0.017 & 0.0174 \\
          & $\gamma_1$ & 0     &       &       &       &  \\
          &       &       &       &       &       &  \\
    \multicolumn{1}{c}{\textbf{IV}} & $k$     & 1.3632 & 1.3621 & 0.038 & 0.0486 &  \\
          & $\mu$    & 0.0584 & 0.0593 & 0.0029 & 0.0037 & 0.0033 \\
          & $\alpha_1$ & 0.1927 & 0.1962 & 0.0208 & 0.026 & 0.0269 \\
          & $\beta_1$  & 0.322 & 0.3253 & 0.0161 & 0.0197 & 0.0195 \\
          & $\gamma_1$ & 0     &       &       &       &  \\
    \bottomrule
    \end{tabular}%
  \label{tab:simu}%
\end{table}

\begin{figure}[ht]
\centering
\includegraphics[ height=1.800in, width=2.300in]{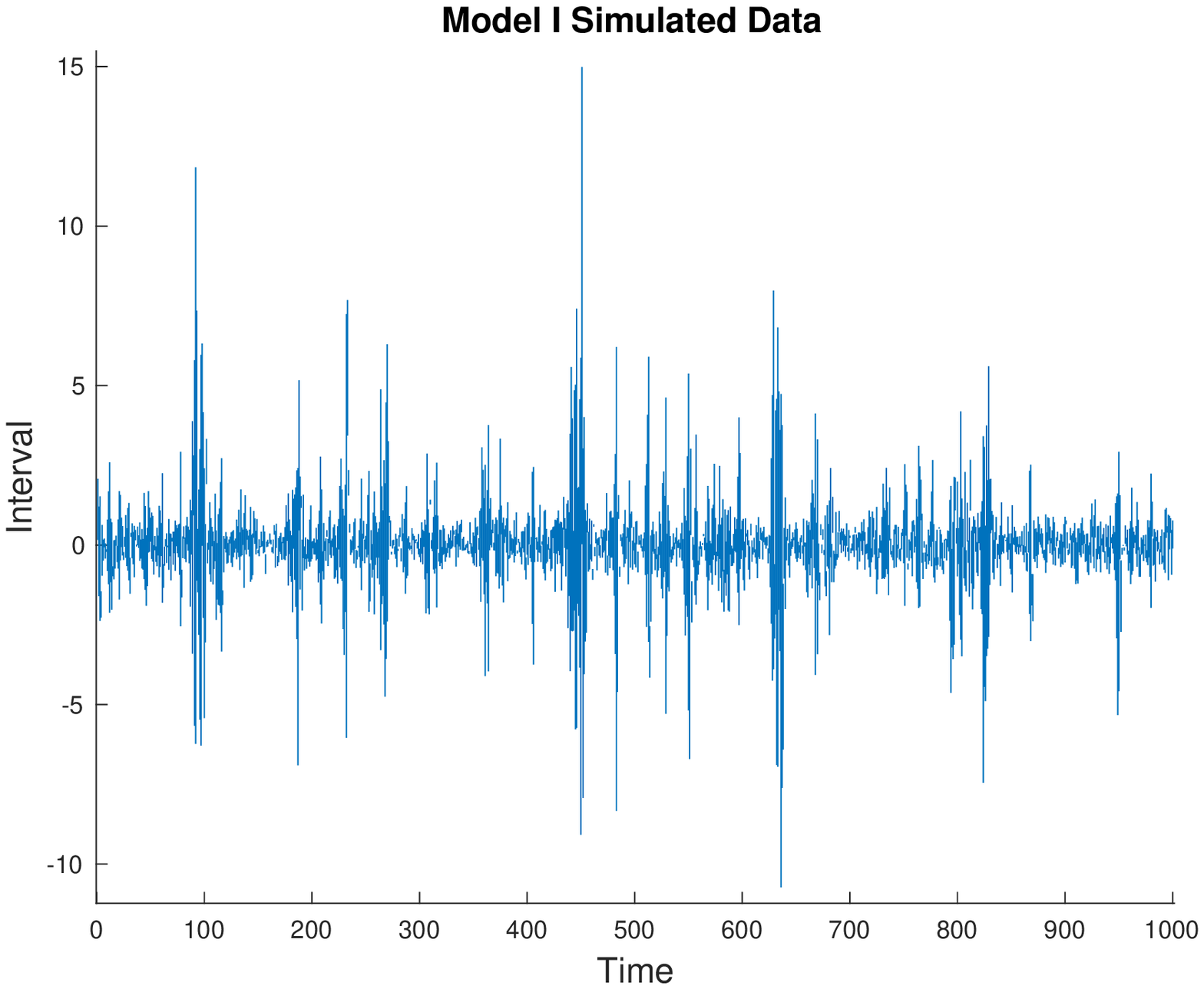}
\includegraphics[ height=1.800in, width=2.300in]{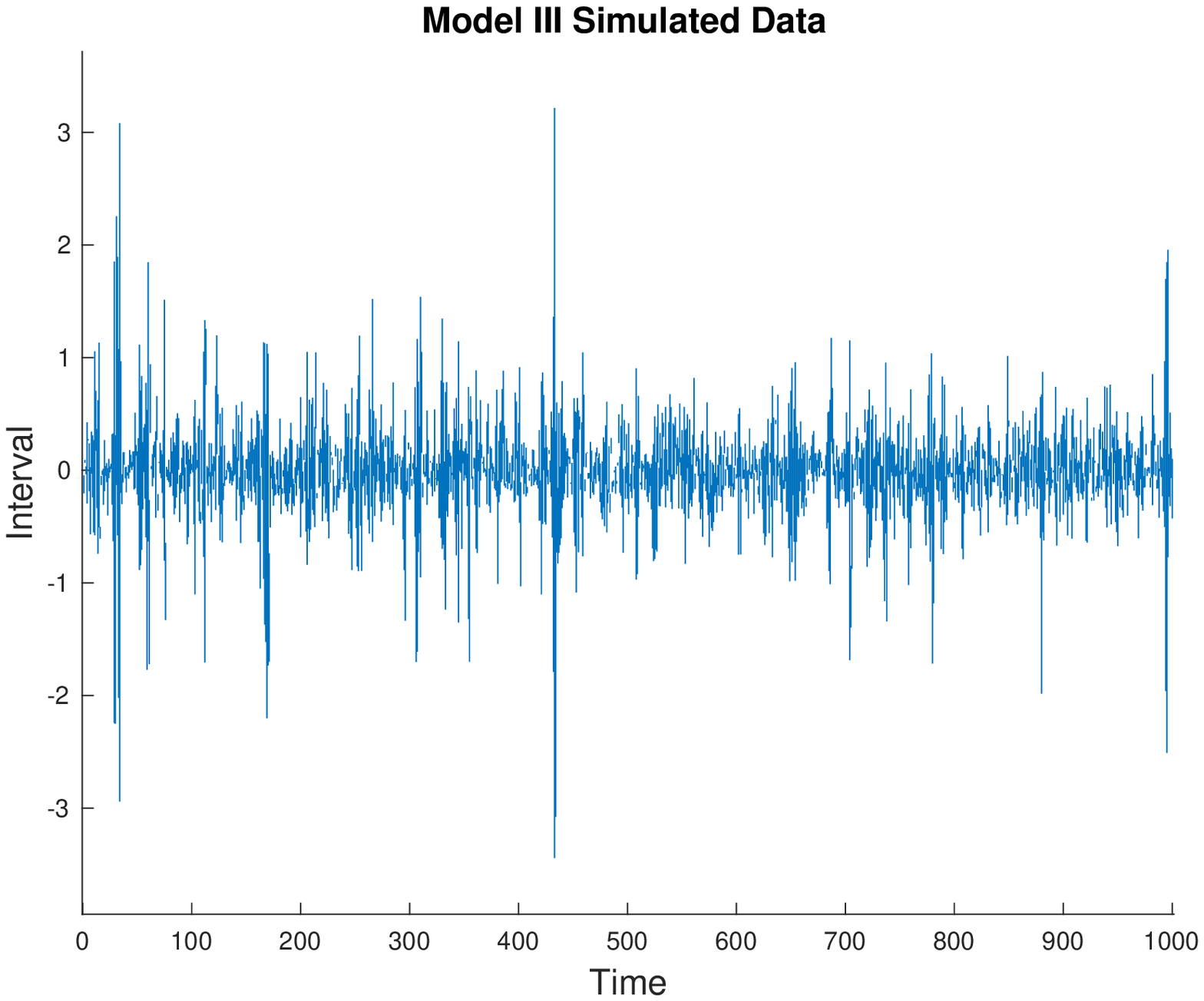}
\caption{Plots of simulated data sets each with $T=1000$.}
\label{fig:data_simu}
\end{figure}

\begin{eqnarray*}
  \gamma(s) &=& \mbox{Cov}\left(r_{t},r_{t+s}\right),\\
  \gamma_{\lambda}(s) &=& \mbox{Cov}\left(\lambda_{t},\lambda_{t+s}\right),\\
  \gamma_{\delta}(s) &=& \mbox{Cov}\left(\delta_{t},\delta_{t+s}\right).
\end{eqnarray*}
Recall that the theoretical ACF of $\left\{r_t\right\}$ is
\begin{eqnarray*}
\rho(s) & = & \dfrac{\gamma(s)}{\gamma(0)}=\dfrac{\gamma_{\lambda}(s)+\gamma_{\delta}(s)}{\gamma_{\lambda}(0)+\gamma_{\delta}(0)}.
\end{eqnarray*}
We consequently define the sample ACF of $\left\{ r_{t}\right\} $ as
\[
\rho(s)=\dfrac{\hat{\gamma_{\lambda}}(s)+\hat{\gamma_{\delta}}(s)}{\hat{\gamma_{\lambda}}(0)+\hat{\gamma_{\delta}}(0)},
\]
where $\hat{\gamma_{\lambda}}(s)$ and $\hat{\gamma_{\delta}}(s)$ are the sample auto-covariance functions of $\left\{ \lambda_{t}\right\}$ and $\left\{ \delta_{t}\right\} $, respectively. Figures \ref{fig:acf_1} and \ref{fig:acf_2} show the sample ACF's for simulated data sets with 1000 observations from an Int-GARCH(1,1,1) (Model I) and an Int-ARCH(1,1) (Model III), respectively.\\

\begin{figure}[ht]
\centering
\includegraphics[ height=1.800in, width=2.300in]{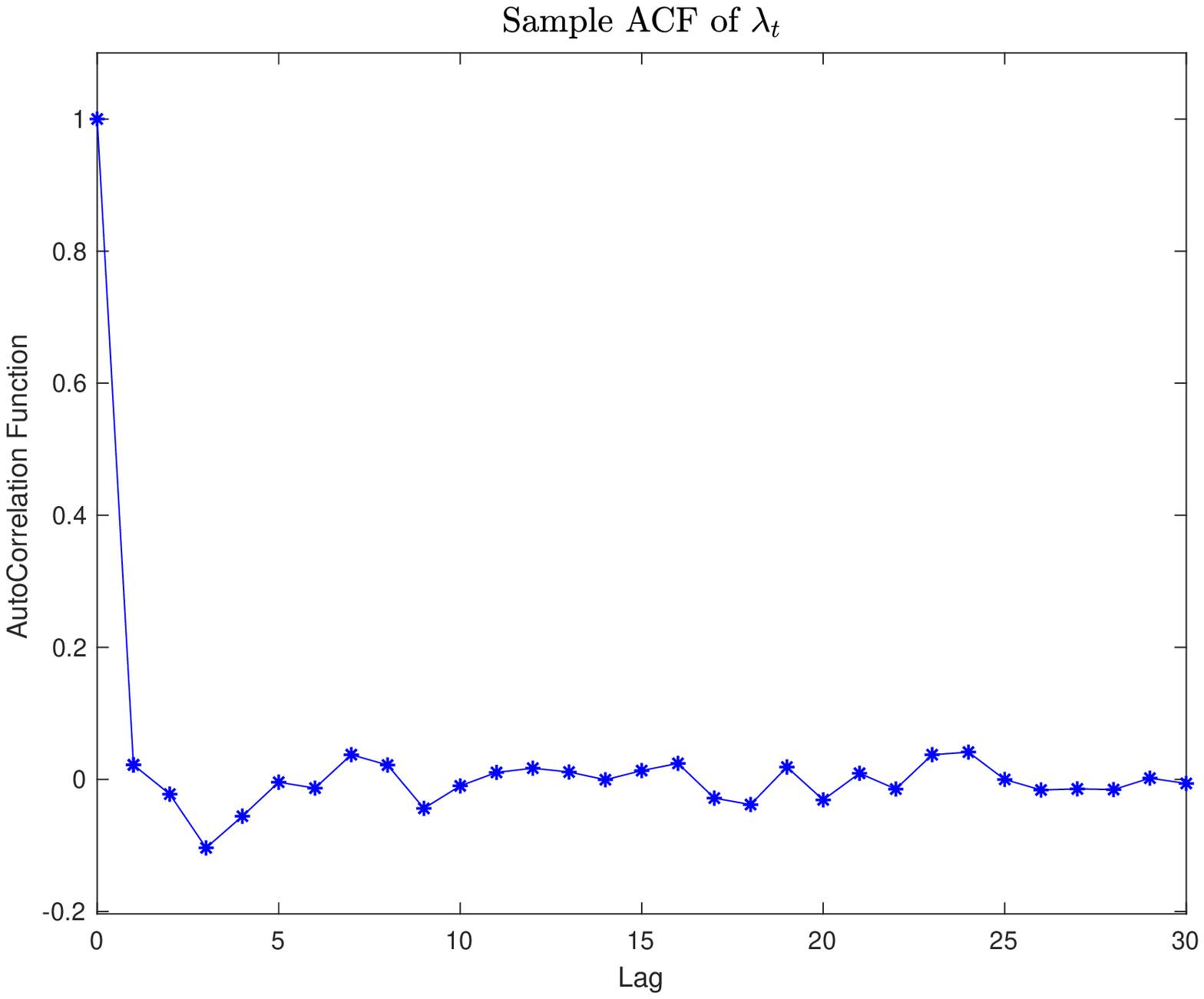}
\includegraphics[ height=1.800in, width=2.300in]{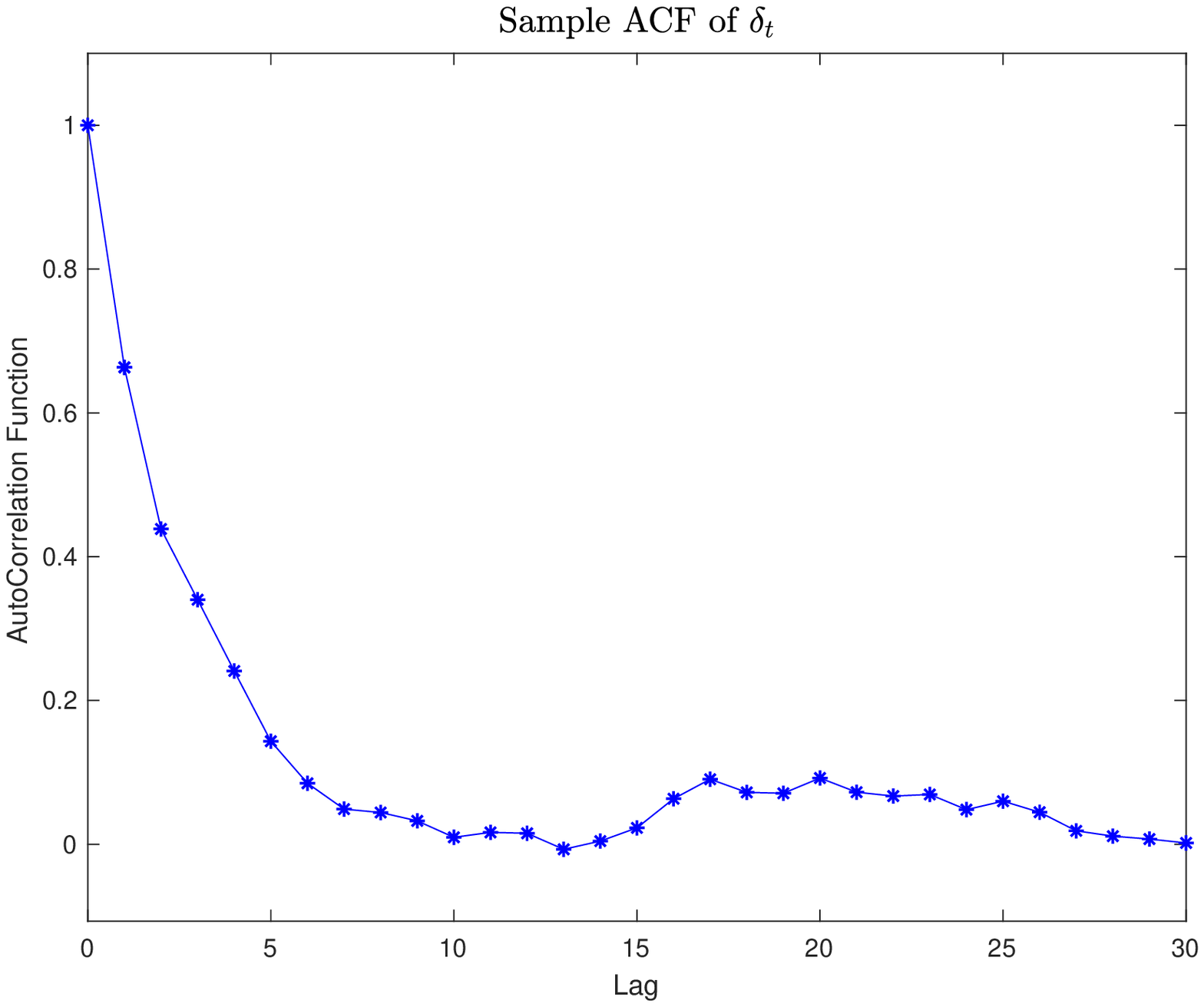}\\
\includegraphics[ height=1.800in, width=2.300in]{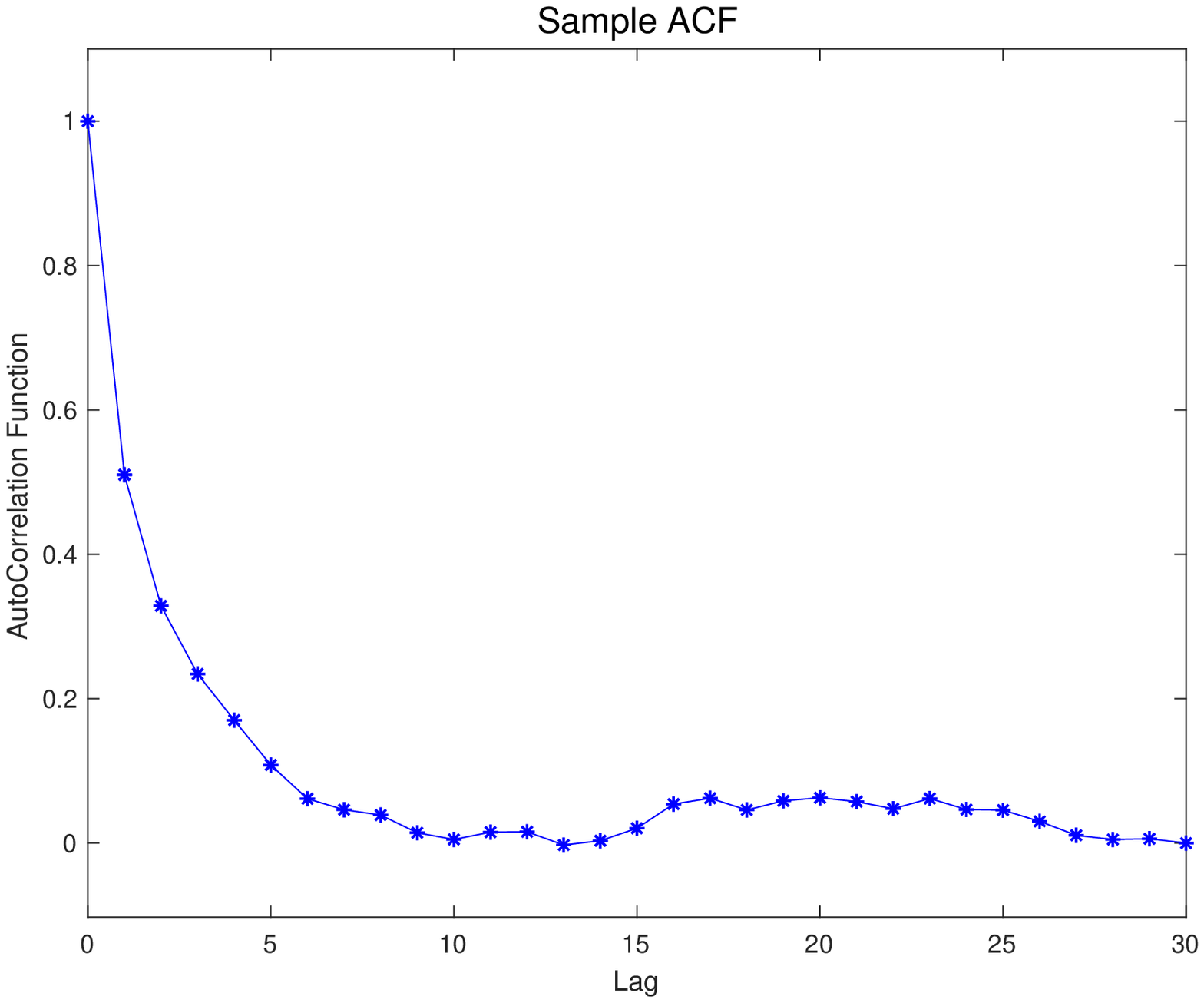}
\caption{Sample auto-correlation functions of a simulated data set from Model I.}
\label{fig:acf_1}
\end{figure}

\begin{figure}[ht]
\centering
\includegraphics[ height=1.800in, width=2.300in]{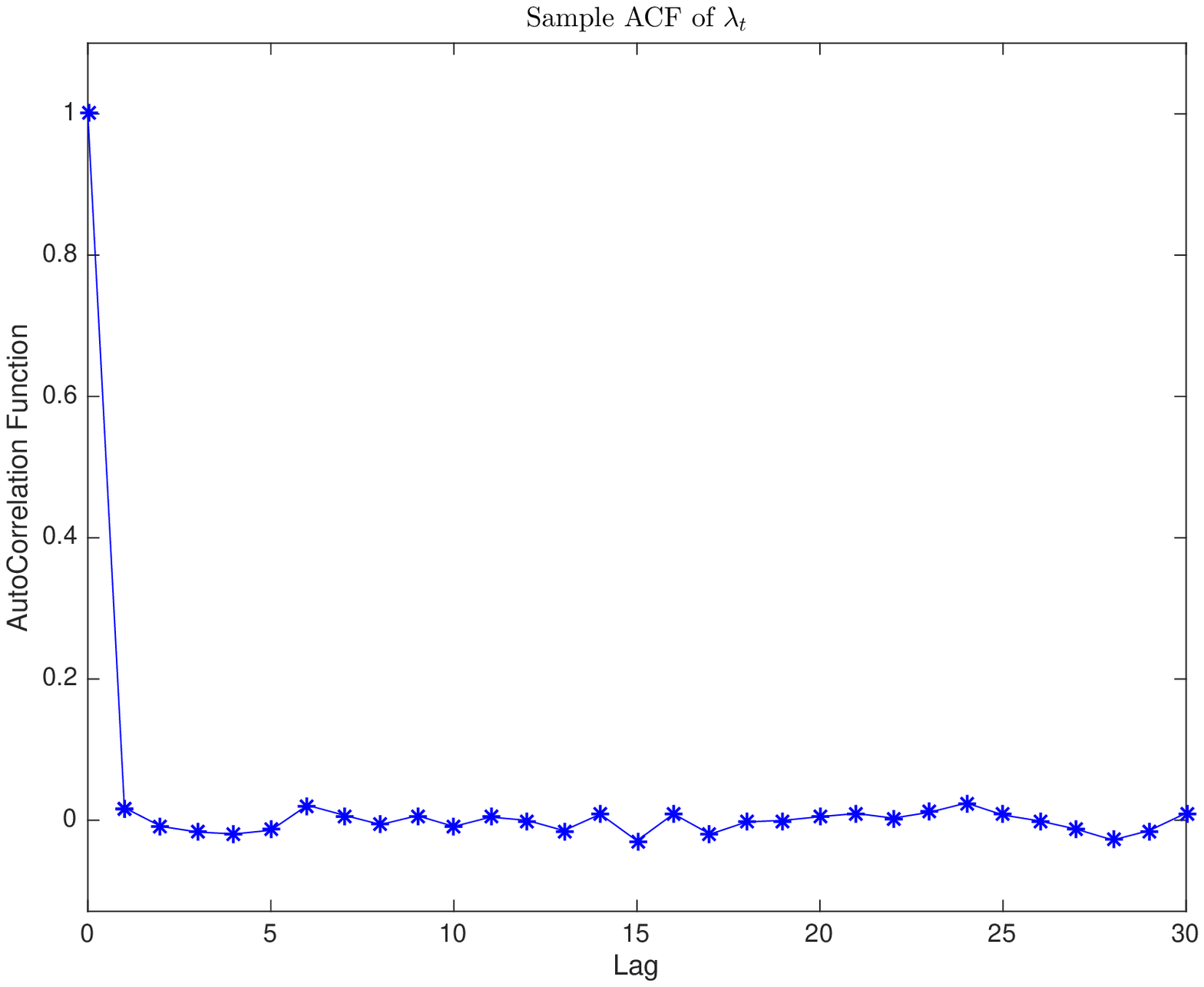}
\includegraphics[ height=1.800in, width=2.300in]{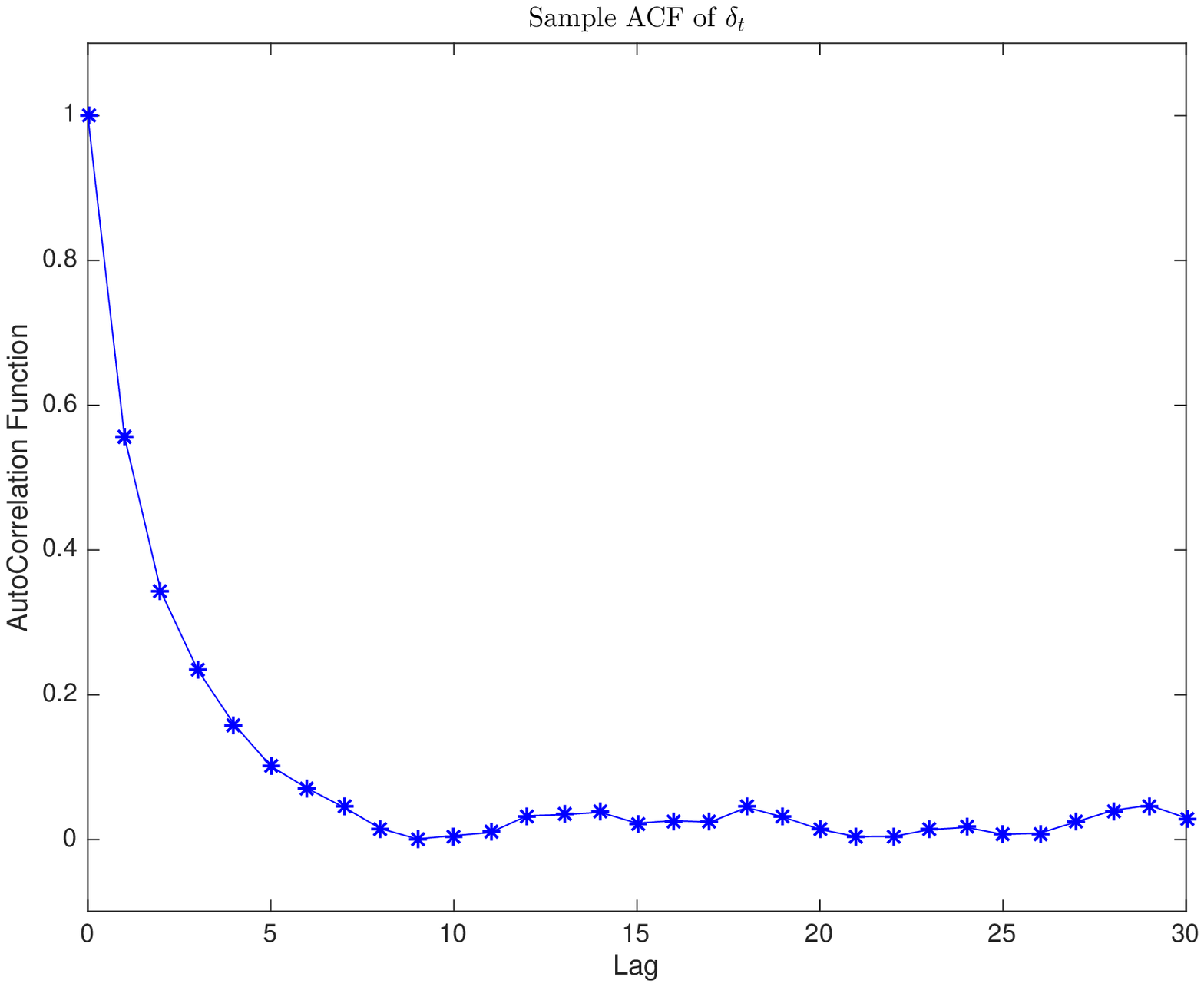}\\
\includegraphics[ height=1.800in, width=2.300in]{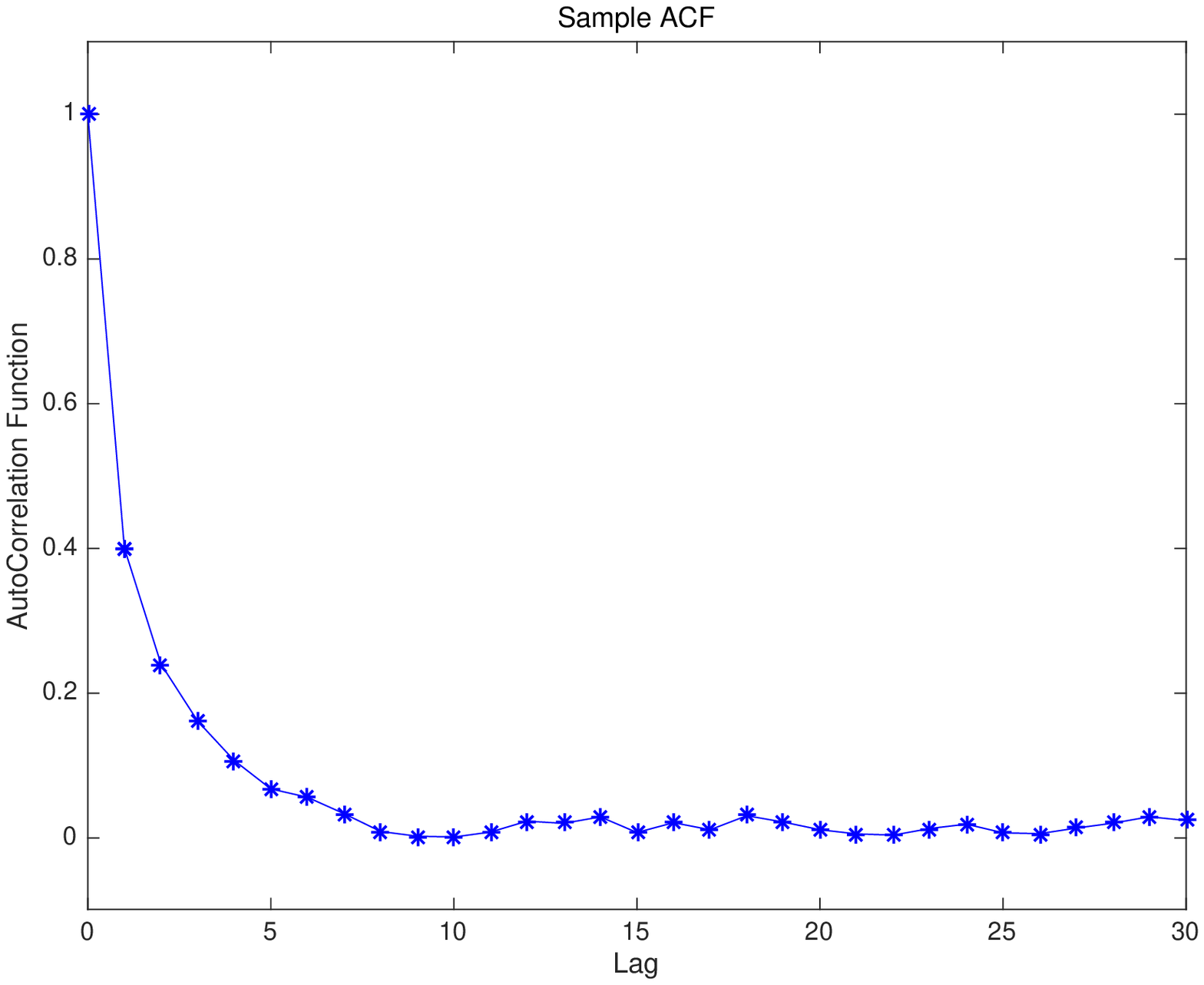}
\caption{Sample auto-correlation functions of a simulated data set from Model III.}
\label{fig:acf_2}
\end{figure}

\indent For each of the four models, we simulate a data set with 1000 observations and estimate the parameters using the proposed two-stage procedure. The process of simulation and estimation is repeated for 100 times independently, and the average results are reported in Table \ref{tab:simu}. We see that, with a very moderate sample of size $1000$, both the moment estimate for $k$ and the maximum likelihood estimate for $\bdsm{\theta}$ are very close to the true values with small mean absolute errors. In addition, the asymptotic standard errors for $\bdsm{\theta}$ based on the asymptotic normality are compared to the empirical ones, and they match very well. 



\section{Empirical Application }\label{sec:empirical}
In this section, we apply our Int-GARCH model to analyze 10 typical stocks and 4 indices data. The symbols of these stocks and indices are listed, for example, in Table \ref{tab:est_model}. We obtained from Thomson Reuters both the 5-minute intraday data and the daily-closing data of totally 1511 trading days, from January 3, 2006 to December 30, 2011. As an example for demonstration purpose, the sample ACF of the BA (Bank of America) return range is displayed in Figure \ref{fig:acf-BA}. It is very similar to the theoretical ACF of model I in Figure \ref{fig:realACF_1} from Section \ref{simulation}, which indicates the feasibility of our Int-GARCH model.

\begin{figure}[ht]
\centering
\includegraphics[ height=1.800in, width=2.300in]{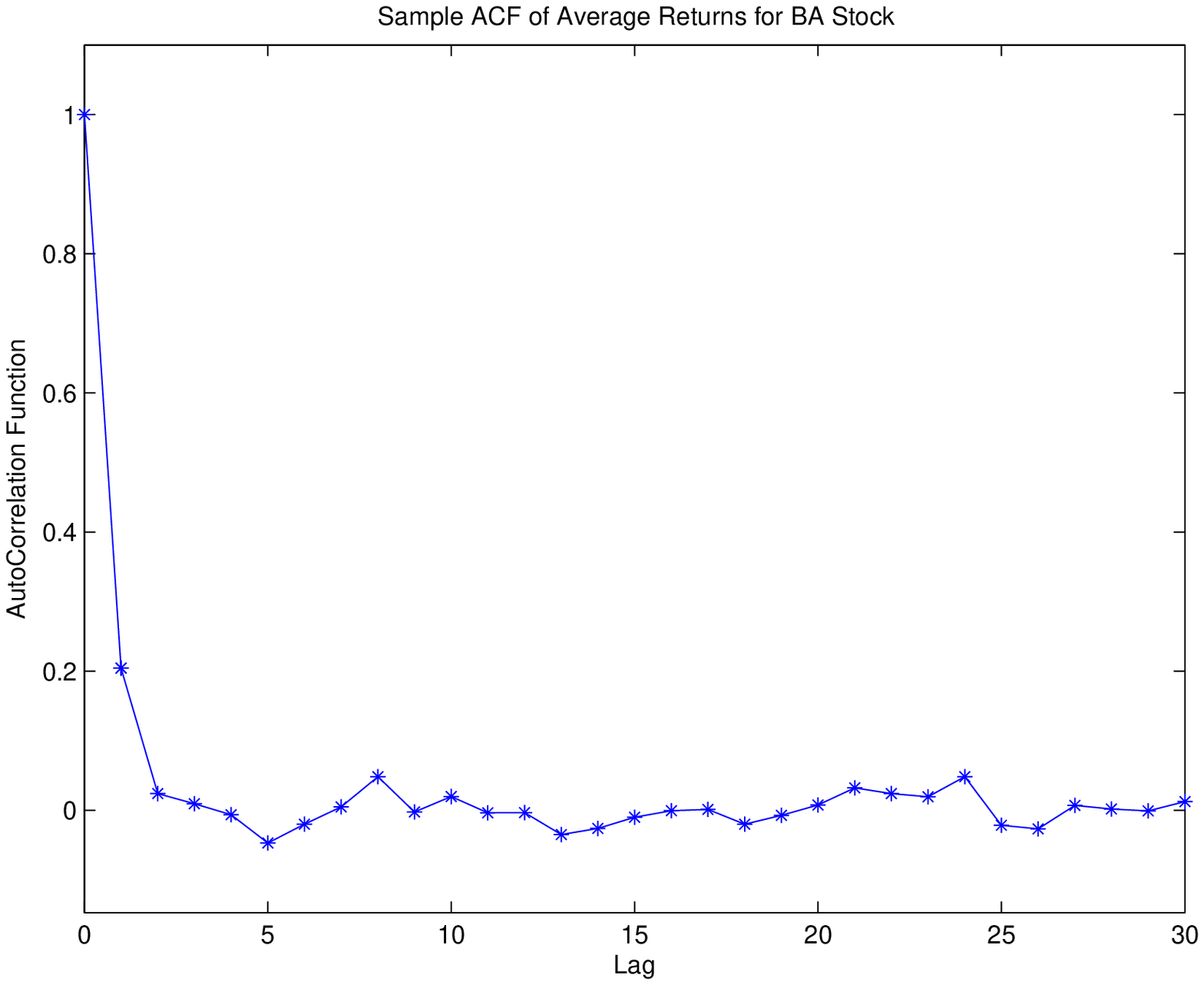}
\includegraphics[ height=1.800in, width=2.300in]{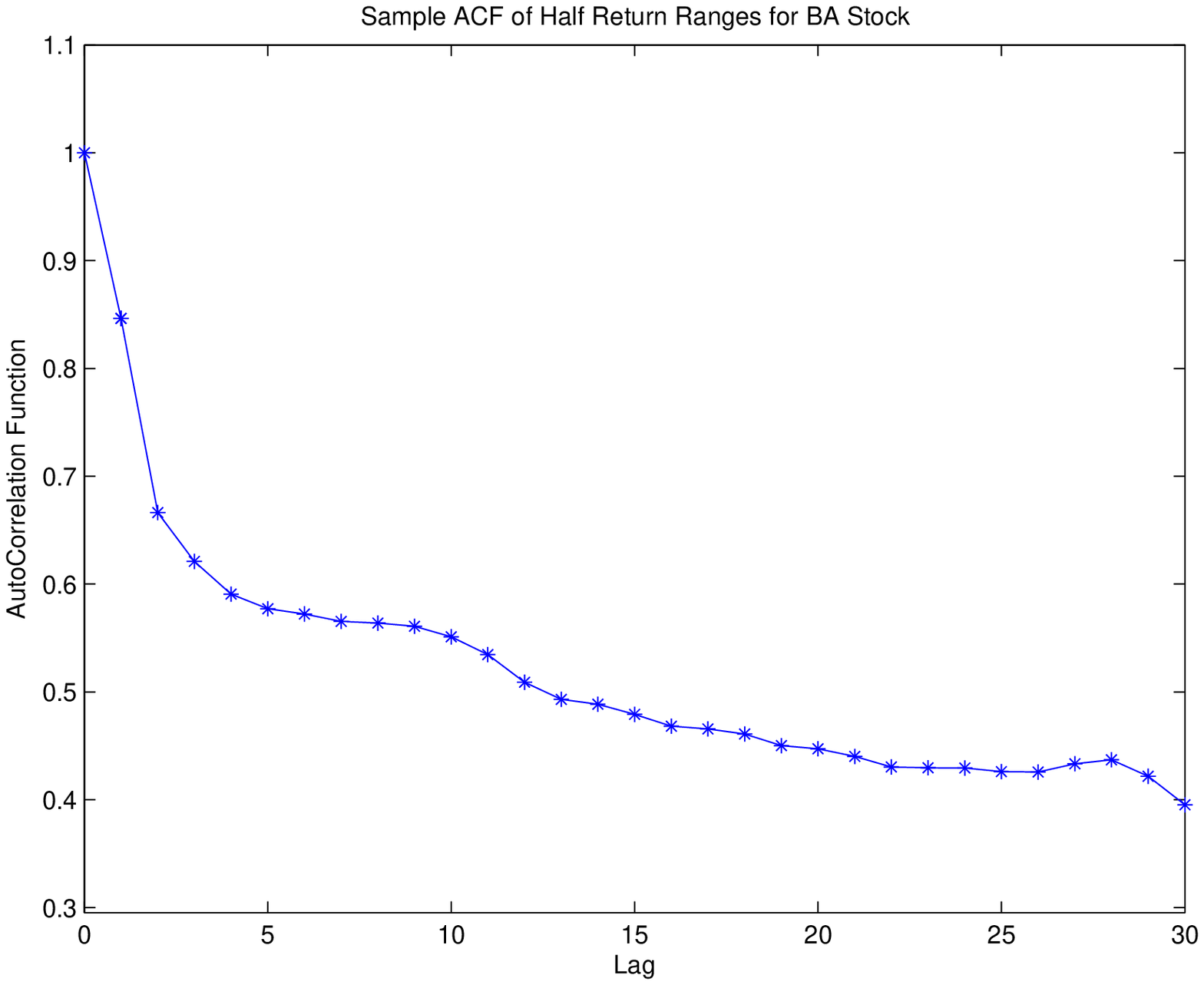}\\
\includegraphics[ height=1.800in, width=2.300in]{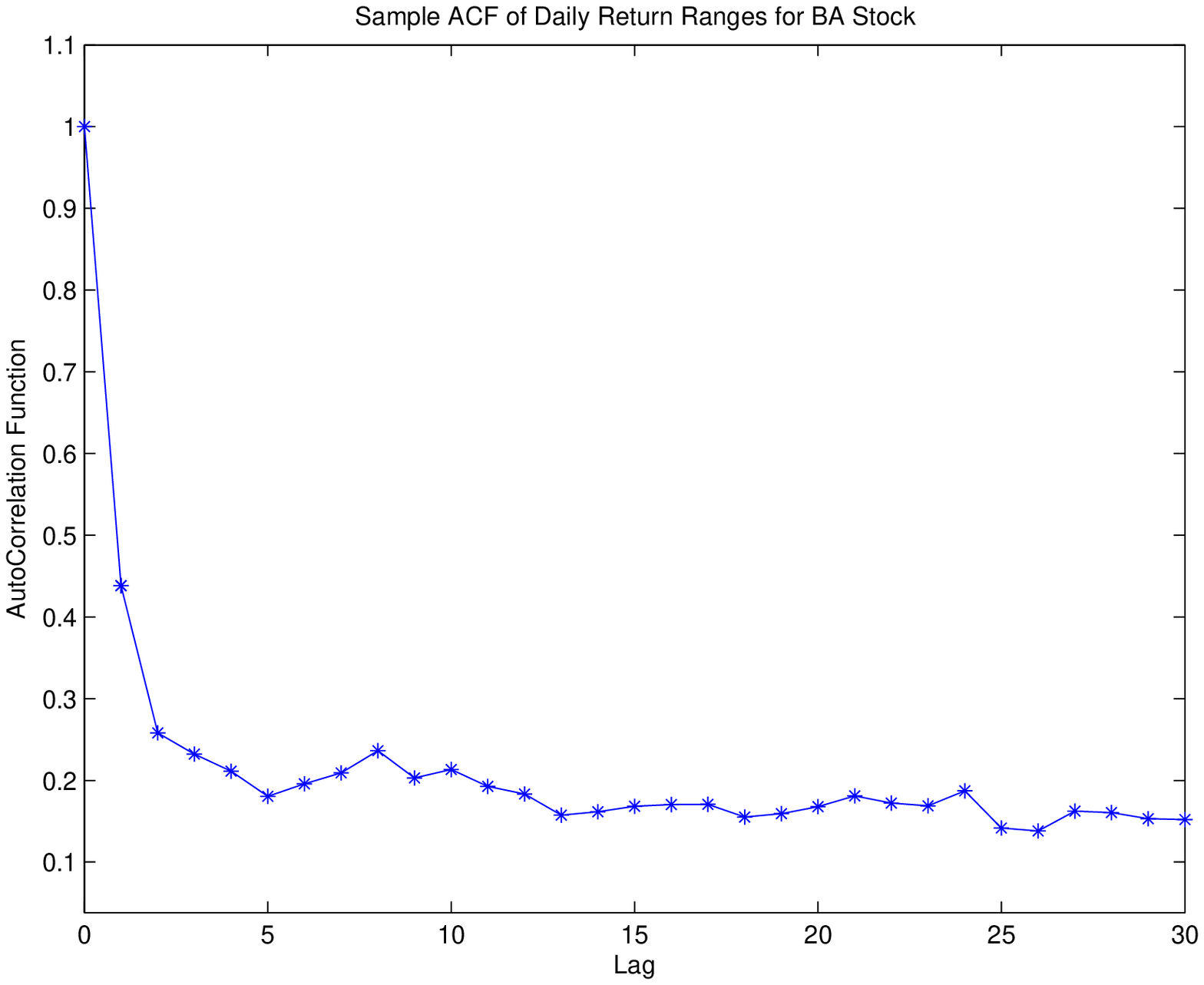}
\caption{Sample auto-correlation function of BA stock.}
\label{fig:acf-BA}
\end{figure}

\begin{table}[htbp]
  \centering
  \caption{Fitted Int-(G)ARCH models for the period 2006-2011. For the variance parameters, the number in the parenthesis to the right of the estimate is the associated standard error 
  based on the asymptotic normality.}
    \begin{tabular}{rlrlllll}
    \toprule
          &       &       & \multicolumn{5}{c}{\textbf{Parameter Estimates}} \\
          &       &       & \multicolumn{1}{c}{k} & \multicolumn{1}{c}{$\mu$} & \multicolumn{1}{c}{$\alpha_1$} & \multicolumn{1}{c}{$\beta_1$} & \multicolumn{1}{c}{$\gamma_1$} \\
    \midrule
          &       &       &       &       &       &       &  \\
    \multicolumn{1}{l}{\textbf{Stocks}} & AAPL  &       & 1.5694 & 0.0043 (0.0004) & 0.024 (0.0242) & 0.4948 (0.0233) & 0 \\
          & AXP   &       & 1.7676 & 0.0012 (0.0003) & 0     & 0.5357 (0.0147) & 0 \\
          & BA    &       & 1.6245 & 0.0031 (0.0005) & 0.0616 (0.0239) & 0.4820 (0.0258) & 0 \\
          & BAC   &       & 1.511 & 0.00003 (0.0003) & 0.0374 (0.0266) & 0.6507 (0.0214) & 0 \\
          & DD    &       & 1.7335 & 0.0019 (0.0004) & 0.0034 (0.0240) & 0.5078 (0.0226) & 0 \\
          & JPM   &       & 1.7228 & 0.0006 (0.0004) & 0.0139 (0.0203) & 0.3399 (0.0096) & 0.3894 (0.0354) \\
          & KO    &       & 1.8602 & 0.0014 (0.0002) & 0.0336 (0.0227) & 0.4480 (0.0212) & 0 \\
          & MSFT  &       & 1.7135 & 0.0029 (0.0004) & 0.0708 (0.0257) & 0.4485 (0.0238) & 0 \\
          & T     &       & 1.8338 & 0.0022 (0.0003) & 0     & 0.4510 (0.0181) & 0 \\
          & WMT   &       & 1.8777 & 0.0017 (0.0003) & 0     & 0.4540 (0.0200) & 0 \\
          &       &       &       &       &       &       &  \\
    \multicolumn{1}{l}{\textbf{Indices}} & DJI   &       & 1.7755 & 0.0009 (0.0002) & 0     & 0.5051 (0.0172) & 0 \\
          & SPX   &       & 1.6472 & 0.0008 (0.0002) & 0     & 0.5383 (0.0348) & 0.0252 (0.0586) \\
          & FTSE  &       & 1.7529 & 0.0008 (0.0002) & 0     & 0.5100 (0.0183) & 0 \\
          & CAC   &       & 1.4469 & 0.0014 (0.0003) & 0     & 0.5800 (0.0382) & 0.0265 (0.0608) \\
    \bottomrule
    \end{tabular}%
  \label{tab:est_model}%
\end{table}

\subsection{Data cleaning and ranking metrics}
In order to compare the empirical (in-sample and out-of-sample) performances of our Int-GARCH model  and the GARCH model, we use realized variance (RV) as the proxy for the ``true'' market variance. The RV using our 5-minute intraday data is computed as:
\begin{equation}\label{Definition_RV}
\textmd{ RV}_t = \sum_{s} [y_t (s) - y_t (s-1)]^2. 
\end{equation}
As before, $y_t(s)$ denotes the log price of an asset at time $s$ on day $t$, and $y_t (s-1)$ denotes the log price of the previous ``snapshot'' which was taken 5 minutes ago.

Following \cite{Shephard10}, we applied 4 filtration rules to clean the data.
1) When multiple quotes have the same timestamp, we replace all these with a single entry with
the median bid and median ask price.
2) Delete entries for which the spread is negative.
3) Delete entries for which the spread is more than 50 times the median spread on that day.
4) Delete entries for which the mid-quote deviated by more than 10 mean absolute deviations from
a rolling median centred but excluding the observation under consideration of 50 observations
(i.e. 25 observations before and 25 after).

Denote the RV by $v_t$ and estimated volatility by $\hat{\sigma}^2_t$. It is shown in \cite{Meddahi01} that the ranking of volatility forecasts based on the $R^2$ from the Mincer-Zarnowitz regression
\begin{equation*}
  v_t=\beta_0+\beta_1\hat{\sigma}_t^2+\epsilon_t,
\end{equation*}
is robust to noises in $v_t$. It is shown in \cite{Patton11} that the ranking using the negative quasi-likelihood (QLIKE) as the loss function is also robust to noises in the volatility proxy. Additionally, the heteroskedasticity-adjusted MSE (HMSE), although not robust to noises, is a popularly used loss function to compare volatility models. The definitions for the QLIKE and HMSE loss functions are given as:
\begin{eqnarray*}
  \text{QLIKE}&=&\frac{1}{N}\sum_{t=1}^{N}\left[\log\left(\hat{\sigma}^2_t\right)+\frac{v^2_t}{\hat{\sigma}^2_t}\right],\\
  \text{HMSE}&=&\frac{1}{N}\sum_{t=1}^{N}\left(\frac{v^2_t}{\hat{\sigma}^2_t}-1\right),
\end{eqnarray*}
where $N$ is the length of sampling period.

\subsection{In-sample and out-of-sample results}
First, we compare the in-sample volatility estimation with that of GARCH (1,1). For this purpose, we fit an Int-GARCH model to the entire data from January 3, 2006 to December 30, 2011. The parameters are estimated by the two-stage estimation method we proposed in Section \ref{sec:mle} and results are listed in Table \ref{tab:est_model}. Occasionally, the maximum likelihood estimate of the full model contains negative values. This means that the maximum of the likelihood function subject to the nonnegative constraints happens on the boundary, i.e., at least one of the estimates is $0$. This is equivalent to removing the corresponding parameter(s) and running the unconstrained estimation for the sub-model (see, e.g. \cite{Waterman74}). For such cases, the removed parameters are displayed as $0$ exactly
without a standard error. In general, the fitted models show that $\alpha$ is much smaller than $\beta$ both in magnitude and in statistical significance, indicating that, consistent to our expectation, the return range is of much more importance than a single ``snapshot'' return, in terms of their contributions to the volatility. Additionally, $\gamma$ also tends to be small, so most likely an Int-ARCH model is sufficient. The comparisons of Int-GARCH (1,1,1) and GARCH (1,1) using all the three criteria ($R^2$, QLIKE and HMSE) are shown in Table \ref{tab:in-sample}, which generally favor Int-GARCH, except for a few exceptions (i.e., $R^2$ of BA, HMSE of T).

Next, to examine the out-of-sample performance, we use the first 5 years (2006-2010) as training period and leave the last year (2011) for out-of-sample predictions. We consider 3 forecasting horizons: 1-step, 2-step, and 5-step, corresponding to one day, two day, and one week ahead predictions, respectively. The model parameters are updated daily and the h-step-ahead predictions (h=1,2,5) are calculated according to equations (\ref{def:int-vol})-(\ref{1-step-pred}). The predicts are compared to those from GARCH (1,1) model for the entire year of 2011, and the average results are shown in Tables \ref{tab:out-sample-1}, \ref{tab:out-sample-2}, and \ref{tab:out-sample-5}, for h=1,2,5, respectively. Similar to the in-sample comparison, the out-of-sample comparisons in Tables \ref{tab:out-sample-1}, \ref{tab:out-sample-2}, and \ref{tab:out-sample-5} indicate the Int-GARCH(1,1,1) perform better than GARCH (1,1) for 1-step, 2-step, and 5-step ahead predictions, for most of the stocks (indices) and ranking metrics.

Finally, we also notice that occasionally the GARCH performs better than the Int-GARCH model in our experiments. For example, in Table \ref{tab:in-sample}, $\text{R}^2$ of GARCH is higher for BAC, and HMSE of GARCH is smaller for MSFT. In table \ref{tab:out-sample-1}, the Int-GARCH is dominated by GARCH for KO, MSFT, and T. Theoretically, the comparison between the GARCH and the Int-GARCH are complicated. Intuitively understanding, the advantage of our Int-GARCH model is that it makes full use of all possible values in the return range as volatility proxies and summarizes the information by integrating the GARCH mechanism with the random set theory. By comparison, the traditional GARCH model only considers use of the information of a single (closing-to-closing) return as the volatility proxy. Generally, the ``summary'' that accounts for more information from the data is expected to reflect the volatility more accurately. However, noting that different measures of model performance characterize different features of the model fitting the data, it is therefore possible occasionally that a simpler GARCH model with the 
point-valued return may outperform the more involved ``summary'' of Int-GARCH in terms of one performance measure. With regard to this, more extensive experiments with much more and larger data sets are needed to thoroughly investigate and optimize the performance of the Int-GARCH model in the near future.

\begin{table}[htbp]
  \centering
  \caption{In-sample comparisons of Int-GARCH (1,1,1) and GARCH (1,1). Larger values of $R^2$ and smaller values of QLIKE and HMSE indicate higher ranking of the volatility model.}
    \begin{tabular}{rlrrrrrr}
    \toprule
          &       & \multicolumn{2}{c}{$\textbf{R}^2$} & \multicolumn{2}{c}{\textbf{QLIKE}} & \multicolumn{2}{c}{\textbf{HMSE}} \\
          &       & \multicolumn{1}{c}{GARCH} & \multicolumn{1}{c}{Int-GARCH} & \multicolumn{1}{c}{GARCH} & \multicolumn{1}{c}{Int-GARCH} & \multicolumn{1}{c}{GARCH} & \multicolumn{1}{c}{Int-GARCH} \\
    \multicolumn{1}{l}{\textbf{Stocks}} & AAPL  & \textbf{0.4707} & 0.3822 & -7.0084 & \textbf{-7.0484} & 0.802 & \textbf{0.5178} \\
          & AXP   & 0.5099 & \textbf{0.5692} & -7.0235 & \textbf{-7.0851} & 0.36  & \textbf{0.3431} \\
          & BA    & 0.4949 & \textbf{0.5446} & -7.2523 & \textbf{-7.2592} & 0.5623 & \textbf{0.4291} \\
          & BAC   & \textbf{0.5179} & 0.4889 & -6.6973 & \textbf{-6.7652} & 1.104 & \textbf{0.665} \\
          & DD    & 0.4637 & \textbf{0.5778} & -7.29 & \textbf{-7.3093} & 0.5374 & \textbf{0.4178} \\
          & JPM   & 0.3728 & \textbf{0.4608} & -6.8266 & \textbf{-6.9093} & 0.6181 & \textbf{0.4411} \\
          & KO    & 0.4512 & \textbf{0.5287} & -8.2397 & \textbf{-8.242} & 0.5816 & \textbf{0.3866} \\
          & MSFT  & 0.3831 & \textbf{0.4323} & -7.4539 & \textbf{-7.4733} & \textbf{0.8664} & 1.1465 \\
          & T     & 0.4099 & \textbf{0.4522} & -7.5155 & \textbf{-7.5569} & 1.5514 & \textbf{0.7272} \\
          & WMT   & 0.3644 & \textbf{0.4009} & -7.8462 & \textbf{-7.8861} & 0.7694 & \textbf{0.7242} \\
          &       &       &       &       &       &       &  \\
    \multicolumn{1}{l}{\textbf{Indices}} & DJI   & \textbf{0.5006} & 0.4988 & -8.5019 & \textbf{-8.5877} & 0.5695 & \textbf{0.5428} \\
          & SPX   & 0.5209 & \textbf{0.5286} & -8.401 & \textbf{-8.4873} & \textbf{0.481} & 0.4839 \\
          & FTSE  & 0.4769 & \textbf{0.538} & -8.4324 & \textbf{-8.5158} & 0.4122 & \textbf{0.359} \\
          & CAC   & 0.4171 & \textbf{0.4988} & -8.1337 & \textbf{-8.1737} & 0.4158 & \textbf{0.3854} \\
    \bottomrule
    \end{tabular}%
  \label{tab:in-sample}%
\end{table}

\begin{table}[htbp]
  \centering
  \caption{Comparisons of 1-step-ahead predictions of Int-GARCH (1,1,1) and GARCH (1,1). }
    \begin{tabular}{rrrrrrrr}
    \toprule
    \multicolumn{8}{c}{\textbf{ 1-step-ahead Prediction}} \\
    \midrule
          &       & \multicolumn{2}{c}{$\textbf{R}^2$} & \multicolumn{2}{c}{\textbf{QLIKE}} & \multicolumn{2}{c}{\textbf{HMSE}} \\
          &       & \multicolumn{1}{c}{GARCH} & \multicolumn{1}{c}{Int-GARCH} & \multicolumn{1}{c}{GARCH} & \multicolumn{1}{c}{Int-GARCH} & \multicolumn{1}{c}{GARCH} & \multicolumn{1}{c}{Int-GARCH} \\
    \textbf{Stocks} & AAPL  & 0.0485 & \textbf{0.298} & -7.5694 & \textbf{-7.6838} & 0.5426 & \textbf{0.4262} \\
          & AXP   & 0.2591 & \textbf{0.5507} & -7.3522 & \textbf{-7.4218} & 0.3678 & \textbf{0.248} \\
          & BA    & 0.2909 & \textbf{0.4847} & -7.4518 & \textbf{-7.5054} & 0.3293 & \textbf{0.2911} \\
          & BAC   & 0.2461 & \textbf{0.4203} & -6.6467 & \textbf{-6.7399} & 0.4693 & \textbf{0.443} \\
          & DD    & 0.3053 & \textbf{0.5058} & -7.3962 & \textbf{-7.4448} & 0.3275 & \textbf{0.2614} \\
          & JPM   & 0.3587 & \textbf{0.4318} & -7.0733 & \textbf{-7.1648} & 0.392 & \textbf{0.3583} \\
          & KO    & 0.2844 & \textbf{0.4279} & -8.3893 & \textbf{-8.4008} & \textbf{0.3503} & 0.445 \\
          & MSFT  & 0.1032 & \textbf{0.1894} & -7.6106 & \textbf{-7.6581} & \textbf{0.7244} & 0.7286 \\
          & T     & 0.153 & \textbf{0.2121} & -8.1725 & \textbf{-8.1884} & \textbf{0.4431} & 0.6729 \\
          & WMT   & 0.2538 & \textbf{0.4033} & -6.8641 & \textbf{-8.3991} & 0.7947 & \textbf{0.3363} \\
          &       &       & \textbf{} &       & \textbf{} &       & \textbf{} \\
    \textbf{Indices} & DJI   & 0.3204 & \textbf{0.5334} & -8.5295 & \textbf{-8.6221} & 0.3875 & \textbf{0.3421} \\
          & SPX   & 0.3413 & \textbf{0.5882} & -8.3776 & \textbf{-8.461} & 0.3984 & \textbf{0.3436} \\
          & FTSE  & 0.3421 & \textbf{0.5555} & -8.3617 & \textbf{-8.3958} & 0.4161 & \textbf{0.3795} \\
          & CAC   & 0.3673 & \textbf{0.5285} & -7.8501 & \textbf{-7.8694} & 0.4095 & \textbf{0.3711} \\
    \bottomrule
    \end{tabular}%
  \label{tab:out-sample-1}%
\end{table}%

\begin{table}[htbp]
  \centering
  \caption{Comparisons of 2-step-ahead predictions of Int-GARCH (1,1,1) and GARCH (1,1).}
    \begin{tabular}{rrrrrrrr}
    \toprule
    \multicolumn{8}{c}{\textbf{2-step-ahead Prediction}} \\
    \midrule
          &       & \multicolumn{2}{c}{$\textbf{R}^2$} & \multicolumn{2}{c}{\textbf{QLIKE}} & \multicolumn{2}{c}{\textbf{HMSE}} \\
          &       & \multicolumn{1}{c}{GARCH} & \multicolumn{1}{c}{Int-GARCH} & \multicolumn{1}{c}{GARCH} & \multicolumn{1}{c}{Int-GARCH} & \multicolumn{1}{c}{GARCH} & \multicolumn{1}{c}{Int-GARCH} \\
    \textbf{Stocks} & AAPL  & 0.0132 & \textbf{0.1574} & -7.5383 & \textbf{-7.5879} & 0.6343 & \textbf{0.4554} \\
          & AXP   & 0.1275 & \textbf{0.2865} & -7.3106 & \textbf{-7.3747} & 0.635 & \textbf{0.3224} \\
          & BA    & 0.2132 & \textbf{0.3291} & -7.4326 & \textbf{-7.4655} & 0.3797 & \textbf{0.3143} \\
          & BAC   & 0.1367 & \textbf{0.1879} & -6.5818 & \textbf{-6.6704} & 0.8272 & \textbf{0.6055} \\
          & DD    & 0.2097 & \textbf{0.3287} & -7.3671 & \textbf{-7.4149} & 0.4624 & \textbf{0.2928} \\
          & JPM   & \textbf{0.2601} & 0.2432 & -7.0329 & \textbf{-7.1009} & \textbf{0.6407} & 0.6817 \\
          & KO    & 0.1908 & \textbf{0.2407} & \textbf{-8.3579} & -8.3372 & \textbf{0.4782} & 0.6853 \\
          & MSFT  & 0.0596 & \textbf{0.1203} & -7.5891 & \textbf{-7.6448} & 0.7986 & \textbf{0.5681} \\
          & T     & 0.0888 & \textbf{0.1128} & -8.1397 & \textbf{-8.1643} & 0.6592 & \textbf{0.4805} \\
          & WMT   & 0.1925 & \textbf{0.2499} & -6.7841 & \textbf{-8.3713} & 0.8143 & \textbf{0.347} \\
          &       &       & \textbf{} &       & \textbf{} &       & \textbf{} \\
    \textbf{Indices} & DJI   & 0.2046 & \textbf{0.3392} & -8.4939 & \textbf{-8.5731} & 0.525 & \textbf{0.3758} \\
          & SPX   & 0.2061 & \textbf{0.3736} & -8.3409 & \textbf{-8.4095} & 0.5378 & \textbf{0.3789} \\
          & FTSE  & 0.2581 & \textbf{0.385} & -8.3375 & \textbf{-8.3469} & 0.4742 & \textbf{0.4309} \\
          & CAC   & 0.2666 & \textbf{0.3509} & \textbf{-7.8209} & -7.8204 & 0.4894 & \textbf{0.4406} \\
    \bottomrule
    \end{tabular}%
  \label{tab:out-sample-2}%
\end{table}%

\begin{table}[htbp]
  \centering
  \caption{Comparisons of 5-step-ahead (one week) predictions of Int-GARCH (1,1,1) and GARCH (1,1). }
    \begin{tabular}{rrrrrrrr}
    \toprule
    \multicolumn{8}{c}{\textbf{5-step-ahead Prediction}} \\
    \midrule
          &       & \multicolumn{2}{c}{$\textbf{R}^2$} & \multicolumn{2}{c}{\textbf{QLIKE}} & \multicolumn{2}{c}{\textbf{HMSE}} \\
          &       & \multicolumn{1}{c}{GARCH} & \multicolumn{1}{c}{Int-GARCH} & \multicolumn{1}{c}{GARCH} & \multicolumn{1}{c}{Int-GARCH} & \multicolumn{1}{c}{GARCH} & \multicolumn{1}{c}{Int-GARCH} \\
    \textbf{Stocks} & AAPL  & 0.002 & \textbf{0.0141} & \textbf{-7.4968} & -7.4147 & 0.6861 & \textbf{0.5453} \\
          & AXP   & 0.0278 & \textbf{0.0488} & -7.2142 & \textbf{-7.2724} & 1.6016 & \textbf{0.5616} \\
          & BA    & 0.1281 & \textbf{0.1872} & -7.3887 & \textbf{-7.4081} & 0.5966 & \textbf{0.3523} \\
          & BAC   & \textbf{0.0625} & 0.0189 & -6.401 & \textbf{-6.4962} & 3.9111 & \textbf{1.4395} \\
          & DD    & 0.1072 & \textbf{0.1175} & -7.2938 & \textbf{-7.3466} & 0.946 & \textbf{0.4013} \\
          & JPM   & \textbf{0.1679} & 0.0986 & -6.9515 & \textbf{-7.0213} & 1.2665 & \textbf{0.7303} \\
          & KO    & \textbf{0.0875} & 0.0743 & -8.2977 & \textbf{-8.2983} & 1.2164 & \textbf{1.0582} \\
          & MSFT  & 0.0336 & \textbf{0.0416} & -7.551 & \textbf{-7.5808} & 1.0182 & \textbf{0.7925} \\
          & T     & 0.052 & \textbf{0.0717} & \textbf{-8.0929} & -8.0853 & 0.95  & \textbf{0.4313} \\
          & WMT   & \textbf{0.1068} & 0.0463 & -6.5603 & \textbf{-8.3041} & 0.8564 & \textbf{0.4626} \\
          &       &       & \textbf{} &       & \textbf{} &       & \textbf{} \\
    \textbf{Indices} & DJI   & 0.1092 & \textbf{0.1266} & -8.4081 & \textbf{-8.4504} & 1.1346 & \textbf{0.6644} \\
          & SPX   & 0.1083 & \textbf{0.1302} & -8.2571 & \textbf{-8.2834} & 1.1027 & \textbf{0.6043} \\
          & FTSE  & 0.1334 & \textbf{0.138} & -8.2455 & \textbf{-8.2624} & 1.154 & \textbf{0.5915} \\
          & CAC   & 0.1696 & \textbf{0.1709} & -7.7553 & \textbf{-7.7696} & 0.7602 & \textbf{0.5749} \\
    \bottomrule
    \end{tabular}%
  \label{tab:out-sample-5}%
\end{table}%

\section{Conclusion}\label{sec:conclude}
In this paper, we have developed an interval-valued GARCH model for analyzing range-measured return processes. It can be viewed as an extension of the point-valued GARCH model that allows for interval-valued returns to produce ``information-richer'' estimation of the volatility. Inferences of our Int-GARCH model can be made based on the maximum likelihood method, which has been shown to have both consistency and asymptotic normality. Our empirical study of stocks and indices data has demonstrated the advantages of Int-GARCH model over the GARCH for both in-sample and out-of-sample performances. 

An equally important contribution of our Int-GARCH model lies in that it forms a pioneer study in the area of conditional heteroscedasticity for interval-valued time series. Further improvements to the model can be made by incorporating interactions between the level and the range. K\"orner and N\"ather (\cite{Korner01}) proposed for a more general space a generalized $L_2$ metric, which when restricted to $\mathcal{K}_{\mathcal{C}}(\mathbb{R})$ is
\begin{equation*}
\rho^2_K(x, y)=\sum_{(u, v)\in S^{0}\times S^{0}}\left(s_{x}(u)-s_{y}(u)\right)\left(s_{x}(v)-s_{y}(v)\right)K(u, v),\ \ 
x,y\in\mathcal{K}_{\mathcal{C}}(\mathbb{R}),
\end{equation*} 
where $K$ is a symmetric positive definite kernel. It can be represented by the center-radius form as 
\begin{equation}\label{def:k2}
  \rho^2_K(x, y)=A_{11}(x^C-y^C)^2+A_{22}(x^R-y^R)^2+2A_{12}(x^C-y^C)(x^R-y^R),
\end{equation}
where $A$ as a function of $K$ is a symmetric positive definite matrix. So $\rho_K$ is a further generalization of $\rho_W$ in (\ref{def:w2}) that takes into account the correlation between the center and the radius, and can be utilized to model the level-range interaction. 




\clearpage
\section{Appendix}
\subsection{Proof of Theorem \ref{thm:mean-gen}}
\begin{proof}
By the definition of (\ref{def:X_it}), (\ref{igarch_5}) can be rewritten as
\begin{eqnarray*}
h_t&=&\mu+\sum_{i=1}^{p}\alpha_i|\epsilon_{t-i}|h_{t-i}+\sum_{i=1}^{q}\beta_i\eta_{t-i}h_{t-i}+\sum_{i=1}^{w}\gamma_ih_{t-i}\\
&=&\mu+\sum_{i=1}^{k}x_{i,t-i}h_{t-i}.
\end{eqnarray*}
Expanding $h_t$ recursively, we obtain
\begin{eqnarray}
h_t&=&\mu+\sum_{i=1}^{k}x_{i,t-i}\left(\mu+\sum_{j=1}^{k}x_{j,t-i-j}h_{t-i-j}\right)\nonumber\\
&=&\mu\left(1+\sum_{i=1}^{k}x_{i,t-i}\right)+\sum_{i=1}^{k}\sum_{j=1}^{k}x_{i,t-i}x_{j,t-i-j}h_{t-i-j}\nonumber\\
&=&\cdots\nonumber\\
&=&\mu\left[1+\sum_{n=1}^{N}\sum_{i_1=1}^{k}\cdots\sum_{i_n=1}^{k}\left(\prod_{j=1}^{n}x_{i_j, t-i_1-\cdots-i_j}\right)\right]\nonumber\\
&&+\sum_{i_1=1}^{k}\cdots\sum_{i_n{N+1}=1}^{k}\left(\prod_{j=1}^{N+1}x_{i_j, t-i_1-\cdots-i_j}\right)h_{t-i_1-\cdots-i_{N+1}}.\label{exp-ht}
\end{eqnarray}
Notice that $x_{i,t}$ and $x_{j,s}$ are independent, $\forall i,j\in\mathbb{N}$ and $t\neq s$. Taking expectations on both sizes of (\ref{exp-ht}), we get
\begin{eqnarray*}
E\left(h_t\right)&=&\mu\left[1+\sum_{n=1}^{N}\sum_{i_1=1}^{k}\cdots\sum_{i_n=1}^{k}\left(\prod_{j=1}^{n}\mu_{i_j}\right)\right]\\
&&+\sum_{i_1=1}^{k}\cdots\sum_{i_{N+1}=1}^{k}\left(\prod_{j=1}^{N+1}\mu_{i_j}\right)E\left(h_{t-i_1-\cdots-i_{N+1}}\right)\\
&&:=I+II. 
\end{eqnarray*}
The first term
\begin{eqnarray*}
  I=\mu\left[1+\sum_{n=1}^{N}\sum_{i_1=1}^{k}\cdots\sum_{i_n=1}^{k}\left(\prod_{j=1}^{n}\mu_{i_j}\right)\right]
  =\mu\left[1+\sum_{n=1}^{N}\left(\sum_{j=1}^{k}\mu_{j}\right)^n\right],\ \ \forall\ N\in\mathbb{N}.
\end{eqnarray*}
The second term 
\begin{eqnarray*}
  II&=&\sum_{i_1=1}^{k}\cdots\sum_{i_{N+1}=1}^{k}\left(\prod_{j=1}^{N+1}\mu_{i_j}\right)E\left(h_{t-i_1-\cdots-i_{N+1}}\right)\\
  &\leq&\sum_{i_1=1}^{k}\cdots\sum_{i_{N+1}=1}^{k}\left(\prod_{j=1}^{N+1}\mu_{i_j}\right)\max\left\{E\left(h_{t-l}\right): N+1\leq l\leq k(N+1)\right\}\\
  &=&\max\left\{E\left(h_{t-l}\right): N+1\leq l\leq k(N+1)\right\}\left(\sum_{j=1}^{k}\mu_{j}\right)^{N+1}, \ \ \forall\ N\in\mathbb{N}.
\end{eqnarray*}
Therefore, $E\left(h_t\right)<\infty$ if and only if $\sum_{i=1}^{k}\mu_i<1$. When it is satisfied, 
\begin{eqnarray*}
E\left(h_t\right)&=&\lim_{N\to\infty}\mu\left[1+\sum_{n=1}^{N}\left(\sum_{j=1}^{k}\mu_{j}\right)^n\right]\\
&&+\lim_{N\to\infty}\max\left\{E\left(h_{t-l}\right): N+1\leq l\leq k(N+1)\right\}\left(\sum_{j=1}^{k}\mu_{j}\right)^{N+1}\\
&=&\frac{\mu}{1-\sum_{i=1}^{k}\mu_i},
\end{eqnarray*}
by the finiteness of $E\left(h_{-\infty}\right)$. The formula for $E\left(r_t\right)$ follows immediately from the Aumann expectation.
\end{proof}

\subsection{Proof of Theorem \ref{thm:mean}}
\begin{proof}
By recursive calculations, 
\begin{eqnarray*}
h_{t} 
 & = & \mu\left(1+\sum_{i=1}^{N}\prod_{j=1}^{i}x_{t-j}\right)+h_{t-(N+1)}\prod_{j=1}^{N+1}x_{t-j},\qquad\forall N\in\mathbb{N}.
\end{eqnarray*}
Taking expectations on both sides, we get
\begin{eqnarray*}
Eh_{t} & = & E\left(\mu\left(1+\sum_{i=1}^{N}\prod_{j=1}^{i}x_{t-j}\right)+h_{t-(N+1)}\prod_{j=1}^{N+1}x_{t-j}\right)\\
 & = & \mu\sum_{i=0}^{N}\left(Ex_{t}\right)^{i}+\left(Ex_{t}\right)^{N+1}Eh_{t-(N+1)},
\end{eqnarray*}
for all $N\in\mathbb{N}$. Letting $N\to\infty$, 
\begin{eqnarray*}
Eh_{t} & = & \mu\sum_{i=0}^{\infty}\left(Ex_{t}\right)^{i}+\lim_{N\rightarrow\infty}\left(Ex_{t}\right)^{N+1}Eh_{t-(N+1)}\\
 & = & \mu\sum_{i=0}^{\infty}\left(Ex_{t}\right)^{i}\\
 & = &  \dfrac{\mu}{1-E\left(x_{t}\right)}\\
 & = & \dfrac{\mu}{1-\alpha_{1}\sqrt{2/\pi}-\beta_{1}k-\gamma_{1}},
\end{eqnarray*}
since $|Ex_t|<1$. On the other hand, if $Ex_t\geq 1$, then $Eh_t>\mu\sum_{i=0}^{\infty}(Ex_t)^i=\infty$. Therefore, $Eh_t<\infty$ if and only if
\begin{eqnarray*}
\left|Ex_{t}\right| & = & \left|E\left(\alpha_{1}\left|\varepsilon_{t-1}\right|+\beta_{1}\eta_{t-1}+\gamma_{1}\right)\right|\\
 & = & \left|\alpha_{1}\sqrt{2/\pi}+\beta_{1}k+\gamma_{1}\right|\\
 & < & 1,
\end{eqnarray*}
and when this is satisfied, the Aumann expectation of $r_{t}$ is found to be
\begin{eqnarray*}
Er_{t} 
 & = & E\left[h_{t}\left(\varepsilon_{t}-\eta_{t}\right),h_{t}\left(\varepsilon_{t}+\eta_{t}\right)\right]\\
 & = & \left[E\left(h_{t}\right)E\left(\varepsilon_{t}-\eta_{t}\right), E\left(h_{t}\right)E\left(\varepsilon_{t}+\eta_{t}\right)\right]\\
 & = & \left[-kE\left(h_{t}\right),kE\left(h_{t}\right)\right].
\end{eqnarray*}
\end{proof}

\subsection{Proof of Theorem \ref{thm:var}}
\begin{proof}
Recall, from the proof of Theorem \ref{thm:mean}, that 
$$h_{t}=\mu\left(1+\sum_{i=1}^{N}\prod_{j=1}^{i}x_{t-j}\right)+h_{t-(N+1)}\prod_{j=1}^{N+1}x_{t-j},\ \forall N\in\mathbb{N}.$$
Consequently, 
\begin{eqnarray*}
h_{t}^{2} & = & \left(\mu\left(1+\sum_{i=1}^{N}\prod_{j=1}^{i}x_{t-j}\right)+h_{t-(N+1)}\prod_{j=1}^{N+1}x_{t-j}\right)^{2}\\
 & = & \mu^{2}\left(1+\sum_{i=1}^{N}\prod_{j=1}^{i}x_{t-j}\right)^{2}+2\mu\left(1+\sum_{i=1}^{N}\prod_{j=1}^{i}x_{t-j}\right)\left(h_{t-(N+1)}\prod_{j=1}^{N+1}x_{t-j}\right)\\
 &  & +\left(h_{t-(N+1)}\prod_{j=1}^{N+1}x_{t-j}\right)^{2}\\
 & = & \mu^{2}\left(1+\sum_{i=1}^{N}\prod_{j=1}^{i}x_{t-j}\right)^{2}+2\mu h_{t-(N+1)}\prod_{j=1}^{N+1}x_{t-j}\\
 &  & +2\mu h_{t-(N+1)}\prod_{j=1}^{N+1}x_{t-j}\left(\sum_{i=1}^{N}\prod_{j=1}^{i}x_{t-j}\right)+h_{t-(N+1)}^{2}\prod_{j=1}^{N+1}x_{t-j}^{2},
 \forall N\in\mathbb{N}.
\end{eqnarray*}
Note that
\begin{eqnarray*}
 &  & \left(1+\sum_{i=1}^{N}\prod_{j=1}^{i}x_{t-j}\right)^{2}\\
 & = & \left(1+x_{t-1}+x_{t-1}x_{t-2}+\cdots+x_{t-1}x_{t-2}\cdots x_{t-N}\right)^{2}\\
 & = & 1+\sum_{i=1}^{N}\prod_{j=1}^{i}x_{t-j}^{2}+2\sum_{i=1}^{N}\prod_{j=1}^{i}x_{t-j}+2\sum_{i=1}^{N-1}\left[\left(\prod_{j=1}^{i}x_{t-j}^{2}\right)\left(\sum_{k=i+1}^{N}\,\prod_{l=i+1}^{k}x_{t-l}\right)\right],
\end{eqnarray*}
and
\begin{equation*}
  \,\,\,\,\prod_{j=1}^{N+1}x_{t-j}\left(\sum_{i=1}^{N}\prod_{j=1}^{i}x_{t-j}\right)
  =\sum_{i=1}^{N}\left[\left(\prod_{j=1}^{i}x_{t-j}^{2}\right)\left(\prod_{k=i+1}^{N+1}x_{t-k}\right)\right].
\end{equation*}
Therefore, 
\begin{eqnarray*}
h_{t}^{2} & = & \mu^{2}\left\{1+\sum_{i=1}^{N}\left(\prod_{j=1}^{i}x_{t-j}^{2}+2\prod_{j=1}^{i}x_{t-j}\right)+2\sum_{i=1}^{N-1}\left[\left(\prod_{j=1}^{i}x_{t-j}^{2}\right)\left(\sum_{k=i+1}^{N}\,\prod_{l=i+1}^{k}x_{t-l}\right)\right]\right\}\\
 &  & +2\mu h_{t-(N+1)}\prod_{j=1}^{N+1}x_{t-j}+2\mu h_{t-(N+1)}\sum_{i=1}^{N}\left[\left(\prod_{j=1`}^{i}x_{t-j}^{2}\right)\left(\prod_{k=i+1}^{N+1}x_{t-k}\right)\right]\\
 &  & +h_{t-(N+1)}^{2}\prod_{j=1}^{N+1}x_{t-j}^{2}\\
 & = & \mu^{2}\left\{1+\sum_{i=1}^{N}\left(\prod_{j=1}^{i}x_{t-j}^{2}+2\prod_{j=1}^{i}x_{t-j}\right)+2\sum_{i=1}^{N-1}\left[\left(\prod_{j=1}^{i}x_{t-j}^{2}\right)\left(\sum_{k=i+1}^{N}\,\prod_{l=i+1}^{k}x_{t-l}\right)\right]\right\}\\
 &  & +2\mu h_{t-(N+1)}\left(\prod_{j=1}^{N+1}x_{t-j}+\sum_{i=1}^{N}\left[\left(\prod_{j=1`}^{i}x_{t-j}^{2}\right)\left(\prod_{k=i+1}^{N+1}x_{t-k}\right)\right]\right)+h_{t-(N+1)}^{2}\prod_{j=1}^{N+1}x_{t-j}^{2},\\
 & & \forall N\in\mathbb{N}.
\end{eqnarray*}
Let $C_{1}=E\left(x_{t}\right)$ and $C_{2}=E\left(x_{t}^{2}\right)$.
By the independence of $h_{t-(N+1)}$ and $x_{t-j}$, $\forall j\leq N+1$ and the fact that $\left\{ x_{t}\right\} $ are iid, we obtain,
\begin{eqnarray*}
E\left(h_{t}^{2}\right) & = & E\left\{\mu^{2}\left(1+\sum_{i=1}^{N}\left(\prod_{j=1}^{i}x_{t-j}^{2}+2\prod_{j=1}^{i}x_{t-j}\right)+2\sum_{i=1}^{N-1}\left[\left(\prod_{j=1}^{i}x_{t-j}^{2}\right)\left(\sum_{k=i+1}^{N}\,\prod_{l=i+1}^{k}x_{t-l}\right)\right]\right)\right.\\
 &  & \left.+2\mu h_{t-(N+1)}\left(\prod_{j=1}^{N+1}x_{t-j}+\sum_{i=1}^{N}\left[\left(\prod_{j=1`}^{i}x_{t-j}^{2}\right)\left(\prod_{k=i+1}^{N+1}x_{t-k}\right)\right]\right)+h_{t-(N+1)}^{2}\prod_{j=1}^{N+1}x_{t-j}^{2}\right\}\\
 & = & \mu^{2}+\mu^{2}\sum_{i=1}^{N}E\left[\prod_{j=1}^{i}x_{t-j}^{2}\right]+2\mu^{2}\sum_{i=1}^{N}E\left[\prod_{j=1}^{i}x_{t-j}\right]\\
 &  & +2\mu^{2}\sum_{i=1}^{N-1}E\left[\left(\prod_{j=1}^{i}x_{t-j}^{2}\right)\left(\sum_{k=i+1}^{N}\,\prod_{l=i+1}^{k}x_{t-l}\right)\right]\\
 &  & +2\mu E\left[h_{t-(N+1)}\right]\cdot\left(E\left[\prod_{j=1}^{N+1}x_{t-j}\right]+\sum_{i=1}^{N}E\left[\left(\prod_{j=1}^{i}x_{t-j}^{2}\right)\left(\prod_{k=i+1}^{N+1}x_{t-k}\right)\right]\right)\\
 &  & +E\left[h_{t-(N+1)}^{2}\right]\cdot E\left[\prod_{j=1}^{N+1}x_{t-j}^{2}\right]\\
 & = & \mu^{2}+\mu^{2}\sum_{i=1}^{N}C_{2}^{i}+2\mu^{2}\sum_{i=1}^{N}C_{1}^{i}+2\mu^{2}\sum_{i=1}^{N-1}\left[C_{2}^{i}\left(\sum_{k=i+1}^{N}C_{1}^{k-i}\right)\right]\\
 &  & +2\mu E\left[h_{t-(N+1)}\right]\cdot\left(C_{1}^{N+1}+\sum_{i=1}^{N}C_{2}^{i}C_{1}^{N-i+1}\right)+E\left[h_{t-(N+1)}^{2}\right]\cdot C_{2}^{N+1},\\
 &  & \forall N\in\mathbb{N}.
\end{eqnarray*}
The geometric series $\sum_{i=1}^{N}C_{2}^{i}$ and $\sum_{i=1}^{N}C_{1}^{i}$ converge if and only if $\left|C_{1}\right|<1$ and $\left|C_{2}\right|<1$. Since
\begin{equation*}
  0<C_1=E(x_t)\leq\sqrt{E(x_t^2)}=\sqrt{C_2},
\end{equation*}
only $|C_2|<1$ is sufficient here. Under this assumption, the fourth term in the above expression, 
\begin{eqnarray*}
&&\sum_{i=1}^{N-1}\left[C_{2}^{i}\left(\sum_{k=i+1}^{N}C_{1}^{k-i}\right)\right]\\ 
 & = & \sum_{i=1}^{N-1}\left[C_{2}^{i}\left(\dfrac{C_{1}\left(C_{1}^{N-i}-1\right)}{C_{1}-1}\right)\right]
  = \dfrac{C_{1}}{C_{1}-1}\sum_{i=1}^{N-1}\left[C_{2}^{i}C_{1}^{N-i}-C_{2}^{i}\right]\\
 & = & \dfrac{C_{1}}{C_{1}-1}\left[C_{1}^{N}\cdot\sum_{i=1}^{N-1}\left(\dfrac{C_{2}}{C_{1}}\right)^{i}-\sum_{i=1}^{N-1}C_{2}^{i}\right]
 = \dfrac{C_{1}}{C_{1}-1}\left[C_1^N\cdot\dfrac{\left(\dfrac{C_2}{C_1}\right)-\left(\dfrac{C_2}{C_1}\right)^N}{1-\dfrac{C_2}{C_1}}-\sum_{i=1}^{N-1}C_{2}^{i}\right]\\
 &=& \dfrac{C_1}{C_1-1}\left[\dfrac{\left(\dfrac{C_2}{C_1}\right)C_1^N-C_2^N}{1-\dfrac{C_2}{C_1}}-\sum_{i=1}^{N-1}C_{2}^{i}\right]<\infty,\ \ N\to\infty. 
\end{eqnarray*}
And similarly,
\begin{eqnarray*}
\sum_{i=1}^{N}C_{2}^{i}C_{1}^{N-i+1} & = & C_{1}^{N+1}\sum_{i=1}^{N}\left(\dfrac{C_{2}}{C_{1}}\right)^{i}<\infty,\ \ N\to\infty. 
\end{eqnarray*}
Therefore, $E\left(h_t^2\right)<\infty$ if and only if $|C_2|<1$, or $C_2<1$, since it is positive. Under this assumption and in addition that $Eh^2_{-\infty}<\infty$, letting $N\to\infty$, we find the second moment of $h_t$ to be
\begin{eqnarray*}
E\left(h_{t}^{2}\right) 
 & = & \mu^{2}\left(1-\dfrac{C_{2}}{C_{2}-1}-2\dfrac{C_{1}}{C_{1}-1}+2C_{1}C_{2}\dfrac{1}{\left(C_{2}-1\right)\left(C_{1}-1\right)}\right)\\
 & = & \mu^{2}\left(\dfrac{C_{1}C_{2}-C_{1}-C_{2}+1-C_{2}C_{1}+C_{2}-2C_{1}C_{2}+2C_{1}+2C_{1}C_{2}}{\left(C_{2}-1\right)\left(C_{1}-1\right)}\right)\\
 & = & \mu^{2}\dfrac{C_{1}+1}{\left(C_{2}-1\right)\left(C_{1}-1\right)}.
\end{eqnarray*}
Consequently, the unconditional variance of $r_t$ is
\begin{eqnarray*}
\mbox{Var}(r_{t})  & = & \mbox{Var}\left(h_{t}\varepsilon_{t}\right)+\mbox{Var}\left(h_{t}\eta_{t}\right)\\
 & = & \left(1+k+k^{2}\right)E\left(h_{t}^{2}\right)-k^{2}\left[E\left(h_{t}\right)\right]^{2}.
\end{eqnarray*}
This completes the proof.
\end{proof}

\subsection{Proof of Theorem \ref{thm:cov}}
\begin{proof}
First we notice, $\forall t, s\in\mathbb{N}$,
\begin{eqnarray*}
  r_{t} &= &\left[h_{t}\varepsilon_{t}-h_{t}\eta_{t},h_{t}\varepsilon_{t}+h_{t}\eta_{t}\right],\\
  r_{t+s} &=&\left[h_{t+s}\varepsilon_{t+s}-h_{t+s}\eta_{t+s},h_{t+s}\varepsilon_{t+s}+h_{t+s}\eta_{t+s}\right],
\end{eqnarray*}
and therefore,
\begin{equation}\label{eqn1}
\mbox{Cov}\left(r_{t},r_{t+s}\right)=\mbox{Cov}\left(h_{t}\varepsilon_{t},h_{t+s}\varepsilon_{t+s}\right)+\mbox{Cov}\left(h_{t}\eta_{t},h_{t+s}\eta_{t+s}\right).
\end{equation}
The first term
\begin{eqnarray}
\mbox{Cov}\left(h_{t}\varepsilon_{t},h_{t+s}\varepsilon_{t+s}\right) 
& = & E\left[\left(h_{t}\varepsilon_{t}-E\left(h_{t}\epsilon_{t}\right)\right)\left(h_{t+s}\varepsilon_{t+s}-E\left(h_{t+s}\epsilon_{t+s}\right)\right)\right]\nonumber\\
 & = & E\left(h_{t}h_{t+s}\varepsilon_{t}\cdot\varepsilon_{t+s}\right)\nonumber\\
 & = & \begin{cases}
E\left(h_{t}^{2}\varepsilon_{t}^{2}\right),\,\, & s=0\\
E\left(h_{t}h_{t+s}\varepsilon_{t}\right)\cdot E\left(\varepsilon_{t+s}\right),\,\, & |s|>0
\end{cases}\nonumber\\
 & = & \begin{cases}
E\left(\varepsilon_{t}^{2}\right)\cdot E\left(h_{t}^{2}\right),\,\, & s=0\\
0, & |s|>0
\end{cases}\nonumber\\
 & = & \begin{cases}\label{eqn2}
E\left(h_{t}^{2}\right),\,\, & s=0\\
0, & |s|>0
\end{cases}
\end{eqnarray}
since $\left\{ \varepsilon_{t}\right\} $ are i.i.d. \\

Similarly, the second term becomes
\begin{eqnarray}
 &  & \mbox{Cov}\left(h_{t}\eta_{t},h_{t+s}\eta_{t+s}\right)\nonumber\\
 & = & E\left[\left(h_{t}\eta_{t}-kE\left(h_{t}\right)\right)\left(h_{t+s}\eta_{t+s}-kE\left(h_{t+s}\right)\right)\right]\nonumber\\
 & = & E\left(h_{t}h_{t+s}\eta_{t}\eta_{t+s}\right)-kE\left(h_{t}\right)E\left(h_{t+s}\eta_{t+s}\right)-kE\left(h_{t+s}\right)E\left(h_{t}\eta_{t}\right)+k^{2}E\left(h_{t}\right)E\left(h_{t+s}\right)\nonumber\\
 & = & E\left(h_{t}h_{t+s}\eta_{t}\eta_{t+s}\right)-kE\left(h_{t}\right)kE\left(h_{t+s}\right)-kE\left(h_{t+s}\right)kE\left(h_{t}\right)+k^{2}E\left(h_{t}\right)E\left(h_{t+s}\right)
 \nonumber\\
 & = & E\left(h_{t}h_{t+s}\eta_{t}\eta_{t+s}\right)-k^{2}E\left(h_{t}\right)E\left(h_{t+s}\right)\nonumber\\
 & = & \begin{cases}
E\left(h_{t}^{2}\eta_{t}^{2}\right)-k^{2}\left[E\left(h_{t}\right)\right]^{2}, & s=0\\
E\left(h_{t}h_{t+s}\eta_{t}\right)E\left(\eta_{t+s}\right)-k^{2}E\left(h_{t}\right)E\left(h_{t+s}\right), & |s|>0
\end{cases}\nonumber\\
 & = & \begin{cases}\label{eqn3}
\left(k+k^{2}\right)E\left(h_{t}^{2}\right)-k^{2}\left[E\left(h_{t}\right)\right]^{2}, & s=0\\
kE\left(h_{t}h_{t+s}\eta_{t}\right)-k^{2}\left[E\left(h_{t}\right)\right]^2, & |s|>0.
\end{cases}
\end{eqnarray}
Plugging (\ref{eqn2}) and (\ref{eqn3}) into (\ref{eqn1}), we obtain
\begin{align*}
\mbox{Cov}\left(r_{t},r_{t+s}\right) 
 & =\begin{cases}
\left(1+k+k^{2}\right)E\left(h_{t}^{2}\right)-k^{2}\left[E\left(h_{t}\right)\right]^{2}, & s=0\\
kE\left(h_{t}h_{t+s}\eta_{t}\right)-k^{2}\left[E\left(h_{t}\right)\right]^{2}, & |s|>0, 
\end{cases}
\end{align*}
where $E\left(h_{t}\right)=\dfrac{\mu}{1-C_{1}}$, $E\left(h_{t}^{2}\right)=\mu^{2}\dfrac{C_{1}+1}{\left(C_{2}-1\right)\left(C_{1}-1\right)}$,
and 
\begin{equation*}
  E\left(h_{t}h_{t+s}\eta_{t}\right)
  =\dfrac{\mu^{2}k}{C_{1}-1}\left(-\dfrac{C_{1}^{s}-1}{C_{1}-1}+\dfrac{C_{1}^{s}+C_{1}^{s-1}}{C_{2}-1}\cdot\left[\alpha_{1}\sqrt{\dfrac{2}{\pi}}
  +  \beta_{1}\left(1+k\right)+\gamma_{1}\right]\right).
\end{equation*}
(see Lemma~\ref{eta-X and h-h-eta}, Section \ref{lemmas}).
\end{proof}

\subsection{Proof of Theorem \ref{thm:asymp-mle}}
\begin{proof}
Let $l_t(\bdsm{\theta})=-(k+1)\log(h_t)-\frac{\lambda_t^2}{2h_t^2}-\frac{\delta_t}{h_t}$, that is, $l(\bdsm{\theta})=\sum\limits_{t=1}^{T}l_t(\bdsm{\theta})$. The elements in 
$\nabla l_t$ are calculated to be
\begin{eqnarray*}
  &&\frac{\partial l}{\partial\mu}=\sum_{t}\frac{\partial l_t}{\partial h_t}
  =\sum_{t}\left(-\frac{k+1}{h_t}+\frac{\lambda_t^2}{h_t^3}+\frac{\delta_t}{h_t^2}\right),\\
  &&\frac{\partial l}{\partial\alpha_i}=\sum_{t}\frac{\partial l_t}{\partial h_t}|\lambda_{t-i}|,\ \ \ \ \ 
  \frac{\partial l}{\partial\beta_i}=\sum_{t}\frac{\partial l_t}{\partial h_t}\delta_{t-i},\ \ \ \ \ 
  \frac{\partial l}{\partial\gamma_i}=\sum_{t}\frac{\partial l_t}{\partial h_t}h_{t-i}.
\end{eqnarray*}
In addition, the elements in $\nabla^2l_t$ are 
\begin{eqnarray*}
  \frac{\partial^2l_t}{\partial\bdsm{\theta}_i\partial\bdsm{\theta}_j}
  =\frac{\partial^2l_t}{\partial h_t^2}\frac{\partial h_t}{\partial\bdsm{\theta}_i}\frac{\partial h_t}{\partial\bdsm{\theta}_j}
\end{eqnarray*}
(i) To prove consistency, we follow Weiss (1986) to verify the three conditions in Basawa et al. (1976) that guarantees the existence of a consistent root of the equation 
$\partial L(\bdsm{\theta})/\partial\bdsm{\theta}=0$. \\
(1) $\frac{1}{T}\sum_{t}\nabla l_t\left(\bdsm{\theta}_0\right)\stackrel{\mathcal{P}}{\to}0$ as $n\to\infty$;\\
(2) There exists a nonrandom positive definite matrix $M\left(\bdsm{\theta}_0\right)0$ such that $\forall\epsilon>0$:
$$P\left(-\frac{1}{T}\sum_{t}\nabla^2l_t\left(\bdsm{\theta}_0\right)\geq M\left(\bdsm{\theta}_0\right)\right)>1-\epsilon,\ \ 
\forall T>T_1(\epsilon);$$
(3) There exists a constant $M<\infty$ such that 
$$E\left\|\nabla_{ijk}^3l_t\left(\bdsm{\theta}\right)\right\|<M,\ \ 
\forall\bdsm{\theta}\in\Theta.$$
For (1), notice that evaluated at $\bdsm{\theta}_0$, 
\begin{equation*}
  E\left(\frac{\partial l_t}{\partial h_t}\right)
  =E\left[E\left(\frac{\partial l_t}{\partial h_t}|\mathcal{F}_{t-1}\right)\right]
  =E\left(-\frac{k+1}{h_t}+\frac{h_t^2}{h_t^3}+\frac{kh_t}{h_t^2}\right)
  =0. 
\end{equation*}
Thus, 
\begin{eqnarray*}
  &&E\left(\frac{\partial l_t}{\partial\mu}\right)=E\left(\frac{\partial l_t}{\partial h_t}\right)=0,\\
  &&E\left(\frac{\partial l_t}{\partial\alpha_i}\right)=E\left(\frac{\partial l_t}{\partial h_t}|\lambda_{t-i}|\right)
  =E\left[|\lambda_{t-i}|E\left(\frac{\partial l_t}{\partial h_t}|\mathcal{F}_{t-1}\right)\right]=0,\\
  &&E\left(\frac{\partial l_t}{\partial\beta_i}\right)=E\left(\frac{\partial l_t}{\partial h_t}\delta_{t-i}\right)
  =E\left[\delta_{t-i}E\left(\frac{\partial l_t}{\partial h_t}|\mathcal{F}_{t-1}\right)\right]=0,\\
  &&E\left(\frac{\partial l_t}{\partial\gamma_i}\right)=E\left(\frac{\partial l_t}{\partial h_t}h_{t-i}\right)
  =E\left[h_{t-i}E\left(\frac{\partial l_t}{\partial h_t}|\mathcal{F}_{t-1}\right)\right]=0.
\end{eqnarray*}
Namely, $E\left[\nabla l_t(\bdsm{\theta}_0)\right]=0$. By the ergodic theorem, 
$\frac{1}{T}\sum_{t=1}^{T}\nabla l_t(\bdsm{\theta}_0)\stackrel{\mathcal{P}}{\to}0$, $T\to\infty$. \\
For (2), in view of the ergodic theorem, it is sufficient to verify $E\left[-\nabla^2l_t(\bdsm{\theta}_0)\right]<\infty$. Notice
\begin{eqnarray*}
  \frac{\partial^2l_t}{\partial h_t^2}
  =\frac{k+1}{h_t^2}-\frac{3\lambda_t^2}{h_t^4}-\frac{2\delta_t}{h_t^3}.
\end{eqnarray*}
Therefore, 
\begin{eqnarray*}
  E\left[\frac{\partial^2l_t}{\partial\bdsm{\theta}_i\partial\bdsm{\theta}_j}(\bdsm{\theta}_0)\right]
  &=&E\left[E\left(\frac{\partial^2l_t}{\partial\bdsm{\theta}_i\partial\bdsm{\theta}_j}(\bdsm{\theta}_0)\right)|\mathcal{F}_{t-1}\right]\\
  &=&E\left[\left(\frac{k+1}{h_t^2}-\frac{3h_t^2}{h_t^4}-\frac{2kh_t}{h_t^3}\right)
  \frac{\partial h_t}{\partial\bdsm{\theta}_i}\frac{\partial h_t}{\partial\bdsm{\theta}_j}(\bdsm{\theta}_0)\right]\\
  &=&-(k+2)E\left[h_t^{-2}\frac{\partial h_t}{\partial\bdsm{\theta}_i}\frac{\partial h_t}{\partial\bdsm{\theta}_j}(\bdsm{\theta}_0)\right].
\end{eqnarray*}
Since $h_t^{-1}\frac{\partial h_t}{\partial\bdsm{\theta}_i}(\bdsm{\theta}_0)$ is bounded by $\max\left\{\frac{1}{\alpha_i}, \frac{1}{\beta_j}, \frac{1}{\gamma_k},
\frac{1}{\mu}\right\}$, we have that $E\left[-\nabla^2l_t(\bdsm{\theta}_0)\right]<\infty$, and hence condition (2) is satisfied.\\
For (3), we note that
\begin{eqnarray*}
  \frac{\partial^3l_t}{\partial\bdsm{\theta}_i\partial\bdsm{\theta}_j\partial\bdsm{\theta}_k}
 &=&\frac{\partial^3l_t}{\partial h_t^3}\frac{\partial h_t}{\partial\bdsm{\theta}_i}\frac{\partial h_t}{\partial\bdsm{\theta}_j}\frac{\partial h_t}{\partial\bdsm{\theta}_k}\\
 &=&-2\left[(k+1)-6h_t^{-2}\lambda_t^2-3h_t^{-1}\delta_t\right]h_t^{-3}\frac{\partial h_t}{\partial\bdsm{\theta}_i}
 \frac{\partial h_t}{\partial\bdsm{\theta}_j}\frac{\partial h_t}{\partial\bdsm{\theta}_k}
\end{eqnarray*} 
By the compactness of $\Theta$, $h_t^{-1}\frac{\partial h_t}{\partial\bdsm{\theta}_i}$ is uniformly bounded, and so is 
$h_t^{-3}\frac{\partial h_t}{\partial\bdsm{\theta}_i}\frac{\partial h_t}{\partial\bdsm{\theta}_j}\frac{\partial h_t}{\partial\bdsm{\theta}_k}$. Similarly, 
$h_t^{-1}, h_t^{-2}$ are also uniformly bounded since $h_t>\mu$. So it suffices to have $E\left(\lambda_t^2\right)<\infty$, which is equivalent to 
$E\left(h_t^2\right)<\infty$ and guaranteed by the condition of weak stationarity. \\
(ii) The requirements in Basawa et al. (1976) for the asymptotic normality of the maximum likelihood estimate are:\\
(1) $\frac{1}{\sqrt{T}}\sum_t\nabla l_t(\bdsm{\theta}_0)\stackrel{\mathscr{D}}{\to}N(\bdsm{0}, B)$, $T\to\infty$, for a nonrandom positive definite matrix $B$; \\
(2) -$\frac{1}{T}\sum_t\nabla^2l_t(\bdsm{\theta}_0)\stackrel{\mathcal{P}}{\to}A$, $T\to\infty$, for a nonrandom positive definite matrix $A$; \\
(3) Condition (3) for consistency. \\
For (1), since $E\left(\nabla l_t(\bdsm{\theta}_0)\right|\mathcal{F}_{t-1})=0$, by the martingale central limit theorem, 
\begin{eqnarray*}
  \frac{1}{\sqrt{T}}\sum_t\nabla l_t(\bdsm{\theta}_0)\stackrel{\mathscr{D}}{\to}N\left(\bdsm{0}, \bdsm{I}(\bdsm{\theta})\right),\ \ 
  T\to\infty,
\end{eqnarray*}
where $\bdsm{I}(\bdsm{\theta}_0)=E\left[\nabla\nabla^{T}l_t(\bdsm{\theta}_0)\right]$ is the Fisher information matrix. Condition (2) is already shown by the ergodic theorem in the preceding argument. Hence, the proof is completed. 

\end{proof}

\subsection{Proof of Corollary \ref{cor:stat}}
\begin{proof}
It is immediate from Theorem \ref{thm:mean}, \ref{thm:var}, and \ref{thm:cov}.
\end{proof}

\subsection{Proof of Corollary \ref{cor:acf}}
\begin{proof}
The Auto-correlation Function (ACF) of $\left\{ r_{t}\right\} $ is
defined to be $\rho(s)=\mbox{Corr}\left(r_{t},r_{t+s}\right)=\dfrac{\gamma(s)}{\gamma(0)}$.
Then the ACF of $\left\{ r_{t}\right\} $ is 
\[
\rho(s)=\begin{cases}
1, & s=0\\
\dfrac{kE\left(h_{t}h_{t+s}\eta_{t}\right)-k^{2}\left[E\left(h_{t}\right)\right]^{2}}{\left(1+k+k^{2}\right)E\left(h_{t}^{2}\right)-k^{2}\left[E\left(h_{t}\right)\right]^{2}}, & |s|>0.
\end{cases}
\]
\end{proof}

\subsection{Lemmas}\label{lemmas}
\begin{lem}\label{eta-X and h-h-eta}
\begin{eqnarray*}
  (i)\ &&E\left(\eta_{t}x_{t}\right)=\alpha_{1}\sqrt{\dfrac{2}{\pi}}\cdot k+\beta_{1}\left(k+k^{2}\right)+\gamma_{1}k;\\
 (ii)\ &&E\left(h_{t}h_{t+s}\eta_{t}\right)=\dfrac{\mu^{2}k}{C_{1}-1}\left(-\dfrac{C_{1}^{s}-1}{C_{1}-1}+\dfrac{C_{1}^{s}+C_{1}^{s-1}}{C_{2}-1}\cdot\left[\alpha_{1}\sqrt{\dfrac{2}{\pi}}+\beta_{1}\left(1+k\right)+\gamma_{1}\right]\right).
\end{eqnarray*}
 \end{lem}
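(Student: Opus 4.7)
The first claim is a direct computation. Expanding $\eta_t x_t = \alpha_1 \eta_t |\epsilon_t| + \beta_1 \eta_t^2 + \gamma_1 \eta_t$ and invoking the moments $E(\eta_t) = k$, $E(\eta_t^2) = k + k^2$ (from $\eta_t \sim \Gamma(k,1)$), $E(|\epsilon_t|) = \sqrt{2/\pi}$ (from $\epsilon_t \sim N(0,1)$), together with the independence of $\epsilon_t$ and $\eta_t$, immediately yields the three stated terms. Here the subscript $t$ on $x_t$ in part (i) is read as indicating that the $(\epsilon,\eta)$ pair inside $x_t$ is the same pair indexed by $t$; this is precisely the coupling that part (ii) will invoke (after a one-step time shift).

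For part (ii), the plan is to iterate the one-step recursion $h_{t+s} = \mu + x_{t+s} h_{t+s-1}$ down to $h_t$, obtaining
\begin{equation*}
h_{t+s} \;=\; \mu \sum_{i=0}^{s-1} \prod_{j=0}^{i-1} x_{t+s-j} \;+\; h_t \prod_{j=0}^{s-1} x_{t+s-j},
\end{equation*}
with the empty product read as $1$, then multiply through by $h_t \eta_t$ and take expectation term by term. The structural observation driving the calculation is that, under the convention $x_k = \alpha_1 |\epsilon_{k-1}| + \beta_1 \eta_{k-1} + \gamma_1$, the factors $x_{t+s}, x_{t+s-1}, \ldots, x_{t+2}$ depend only on shocks indexed $\geq t+1$, hence are independent of both the $\mathcal{F}_{t-1}$-measurable $h_t$ and of $\eta_t$; only the single factor $x_{t+1} = \alpha_1 |\epsilon_t| + \beta_1 \eta_t + \gamma_1$ is coupled with $\eta_t$.

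This independence structure makes both groups of terms factor cleanly. Each summand in the $\mu$-sum becomes $k\, E(h_t)\, C_1^i$, while the tail term becomes $E(h_t^2) \cdot E(\eta_t x_{t+1}) \cdot C_1^{s-1}$, with $E(\eta_t x_{t+1}) = k[\alpha_1 \sqrt{2/\pi} + \beta_1(1+k) + \gamma_1]$ supplied by part (i). Summing the geometric series $\sum_{i=0}^{s-1} C_1^i = (C_1^s - 1)/(C_1 - 1)$, substituting the closed forms $E(h_t) = \mu/(1-C_1)$ and $E(h_t^2) = \mu^2(C_1+1)/[(C_2-1)(C_1-1)]$ from Theorems \ref{thm:mean} and \ref{thm:var}, and using the identity $(C_1+1)C_1^{s-1} = C_1^s + C_1^{s-1}$ to rearrange, reassembles the two contributions into exactly the stated expression; the case $s<0$ is recovered from the $s\geq 1$ result by a time shift and stationarity.

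The only real obstacle is disciplined bookkeeping of which $x_k$ depends on which $(\epsilon,\eta)$ pair, so that the single coupling is located in the correct factor and part (i) is applied with the right indexing. Once the index accounting is settled, no analytic subtleties remain and everything reduces to routine geometric-series algebra.
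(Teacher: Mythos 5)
Your proposal is correct and follows essentially the same route as the paper: part (i) by direct expansion using independence and the moments of $|\epsilon_t|$ and $\eta_t$, and part (ii) by unrolling the recursion $h_{t+s}=\mu+x\,h_{t+s-1}$ down to $h_t$, isolating the single factor of $x$ coupled with $\eta_t$, applying part (i) to it, and summing the resulting geometric series before substituting the closed forms of $E(h_t)$ and $E(h_t^2)$. Your explicit attention to the index bookkeeping (which $(\epsilon,\eta)$ pair sits inside which $x$) is a welcome clarification of a point the paper leaves implicit.
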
 

\begin{proof} 

(i)
\begin{eqnarray*}
E\left(\eta_{t}x_{t}\right) & = & E\left[\eta_{t}\cdot\left(\alpha_{1}\left|\varepsilon_{t}\right|+\beta_{1}\eta_{t}+\gamma_{1}\right)\right]\\
 & = & \alpha_{1}E\left(\left|\varepsilon_{t}\right|\right)\cdot E\left(\eta_{t}\right)+\beta_{1}\cdot E\left(\eta_{t}^{2}\right)+\gamma_{1}E\left(\eta_{t}\right)\\
 & = & \alpha_{1}\sqrt{\dfrac{2}{\pi}}\cdot k+\beta_{1}\left(k+k^{2}\right)+\gamma_{1}k.
\end{eqnarray*}

(ii) First, we expand $h_{t+s}$ recursively:
\begin{eqnarray*}
h_{t+s} & = & \mu+x_{t+s-1}h_{t+s-1}\\
 & = & \mu+\mu x_{t+s-1}+x_{t+s-1}x_{t+s-2}h_{t+s-2}\\
 & = & \cdots\\
 & = & \mu+\mu x_{t+s-1}+\mu x_{t+s-1}x_{t+s-2}+\cdots\\
 &  & +\mu x_{t+s-1}x_{t+s-2}\cdots x_{t+1}+x_{t+s-1}x_{t+s-2}\cdots x_{t}h_{t}\\
 & = & \mu\left(1+\sum_{i=1}^{s-1}\prod_{j=1}^{i}x_{t+s-j}\right)+h_{t}\prod_{j=1}^{s}x_{t+s-j}.
\end{eqnarray*}
Consequently,
\begin{eqnarray*}
h_{t}h_{t+s}\eta_{t} & = & h_{t}\eta_{t}\cdot\left(\mu\left(1+\sum_{i=1}^{s-1}\prod_{j=1}^{i}x_{t+s-j}\right)+h_{t}\prod_{j=1}^{s}x_{t+s-j}\right)\\
 & = & \mu h_{t}\eta_{t}+\mu h_{t}\eta_{t}\sum_{i=1}^{s-1}\prod_{j=1}^{i}x_{t+s-j}+h_{t}^{2}\eta_{t}\prod_{j=1}^{s}x_{t+s-j}.
\end{eqnarray*}

Then the expected value is found to be
\begin{eqnarray*}
E\left(h_{t}h_{t+s}\eta_{t}\right) & = & \mu\cdot E\left(h_{t}\right)\cdot E\left(\eta_{t}\right)+\mu\cdot E\left(h_{t}\right)\cdot E\left(\eta_{t}\right)\cdot E\left[\sum_{i=1}^{s-1}\prod_{j=1}^{i}\left(x_{t+s-j}\right)\right]\\
 &  & +E\left(h_{t}^{2}\right)\cdot E\left(\eta_{t}\prod_{j=1}^{s}x_{t+s-j}\right)\\
 & = & \mu k\cdot E\left(h_{t}\right)+\mu k\cdot E\left(h_{t}\right)\cdot\left[\sum_{i=1}^{s-1}\prod_{j=1}^{i}E\left(x_{t+s-j}\right)\right]\\
 &  & +E\left(h_{t}^{2}\right)\cdot E\left(\eta_{t}x_{t}\right)\cdot\prod_{j=1}^{s-1}E\left(x_{t+s-j}\right)\\
 & = & \mu k\cdot E\left(h_{t}\right)+\mu k\cdot E\left(h_{t}\right)\cdot\dfrac{C_{1}\left(C_{1}^{s-1}-1\right)}{C_{1}-1}\\
 &  & +E\left(h_{t}^{2}\right)\cdot E\left(\eta_{t}x_{t}\right)\cdot C_{1}^{s-1},
\end{eqnarray*}
where $C_{1}=E\left(x_{t}\right)$.\\

Finally, remembering that $E\left(h_{t}^{2}\right)=\mu^{2}\dfrac{C_{1}+1}{\left(C_{2}-1\right)\left(C_{1}-1\right)}$, the above calculation is simplified to
\begin{eqnarray*}
E\left(h_{t}h_{t+s}\eta_{t}\right) & = & \mu k\cdot\dfrac{\mu}{1-C_{1}}\cdot\left[1+\dfrac{C_{1}\left(C_{1}^{s-1}-1\right)}{C_{1}-1}\right]\\
 &  & +E\left(h_{t}^{2}\right)\cdot\left[\alpha_{1}\sqrt{\dfrac{2}{\pi}}\cdot k+\beta_{1}\left(k+k^{2}\right)+\gamma_{1}k\right]\cdot C_{1}^{s-1}\\
 & = & \mu^{2}k\cdot\dfrac{1}{1-C_{1}}\cdot\left[\dfrac{C_{1}-1+C_{1}\left(C_{1}^{s-1}-1\right)}{C_{1}-1}\right]\\
 &  & +E\left(h_{t}^{2}\right)\cdot\left[\alpha_{1}\sqrt{\dfrac{2}{\pi}}\cdot k+\beta_{1}\left(k+k^{2}\right)+\gamma_{1}k\right]\cdot C_{1}^{s-1}\\
 & = & -\mu^{2}k\cdot\dfrac{C_{1}^{s}-1}{\left(C_{1}-1\right)^{2}}+\mu^{2}\dfrac{C_{1}+1}{\left(C_{2}-1\right)\left(C_{1}-1\right)}\cdot\left[\alpha_{1}\sqrt{\dfrac{2}{\pi}}\cdot k+\beta_{1}\left(k+k^{2}\right)+\gamma_{1}k\right]\cdot C_{1}^{s-1}\\
 & = & \dfrac{\mu^{2}k}{C_{1}-1}\left(-\dfrac{C_{1}^{s}-1}{C_{1}-1}+\dfrac{C_{1}^{s}+C_{1}^{s-1}}{C_{2}-1}\cdot\left[\alpha_{1}\sqrt{\dfrac{2}{\pi}}+\beta_{1}\left(1+k\right)+\gamma_{1}\right]\right).
\end{eqnarray*}

\end{proof}

\begin{lem}\label{expected value of x_t ^2}
\begin{eqnarray*}
E\left(x_{t}^{2}\right) & = & \alpha_{1}^{2}+\beta_{1}^{2}\left(k+k^{2}\right)+\gamma_{1}^{2}+2\alpha_{1}\beta_{1}\sqrt{\dfrac{2}{\pi}}k+2\alpha_{1}\gamma_{1}\sqrt{\dfrac{2}{\pi}}+2\beta_{1}\gamma_{1}k.
\end{eqnarray*}
\end{lem}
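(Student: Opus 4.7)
The plan is to expand the square $x_t^2 = (\alpha_1 |\epsilon_{t-1}| + \beta_1 \eta_{t-1} + \gamma_1)^2$ into three squared terms and three cross terms, and then take expectations term by term, exploiting the assumed independence of $\epsilon_{t-1}$ and $\eta_{t-1}$ to factor the mixed products. This is a direct moment computation rather than anything requiring structural arguments about the Int-GARCH recursion.

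For the three squared contributions, I would use the standard moments of the component distributions: $E(\epsilon_{t-1}^2)=1$ since $\epsilon_{t-1} \sim N(0,1)$, $E(\eta_{t-1}^2) = \text{Var}(\eta_{t-1}) + [E(\eta_{t-1})]^2 = k + k^2$ since $\eta_{t-1} \sim \Gamma(k,1)$, and the trivial $E(\gamma_1^2) = \gamma_1^2$. These supply the first three summands $\alpha_1^2$, $\beta_1^2(k+k^2)$, and $\gamma_1^2$ in the claimed expression.

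For the three cross terms, independence of $\epsilon_{t-1}$ and $\eta_{t-1}$ lets me write $E(|\epsilon_{t-1}|\eta_{t-1}) = E|\epsilon_{t-1}|\cdot E(\eta_{t-1})$, and I would substitute the half-normal first moment $E|\epsilon_{t-1}| = \sqrt{2/\pi}$ together with $E(\eta_{t-1}) = k$. This yields $2\alpha_1\beta_1\sqrt{2/\pi}\,k$, $2\alpha_1\gamma_1\sqrt{2/\pi}$, and $2\beta_1\gamma_1 k$. Summing the six contributions reproduces the stated formula exactly.

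There is no real obstacle here; the only point requiring a moment's attention is that the formula involves $|\epsilon_{t-1}|$ rather than $\epsilon_{t-1}$, so the relevant first moment is $\sqrt{2/\pi}$ and not $0$, which is precisely why the $\alpha_1\beta_1$ and $\alpha_1\gamma_1$ cross terms survive with nonzero coefficients.
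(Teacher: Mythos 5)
Your proposal is correct and matches the paper's own proof essentially verbatim: both expand the square of $x_t=\alpha_1|\epsilon|+\beta_1\eta+\gamma_1$, use independence of $\epsilon$ and $\eta$ to factor the cross terms, and substitute $E(\epsilon^2)=1$, $E|\epsilon|=\sqrt{2/\pi}$, $E(\eta)=k$, and $E(\eta^2)=k+k^2$. The indexing difference ($t-1$ versus $t$ on the error terms) is immaterial since the errors are i.i.d.
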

\begin{proof}
\begin{eqnarray*}
E\left(x_{t}^{2}\right) & = & E\left(\left(\alpha_{1}\left|\varepsilon_{t}\right|+\beta_{1}\eta_{t}+\gamma_{1}\right)^{2}\right)\\
 & = & E\left(\alpha_{1}^{2}\varepsilon_{t}^{2}+\beta_{1}^{2}\eta_{t}^{2}+\gamma_{1}^{2}+2\alpha_{1}\beta_{1}\left|\varepsilon_{t}\right|\eta_{t}+2\alpha_{1}\gamma_{1}\left|\varepsilon_{t}\right|+2\beta_{1}\gamma_{1}\eta_{t}\right)\\
 & = & \alpha_{1}^{2}E\left(\varepsilon_{t}^{2}\right)+\beta_{1}^{2}E\left(\eta_{t}^{2}\right)+\gamma_{1}^{2}+2\alpha_{1}\beta_{1}E\left(\left|\varepsilon_{t}\right|\right)\cdot E\left(\eta_{t}\right)+2\alpha_{1}\gamma_{1}E\left(\left|\varepsilon_{t}\right|\right)+2\beta_{1}\gamma_{1}E\left(\eta_{t}\right)\\
 & = & \alpha_{1}^{2}+\beta_{1}^{2}\left(k+k^{2}\right)+\gamma_{1}^{2}+2\alpha_{1}\beta_{1}\sqrt{\dfrac{2}{\pi}}k+2\alpha_{1}\gamma_{1}\sqrt{\dfrac{2}{\pi}}+2\beta_{1}\gamma_{1}k.
\end{eqnarray*}
\end{proof}

\begin{lem}\label{stationarity-ht}
Consider the Int-GARCH(1,1,1) model and assume the condition for mean stationarity in Theorem \ref{thm:mean}. Then, the process $\left\{h_t\right\}$ is strictly stationary and ergodic. Assume in addition that $E\left(x_t^p\right)<1$, $p>0$, then $E\left(h_t^p\right)<\infty$. 
\begin{proof}
The process $\left\{h_t\right\}$ is defined by the stochastic recurrence equation (\ref{def_x}), where $x_t=\alpha_1|\epsilon_{t-1}|+\beta\eta_{t-1}+\gamma_1$, $t\in\mathbb{N}$ is an i.i.d. process whose moments are all finite. Therefore, $\left\{h_t\right\}$ is strictly stationary and ergodic if 
\begin{equation}\label{lem3:eqn-1}
  E\left[\log\left(x_t\right)\right]<0.
\end{equation}
See, e.g., Nelson (1990). Since $E\left[\log\left(x_t\right)\right]\leq\log\left[E\left(x_t\right)\right]$, (\ref{lem3:eqn-1}) is automatically true under the assumption of mean stationarity, i.e., $E\left(x_t\right)<1$. Furthermore, $E\left(h_t^p\right)<\infty$ iff $E\left(x_t^p\right)<1$, $p>0$ (Nelson 1990, Theorem 3). This finishes the proof. 
\end{proof}
\end{lem}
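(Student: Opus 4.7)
The plan is to recognize the recursion $h_t = \mu + x_t h_{t-1}$ (already derived at the start of Section \ref{sec:model_111}) as a one-dimensional stochastic recurrence equation driven by the i.i.d.\ sequence $x_t = \alpha_1|\epsilon_{t-1}| + \beta_1\eta_{t-1} + \gamma_1$, and then invoke the classical theory for such equations as developed for ARCH/GARCH-type models (e.g., Nelson 1990, Bougerol and Picard 1992). Those results say that, provided $P(x_t>0)=1$ and $E[\log x_t]<0$, the equation admits a unique strictly stationary and ergodic solution, given explicitly by the almost surely convergent series
\[
h_t = \mu \sum_{i=0}^{\infty} \prod_{j=1}^{i} x_{t-j},
\]
which coincides with the $\{h_t\}$ obtained by starting from the infinite past.

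First I would verify $E[\log x_t] < 0$. Since $\log$ is concave and $x_t>0$ almost surely, Jensen's inequality gives
\[
E[\log x_t] \leq \log E(x_t) = \log\!\left(\alpha_1\sqrt{2/\pi} + \beta_1 k + \gamma_1\right).
\]
The mean stationarity condition of Theorem~\ref{thm:mean} is precisely $E(x_t)<1$, so $\log E(x_t) < 0$, yielding the required negative drift. This already establishes the strict stationarity and ergodicity part of the lemma.

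For the moment bound, I would invoke the standard if-and-only-if criterion (Nelson 1990, Theorem 3): $E(h_t^p)<\infty$ iff $E(x_t^p)<1$. The ``if'' direction can be checked by taking $L^p$ norms term-by-term in the series representation above, using Minkowski's inequality when $p\geq 1$ or subadditivity of $u\mapsto u^p$ when $0<p<1$, and bounding each summand by $\mu^p\,[E(x_t^p)]^i$ via the independence and identical distribution of the $x_t$'s; the geometric series converges precisely when $E(x_t^p)<1$. The ``only if'' direction follows by comparing $h_t^p$ to a single term $\mu^p\prod_{j=1}^{i} x_{t-j}^p$ of the series and letting $i\to\infty$.

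I do not anticipate a genuine mathematical obstacle here: once the SRE structure is identified, everything reduces to Jensen's inequality plus citing the existing Nelson/Bougerol--Picard results. The only minor point of care is to ensure $P(x_t>0)=1$, which is immediate because $\alpha_1,\beta_1,\gamma_1$ are nonnegative and at least one of the error terms contributes a strictly positive component (indeed $\mu>0$ alone would suffice for the uniqueness of the stationary solution); and to keep track of the fact that the lemma's hypothesis refers to the same process as Theorem~\ref{thm:mean}, namely one initialized at $-\infty$, so that the unique stationary solution is the object under discussion rather than some transient solution with an arbitrary initial condition.
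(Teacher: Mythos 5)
Your proposal is correct and follows essentially the same route as the paper: both recognize $h_t=\mu+x_th_{t-1}$ as a stochastic recurrence equation driven by the i.i.d.\ sequence $\left\{x_t\right\}$, derive $E\left[\log x_t\right]\leq\log E\left(x_t\right)<0$ from Jensen's inequality and the mean-stationarity condition, and cite Nelson (1990) for both the strict stationarity/ergodicity and the criterion $E\left(h_t^p\right)<\infty$ iff $E\left(x_t^p\right)<1$. The extra details you supply (the explicit series solution, the Minkowski/subadditivity verification of the moment bound, and the check that $P\left(x_t>0\right)=1$) are sound but go beyond what the paper records, which simply cites Nelson's Theorem 3 for the moment equivalence.
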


\end{document}